\def\BibTeX{{\rm B\kern-.05em{\sc i\kern-.025em b}\kern-.08em
		T\kern-.1667em\lower.7ex\hbox{E}\kern-.125emX}}	
\newcommand\barbelow[1]{\stackunder[1.2pt]{$#1$}{\rule{.8ex}{.075ex}}}
\colorlet{yellow1}{yellow!15}
\colorlet{alg_box}{gray!15}
\definecolor{darkgray}{rgb}{0.41, 0.41, 0.41}
\newtcolorbox{highlightbox}[1]{
    colback=#1,
    boxrule=0pt,
    boxsep=0pt,
    arc=0pt,
    left=0pt,
    right=0pt,
    top=1pt,
    bottom=1pt,
    width=1.05\linewidth, 
    height=1.3cm,  
}
\tikzstyle{startstop} = [rectangle, rounded corners, 
\tikzstyle{io} = [trapezium, 
\tikzstyle{process} = [rectangle, 
\tikzstyle{decision} = [diamond, 
\tikzstyle{arrow} = [thick,->,>=stealth]
\algrenewcommand\algorithmicindent{0.5em} 
\begin{document}
	
	\title{\LARGE \bf A Unifying Complexity-Certification Framework for Branch-and-Bound Algorithms for Mixed-Integer Linear and Quadratic Programming} 

\author{Shamisa Shoja*, Daniel Arnström**, Daniel Axehill* 
	\thanks{*S. Shoja and D. Axehill are
		with the Division of Automatic Control, Department of
		Electrical Engineering, Linköping University, Sweden. Email: 
		{\tt\small \{shamisa.shoja,  daniel.axehill\}@liu.se}.  **D.Arnström is with the Division of Systems and Control, Department of Information Technology, Uppsala University, Sweden. Email: {\tt\small daniel.arnstrom@it.uu.se}.  
  \protect\newline
  This work was partially supported by the Wallenberg AI, Autonomous
Systems and Software Program (WASP) funded by the Knut and Alice
Wallenberg Foundation.}
}

\maketitle
\thispagestyle{empty}
\pagestyle{empty}

		\begin{abstract}
In model predictive control (MPC) for hybrid systems, solving optimization problems efficiently and with guarantees on worst-case computational complexity is critical to satisfy the real-time constraints in these applications. These optimization problems often take the form of mixed-integer linear programs (MILPs) or mixed-integer quadratic programs (MIQPs) that depend on system parameters. A common approach for solving such problems is the branch-and-bound (B\&B) method. This paper extends existing complexity certification methods by presenting a unified complexity-certification framework for B\&B-based MILP and MIQP solvers, specifically for the family of multi-parametric MILP and MIQP problems that arise in, e.g., hybrid MPC applications. The framework provides guarantees on worst-case computational measures, including the maximum number of iterations or relaxations B\&B algorithms require to reach optimality. It systematically accounts for different branching and node selection strategies, as well as heuristics integrated into B\&B, ensuring a comprehensive certification framework.
	By offering theoretical guarantees and practical insights for solver customization, the proposed framework enhances the reliability of B\&B for real-time application. The usefulness of the proposed framework is demonstrated through numerical experiments on both random MILPs and MIQPs, as well as on MIQPs arising from a hybrid MPC problem. 
		\end{abstract}

		\newtheorem{lemma}{Lemma}
		\newtheorem{corollary}{Corollary}
		\newtheorem{theorem}{Theorem}
		\newtheorem{assumption}{Assumption}
		\newtheorem{remark}{Remark}
		\newtheorem{definition}{Definition}	
		\newtheorem{properties}{Property}				

%
\section{Introduction} \label{sec_intro_uni}
Model predictive control (MPC) involves solving an optimization problem at each time step to compute an optimal control action. For safety-critical systems operating in real time, the optimization solvers used in MPC must be both reliable and efficient, especially in embedded systems with limited computational resources and memory. 
When MPC is used to control hybrid systems where certain states and/or controls are constrained to binary values, the resulting optimization problems are typically in the form of mixed-integer linear programs (MILPs) when $1$- or $\infty$-norm performance measures are used in the objective function. If a quadratic objective function is used instead, the optimization problem in hybrid MPC  becomes a mixed-integer quadratic program (MIQP)~\cite{bemporad1999control}. These optimization problems depend on parameters such as the current system state, making them multi-parametric MILPs (mp-MILPs) or mp-MIQPs.
In explicit hybrid MPC, mp-MILPs and mp-MIQPs are typically solved offline for a set of parameters, with the precomputed solutions stored for online use~\cite{dua2000algorithm, bemporad2002explicit}. 
However, as the problem dimensions increase, the complexity of these precomputed solutions grows exponentially, making them impractical for high-dimensional problems due to memory limitations. In such cases, the optimization problem must be solved online instead. This necessitates an efficient solver that can provide
guarantees on solving the problem within the time constraints imposed in real time. Such guarantees on the worst-case computational complexity for solving LPs and QPs encountered in linear MPC have been provided in, e.g.,~\cite{zeilinger2011real}, and~\cite{cimini2017exact, arnstrom2021unifying}, respectively. 

A widely used approach for solving mixed-integer problems is branch and bound (\bnb)~\cite{wolsey2020integer}, where a sequence of relaxations of the problem is solved. 
Recent research~\cite{shoja2022exact, axehill2010improved, shoja2022overall} has focused on certifying the worst-case computational complexity of \bnb methods, particularly for the family of multi-parametric MILP and MIQP problems that arise in, e.g., hybrid MPC applications. These works have established key computational bounds, including the accumulated number of linear systems of equations solved in relaxations (\textit{iterations}) and the total number of relaxations (nodes) explored within B\&B.
This paper extends the results in~\cite{shoja2022overall}, which provided upper bounds on the computational complexity of MIQPs, by deriving exact complexity measures. This development facilitates integrating extensions originally proposed for MILPs in~\cite{shoja2023subopt, shoja2023sheuristic} into the MIQP setting, allowing the certification results of~\cite{shoja2022overall, shoja2022exact, shoja2023subopt, shoja2023sheuristic} to be integrated in a unified framework. 

The main contribution of this paper is, hence, a unified certification framework for standard B\&B algorithms applicable to both MILPs and MIQPs. 
The framework is capable of providing exact worst-case bounds on the total number of iterations and B\&B nodes and can be extended to floating point operations (flops), providing critical insight into solution times. Additionally, it is extended to incorporate various branching strategies and improvement heuristics. As a result, the proposed unified framework accommodates different branching strategies, node-selection strategies~\cite{shoja2022exact}, start heuristics~\cite{shoja2023sheuristic}, and improvement heuristics, making it more comprehensive. 
By providing worst-case guarantees, the framework enhances the reliability of B\&B solvers in real-time applications. Furthermore, explicit knowledge of encountered subproblems enables solver customization, thereby reducing online computational costs.

When computing exact complexity measures for B\&B-based MIQP solvers, additional challenges arise due to the non-polyhedral nature of parameter-space partitioning, which stems from quadratic function comparisons. To address this, we extend the results in~\cite{shoja2022overall} by introducing conservative approximation techniques for these comparisons, thereby mitigating the impact of non-polyhedral geometry.

In summary, the main contributions of this paper are:
\begin{itemize}[leftmargin=*]
\item 
\textit{A unified complexity-certification framework} for B\&B algorithms, applicable to both MILPs and MIQPs.
\item \textit{Integration of algorithmic strategies} into the framework, including branching strategies, node-selection strategies, and primal heuristics, enhancing its applicability.
\item  \textit{Development of quadratic function approximations} to address non-polyhedral partitioning for MIQPs, yielding a conservative yet computationally tractable certification framework.
\end{itemize}

\section{Problem formulation}  \label{sec:prob_form_uni}
When applying MPC to hybrid systems with  a $1$-norm or $\infty$-norm performance measure, the optimization problem can be formulated as an mp-MILP (see, e.g.,~\cite{borrelli2017predictive}) of the form 
\begin{subequations} \label{eq:mpMILP_uni}
	\begin{align}
		\min_{x} \quad &  c^T x \label{eq:mpMILP_uni1} \\ 
		\mathcal{P}_{\text{mpMILP}}(\theta): \hspace{.2cm} \textrm{s.t.} \quad & Ax \leqslant b + W \theta, 
        \label{eq:mpMILP_uni2} \\
		& x_i \in  \{0,1\}, \hspace{.2cm} \forall i \in \mathcal{B}, \label{eq:mpMILP_uni3}
	\end{align}
\end{subequations} %
where  $x \in \mathbb{R}^{n_c} \times \{0, 1\}^{n_b}$ is the vector of in total $n = n_c + n_b$ decision variables comprising $n_c$ continuous and $n_b$ binary variables, 
and $\theta \in \Theta_0 \subset \mathbb{R}^{n_{\theta}}$ is a vector of parameters in the polyhedral parameter set $\Theta_0$.  In hybrid MPC applications, $x$ typically represents the control action, while $\theta$ relates to the system state and reference signal.
The objective function in~\eqref{eq:mpMILP_uni} is defined by $c \in \mathbb{R}^n$, 
and the feasible set is specified by $A \in \mathbb{R}^{m \times n}$, $b \in \mathbb{R}^{m}$, and $W \in \mathbb{R}^{m \times n_{\theta}}$.
Because the variables indexed by the set $\mathcal{B}$ are binary-valued, the problem~\eqref{eq:mpMILP_uni}  is non-convex and classified as $\mathcal{NP}$-hard~\cite{wolsey2020integer}. 

For hybrid MPC with a $2$-norm performance measure, the optimization problem can be cast into an mp-MIQP (see, e.g.,~\cite{bemporad1999control, borrelli2017predictive}) of the form

\begin{subequations} \label{eq:mpMIQP_uni}		
	\begin{align}
		\vspace{.1cm} 
		\min_{x} \quad & \frac{1}{2} x^{T}Hx + f^T x + \theta ^{T} f_{\theta }^{T} x 
        \label{eq:mpMIQP_uni1} \\		
		\mathcal{P}_{\text{mpMIQP}}(\theta): \hspace{.3cm} \textrm{s.t.} \quad & Ax \leqslant b + W \theta, 
        \label{eq:mpMIQP_uni2} \\
		& x_i \in  \{0,1\}, \hspace{.3cm} \forall i \in \mathcal{B}, \label{eq:mpMIQP_uni3}
	\end{align}
\end{subequations}
where $H \in \mathbb{S}_{+}^n$, $f \in \mathbb{R}^n$, and $f_{\theta} \in \mathbb{R}^{n \times n_{\theta}}$. 
The decision variables, the parameter vector, and the feasible set are defined similarly to those in~\eqref{eq:mpMILP_uni}. For brevity, we use the compact notation $b(\theta) = b + W \theta$ and $f(\theta) = f + f_{\theta} \theta$.

Throughout this paper, $\mathbb{N}_{0}$ denotes the set of nonnegative integers, $\mathbb{N}_{1:N}$ represents the finite set $\{1, \dots, N\}$, and $\{\mathcal{C}^i\}_{i=1}^{N}$ denotes a finite collection $\{\mathcal{C}^1, \dots, \mathcal{C}^N\}$ of $N$ elements. When $N$ is unimportant, we use $\{\mathcal{C}^i\}_{i}$ instead. 

\section{Branch and bound} 
This section reviews (online) B\&B methods, providing the detailed background necessary for analyzing the properties of the certification algorithm presented in subsequent sections. A standard reference on this topic is~\cite{wolsey2020integer}. Readers already familiar with the fundamentals of B\&B methods and primal heuristics for B\&B may skip this section and proceed directly to Section~\ref{sec:cert_uni}.

\subsection{Introduction}
Consider the mp-MILP problem~\eqref{eq:mpMILP_uni} and the mp-MIQP problem~\eqref{eq:mpMIQP_uni}, where the parameter vector $\theta$ is fixed at a specific value $\bar{\theta}$. The resulting (non-parametric) instance then becomes an MILP in the form 

\begin{subequations} \label{eq:MILP_uni}			\begin{align} 		\vspace{.1cm} 		\min_{x} \quad & c^T x \label{eq:MILP_uni1} \\ 		\mathcal{P}_{\text{MILP}}(\bar{\theta}): \hspace{.3cm} \textrm{s.t.} \quad & Ax \leqslant  b( \bar{\theta}),
\label{eq:MILP_uni2} \\		& x_i \in  \{0,1\}, \hspace{.3cm} \forall i \in \mathcal{B}, \label{eq:MILP_uni3}	\end{align}\end{subequations}%
or an MIQP in the form 

\begin{subequations} \label{eq:MIQP_uni}		
	\begin{align} 
		\vspace{.1cm} 
		\min_{x} \quad & \frac{1}{2} x^{T}Hx + f(\bar{\theta})^T x 
        \label{eq:MIQP_uni1} \\		
		\mathcal{P}_{\text{MIQP}}(\bar{\theta}): \hspace{.3cm} \textrm{s.t.} \quad & Ax \leqslant b( \bar{\theta}),
        \label{eq:MIQP_uni2} \\
		& x_i \in  \{0,1\}, \hspace{.3cm} \forall i \in \mathcal{B}, \label{eq:MIQP_uni3}
	\end{align}
\end{subequations}%
respectively. 
A naive approach to solve MILPs and MIQPs is to explicitly enumerate all \(2^{|\mathcal{B}|}\) combinations of binary variables in \(\mathcal{B}\) and solve the resulting LP or QP for each combination using methods such as the simplex or active-set methods~\cite{nocedal2006numerical}, where $|.|$  denotes the cardinality of the set. While this guarantees that an optimal solution (if one exists) will be found, it quickly becomes computationally intractable as \(|\mathcal{B}|\) grows. In contrast, the B\&B method~\cite{land1960automatic} offers a more efficient alternative by implicitly exploring the solution space through a search tree, progressively fixing constraints in \(\mathcal{B}\) to navigate the relaxations efficiently.

In B\&B, solving the MILP problem~\eqref{eq:MILP_uni} 
involves solving a sequence of LP relaxations in the form
\begin{subequations} \label{eq:LP_uni}
	\begin{align}
		\vspace{.1cm} 
		\min_{x} \quad & c^T x 
		\label{eq:LP_uni1} \\ \mathcal{P}_{\text{LP}}(\bar{\theta}): \hspace{.3cm} \textrm{s.t.} \quad & Ax \leqslant  b(\bar{\theta}), 
        \label{eq:LP_uni2} \\
		& 0 \leqslant x_i  \leqslant 1, \hspace{.2cm} \forall i \in \mathcal{B}, \label{eq:LP_uni3}\\
		& x_i = 0, \hspace{.08cm} \forall i \in \mathcal{B}_0, \hspace{.1cm} x_i = 1, \hspace{.08cm} \forall i \in \mathcal{B}_1, \label{eq:LP_uni4}
	\end{align}
\end{subequations} %
where $\mathcal{B}_0 \cap \mathcal{B}_1 = \emptyset$ and $\mathcal{B}_0,\mathcal{B}_1 \subseteq \mathcal{B}$ represent index sets of binary variables fixed to $0$ and $1$, respectively. Similarly, when solving the MIQP problem~\eqref{eq:MIQP_uni} in B\&B, QP relaxations $\mathcal{P}_{\text{QP}}(\bar{\theta})$ are solved in \bnb, where they are derived analogously by relaxing the binary constraints~\eqref{eq:MIQP_uni3} into~\eqref{eq:LP_uni3}.
These relaxations are systematically ordered in a B\&B search tree, with each relaxation forming a \textit{node}. The following definition formalizes the concept of a node in the B\&B search tree.

\begin{definition}[Node]
	Let a \textit{node} be characterized by a tuple $\eta \triangleq \left( \mathcal{B}_0, \mathcal{B}_1\right)$, where $\mathcal{B}_0$ and $\mathcal{B}_1$ are defined in \eqref{eq:LP_uni4}. Then, we can define the following:
	\begin{itemize}
		\item The \textit{level} of a node is defined by $l (\eta) \triangleq |\mathcal{B}_0| + |\mathcal{B}_1|$.
		\item  Node $\eta = (\emptyset, \emptyset)$ at the top of the tree 
		is called the \textit{root} node. 
		Moreover, nodes at the bottom of the tree for which $l(\eta) = n_b$ are called \textit{leaf} nodes. 
		\item Node $\hat{\eta} = ( \hat{\mathcal{B}_0}, \hat{\mathcal{B}_1} )$ is a \textit{descendant} to node $\eta = \left( \mathcal{B}_0, \mathcal{B}_1\right)$, denoted by $\hat{\eta} \in \mathcal{D}(\eta)$, if $\hat{\mathcal{B}_0} \supseteq \mathcal{B}_0$ and $\hat{\mathcal{B}_1} \supseteq \mathcal{B}_1$. Furthermore, the node $\hat{\eta} \in \mathcal{D}(\eta)$ is a \textit{child} to (the \textit{parent} node) $\eta$ if $l(\hat{\eta}) - l (\eta) = 1$.
	\end{itemize}
\end{definition}

\subsection{Algorithm overview}
The generic online \bnb algorithm (for a given $\theta$) is outlined in Algorithm~\ref{alg:BnB_on_uni}. Here,  $\mathcal{T}$ represents a sorted list of \textit{pending nodes} that are yet to be explored. Exploring (or processing) a node involves solving the corresponding (LP/QP) relaxation. In the algorithm, $\barbelow{J}$ denotes the objective function value of the relaxation (lower bound), $\barbelow{x}$ denotes the corresponding optimal solution, and $\mathcal{A}$ denotes the active set (i.e., the set of constraints satisfied with equality at $\barbelow{x}$). 
Additionally, $\kappa$ denotes the complexity measure, which can be, e.g., the number of linear systems of equations solved in the relaxation (the \textit{iteration} number) or the number of relaxations (B\&B nodes) explored. 
Moreover, $\bar{J}$ and $\bar{x}$ denote the objective function value (upper bound) and the solution of the best-known integer-feasible solution found so far. 
As a reference, Algorithm~\ref{alg:BnB_on_uni} also stores and returns the accumulated complexity measure, denoted $\kappa^*_\text{tot}$, which will later be compared with the complexity measure computed by the certification framework. 

\begin{algorithm}[htbp]
	\caption{\textsc{\bnb}: Generic (online) \bnb algorithm} 
	\label{alg:BnB_on_uni}
	\begin{algorithmic}[1]
		\Require MILP/MIQP problem~\eqref{eq:MILP_uni}/\eqref{eq:MIQP_uni} 
        (for a given $\theta$) 
		\vspace{.005cm}
		\Ensure $\bar{J}$, $\bar{x}$, $\kappa^*_\text{tot}$ 
        \vspace{.06cm}
		\State{
			$\mathcal{T} \leftarrow \{(\emptyset, \emptyset)\}$, $\kappa^*_\text{tot} \leftarrow 0$, $\bar{J} \leftarrow \infty$, $\bar{x} \leftarrow \texttt{NaN} $}
		\While {$\mathcal{T} \neq \emptyset$} \label{step:loop_T_on_uni}
		\State{$ \eta  \leftarrow$ pop node from $\mathcal{T}$} \label{step:pop_on_uni} 
		\State ($\kappa, \mathcal{A}$,  $\barbelow{J}, \barbelow{x}$)  $\leftarrow$ $\textsc{solve}(\eta) $ 
		\label{step:solve_on_uni} %
		\State $\kappa^*_\text{tot} \leftarrow  \kappa^*_\text{tot} + \kappa$ \label{step:kappa_on_uni}
		\vspace{.05cm}
		\State $ \mathcal{T}, \bar{J}, \bar{x} \leftarrow \textsc{cut}$$ \left(\mathcal{T},  \bar{J}, \bar{x}, \barbelow{J}, \barbelow{x}, \mathcal{A}, \eta \right)$ 
	\label{step:cutcond_on_uni} 	
		\EndWhile 
		\State \textbf{return} 
        $\bar{J}$, $\bar{x}$, $\kappa_\text{tot}^*$  \label{Step:BnB_on_return_uni}
	\end{algorithmic}
\end{algorithm}

Algorithm \ref{alg:BnB_on_uni} performs the following steps at each iteration: 
\begin{enumerate}
	\item \textit{Selecting the first pending node from the list} \hspace{.05cm} (Step~\ref{step:pop_on_uni}) if it exists; otherwise, terminate. 
	\item \textit{Solving the relaxation} \hspace{.05cm} (Step~\ref{step:solve_on_uni}) using an LP/QP solver, such as the simplex/active-set methods~\cite{nocedal2006numerical}.
	\item \textit{Bounding or branching the node} \hspace{.05cm} (Step~\ref{step:cutcond_on_uni}) to either prune the node or generate two child nodes, using the \textsc{cut} procedure given in Algorithm~\ref{alg:cut_cond_on_uni}. Details of this procedure are provided in Section~\ref{subsec:cut_cond_on_uni}. 
\end{enumerate}

\subsection{Cut-condition evaluation} \label{subsec:cut_cond_on_uni}
In B\&B, solutions to previous relaxations can sometimes be used to dismiss further expansion of a subtree, known as \textit{cuts}. These cuts rely on the following lemma:
\begin{lemma} \label{lem:decentent_uni} 
	Let $\barbelow{J}_{\eta}$ denote the objective function value of a relaxation at node $\eta$, and let $\hat{\eta} \in \mathcal{D}(\eta)$. Then $\barbelow{J}_{\hat{\eta}} \geq \barbelow{J}_{\eta}$.
\end{lemma}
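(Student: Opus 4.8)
The plan is to reduce the statement to the elementary fact that minimizing a fixed objective over a smaller feasible set yields a larger-or-equal optimal value, and then to handle the bookkeeping of the constraint sets and the infeasible case. First I would make the relaxation feasible sets explicit: for a node $\eta = (\mathcal{B}_0,\mathcal{B}_1)$, let $\mathcal{F}_\eta$ denote the feasible set of the corresponding relaxation, i.e.\ the set of $x$ satisfying $Ax \leqslant b(\bar\theta)$, $0 \leqslant x_i \leqslant 1$ for all $i \in \mathcal{B}$, $x_i = 0$ for all $i \in \mathcal{B}_0$, and $x_i = 1$ for all $i \in \mathcal{B}_1$ (from~\eqref{eq:LP_uni}; the QP case is identical except for the objective). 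The objective function --- $c^T x$ for the LP relaxations, or $\tfrac12 x^T H x + f(\bar\theta)^T x$ for the QP relaxations --- does not depend on $\eta$; call it $g(x)$.

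Next I would establish the set inclusion $\mathcal{F}_{\hat\eta} \subseteq \mathcal{F}_\eta$ whenever $\hat\eta \in \mathcal{D}(\eta)$. By the definition of descendant, $\hat{\mathcal{B}}_0 \supseteq \mathcal{B}_0$ and $\hat{\mathcal{B}}_1 \supseteq \mathcal{B}_1$, so every equality constraint imposed at $\eta$ is also imposed at $\hat\eta$ (together with possibly additional ones), while the constraints $Ax \leqslant b(\bar\theta)$ and $0 \leqslant x_i \leqslant 1$ are common to both; hence any $x$ feasible for $\hat\eta$ is feasible for $\eta$. Then $\barbelow{J}_{\hat\eta} = \min_{x \in \mathcal{F}_{\hat\eta}} g(x) \geq \min_{x \in \mathcal{F}_\eta} g(x) = \barbelow{J}_{\eta}$, since the infimum over a subset is never smaller than the infimum over the superset.

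Finally I would dispatch the infeasible cases, which is the only place requiring a small convention rather than a calculation. Adopting the standard convention $\barbelow{J}_\eta = +\infty$ when $\mathcal{F}_\eta = \emptyset$: if $\mathcal{F}_{\hat\eta} = \emptyset$ then $\barbelow{J}_{\hat\eta} = +\infty \geq \barbelow{J}_\eta$ trivially; if $\mathcal{F}_\eta = \emptyset$ then by the inclusion above $\mathcal{F}_{\hat\eta} = \emptyset$ as well, so both sides are $+\infty$; and if both are nonempty the chain of inequalities in the previous paragraph applies (with the minima attained, since the relaxations are LPs/QPs over nonempty polyhedra with a bounded-below objective in the QP case because $H \in \mathbb{S}^n_+$, and bounded feasible set in the relevant binary coordinates). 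If a cleaner exposition is desired, one may instead prove the inequality only for a parent--child pair ($l(\hat\eta) - l(\eta) = 1$) and then chain it along any path from $\eta$ down to $\hat\eta$ in the tree, using that such a path exists by the nested-inclusion structure of $\mathcal{D}$; this induction step is immediate from the same subset argument. I do not anticipate a genuine obstacle here --- the result is essentially a restatement of monotonicity of optimal value under feasible-set restriction --- so the main care is simply in stating the infeasibility convention consistently with how $\textsc{cut}$ and the certification algorithm later treat pruned-by-infeasibility nodes.
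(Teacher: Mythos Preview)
Your proposal is correct and takes essentially the same approach as the paper: the paper's proof simply observes that the feasible set at a descendant node is a subset of that at the ancestor node and appeals to standard B\&B arguments, while you spell out this same subset-monotonicity argument in full detail and additionally handle the infeasibility convention explicitly.
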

\begin{proof}
	It follows directly from standard \bnb arguments, as the feasible set for the relaxation at a descendant node $\hat{\eta}$ is a subset of the feasible set at node $\eta$.
\end{proof} 

The \textsc{cut} procedure 
is outlined in Algorithm~\ref{alg:cut_cond_on_uni}.
The two types of cut conditions applied in this algorithm are as follows:
\begin{itemize}[leftmargin=*]
	\item \textit{Dominance} \hspace{.02cm} (Step \ref{step:dominc_on_uni}): 
	If the objective function value of a relaxation ($\barbelow{J}$) is greater than that of the best-known integer solution ($\bar{J}$), further exploration of descendant nodes is unnecessary, as from Lemma~\ref{lem:decentent_uni}, they cannot yield a better solution. \textit{Infeasibility} can be considered a special case of this cut, where $\barbelow{J}$ is set to infinity for infeasible problems.
	
	\item \textit{Integer feasibility} \hspace{.02cm} (Step~\ref{step:int_cut_on_uni}): If the solution to a relaxation satisfies the binary constraints, no further descendants need to be explored, as from Lemma~\ref{lem:decentent_uni}, any additional branching cannot yield better integer-feasible solutions. 
\end{itemize}

\begin{algorithm}[htbp]	\caption{\textsc{cut}: Cut-condition evaluation and branching in the online B\&B} 
	\label{alg:cut_cond_on_uni}	
	\begin{algorithmic}[1] 
		\Require \Longunderstack[l]{$\mathcal{T}$, $\bar{J}$, $\bar{x}$, $\barbelow{J}$, $\barbelow{x}$, $\mathcal{A}$,  $(\mathcal{B}_0, \mathcal{B}_1)$} 
		\vspace{.005cm}
		\Ensure \Longunderstack[l]{ $\mathcal{T}$, $\bar{J}$, $\bar{x}$}
		\vspace{.06cm}
		\If {$\barbelow{J} \geq \bar{J} $} \label{step:dominc_on_uni}
		\State No feasible solution better than $\bar{x} $
		\label{step:dominc_on2_uni}
        \Comment{\textcolor{darkgray}{\textit{dominance cut}}}
		\ElsIf {all relaxed binary constraints are active} 
		 \label{step:int_cut_on_uni}
		\State $\bar{J} \leftarrow \barbelow{J}$, \hspace{.05cm} $\bar{x} \leftarrow \barbelow{x} $ \Comment{\textcolor{darkgray}{\textit{integer-feasibility cut}}}
		\label{step:intfeas_on_uni}
		\Else
		\Comment{\textcolor{darkgray}{\textit{branching}}}\label{step:cut_on_else} 
		\State{$k \leftarrow$ \textsc{branchInd}($\barbelow{x}, \mathcal{B}$)} 
		\label{step:branch_on_uni}	
		
	\State \Longunderstack[l]{$\mathcal{T} \leftarrow$ \textsc{sort}$\left((\mathcal{B}_0 \cup \{k\},\mathcal{B}_1 ), (\mathcal{B}_0,\mathcal{B}_1 \cup \{k\}), \barbelow{J}, \mathcal{T} \right)$} 
		\label{step:append_on_uni} 
		\EndIf	
		\State \textbf{return} $\mathcal{T}$, $\bar{J}$, $\bar{x}$  \label{step:cut_on_return_uni}
	\end{algorithmic}
\end{algorithm}

If no cut conditions apply, an index $k$  of a relaxed binary constraint is selected (Step~\ref{step:branch_on_uni}), and the two child nodes created by fixing variable $k$ are then stored in  $\mathcal{T}$ (Step~\ref{step:append_on_uni}).  The procedures for selecting branching indices and sorting nodes are detailed in Sections~\ref{subsec:branching_on_uni} and~\ref{subsec:node_selec_on_uni}, respectively. 

\subsection{Branching strategies} \label{subsec:branching_on_uni}
An important choice in \bnb is selecting which variable to branch on. 
The objective of an effective branching strategy is to minimize the number of nodes explored in the search tree. 
Let $\bar{\mathcal{B}} \triangleq \{i \in \mathcal{B} \hspace{.1cm} | \hspace{.1cm} \barbelow{x}_i \notin \{0,1\} \}$ denote the set of \textit{branching candidates}. 
 The \textsc{branchInd} procedure, outlined in Algorithm~\ref{alg:branching_on_uni}, selects the branching index by assigning a \textit{score value} 
$s_i \in \mathbb{R}$ to each candidate $i \in \bar{\mathcal{B}}$ (Step~\ref{step:branch_score_on_uni}) based on the chosen branching strategy.  The candidate with the highest score is then selected (Step~\ref{step:select_branch_on_uni}). 

\begin{algorithm}[htbp]
	\caption{\textsc{branchInd}: Branching-index selection in the online \bnb} \label{alg:branching_on_uni}
	\begin{algorithmic}[1]
		\Require $\barbelow{x}$, $\mathcal{B}$ 
		\Ensure $k$ 
		\vspace{.06cm}
		\State $\bar{\mathcal{B}} \leftarrow $ $\{i \in \mathcal{B} \hspace{.1cm} | \hspace{.1cm} \barbelow{x}_i \notin \{0,1\} \}$
		\State{\textbf{for} $i \in \bar{\mathcal{B}}$, compute $s_i$} \label{step:branch_score_on_uni}	
		\State Choose $k$ such that $s_k \geq s_i$, $ \forall i \in \bar{\mathcal{B}} \setminus \{k\} $ 
        \label{step:select_branch_on_uni}
		\State \textbf{return} $k$	
	\end{algorithmic}
\end{algorithm}

Various branching strategies have been studied  in~\cite{achterberg2005branching, morrison2016branch}. One such strategy is the \emph{most infeasible} branching (MIB) strategy, which selects the relaxed binary variable closest to 
$0.5$ (i.e., the most infeasible one). For this strategy, the score for each candidate $i \in \bar{\mathcal{B}}$ is computed as:
\begin{equation} \label{eq:branch_minf_on_uni}
	s_i = 0.5 - | \barbelow{x}_{i} - 0.5 |. 
\end{equation}

\subsection{Node-selection strategies} \label{subsec:node_selec_on_uni} 
Another key decision in \bnb is selecting the next pending node for processing. A \textit{sorting criterion}, denoted $\rho(\cdot)$, can determine the processing order of nodes based on the chosen node-selection strategy, where lower values of $\rho(\cdot)$ indicate higher processing priority. Common node-selection strategies include~\cite{wolsey2020integer}:
\begin{itemize}[leftmargin=*]
	\item \textit{Depth-first} (DF): Selects nodes based on depth level, using $\rho_{\scriptscriptstyle \text{DF}}(\eta) =  \frac{1}{l(\eta) + 1}$, favoring nodes at greater depths that are more likely to yield integer-feasible solutions.
	\item \textit{Breadth-first} (BrF): Processes nodes level by level, following $\rho_{\scriptscriptstyle \text{BrF}}(\eta) = l(\eta) + 1$, ensuring that all nodes at a given depth are explored before moving to deeper levels.
	\item \textit{Best-first} (BF): Prioritizes nodes with the lowest relaxation objective function values, using $\rho_{\scriptscriptstyle \text{BF}}(\eta) = \barbelow{J}$, encouraging exploration of nodes practically 
    promising to yield optimal solutions.
\end{itemize}

The \textsc{sort} procedure, outlined in  Algorithm~\ref{alg:append_on_uni}, sorts and stores new nodes $\tilde{\eta}_{0}$ and $\tilde{\eta}_{1}$ 
into $\mathcal{T}$ based on the chosen node-selection strategy. 
Here, $\eta_i$ denotes  the $i$th node in $\mathcal{T}$. 
The algorithm compares the sorting criterion $\rho(\tilde{\eta}_0)$ ($=\rho( \tilde{\eta}_1)$) of the new nodes with those of existing nodes in $\mathcal{T}$, sorting them in ascending 
$\rho$ values to ensure nodes with the highest priority (lowest $\rho$) are processed first. 
This priority structure assumes that the first node is popped from $\mathcal{T}$ at Step~\ref{step:pop_on_uni} of Algorithm~\ref{alg:BnB_on_uni}. 
Whether the \( 0 \)-branch ($\tilde{\eta}_0$) or the \( 1 \)-branch ($\tilde{\eta}_1$) is explored first depends on the order in which they are inserted into  $\mathcal{T}$ (see Step~\ref{step:append_on_pushT_uni}). 

\begin{algorithm}[htbp]
	\caption{\textsc{sort}: 
		Sorted node insertion in the online \bnb} 
	\label{alg:append_on_uni}
	\begin{algorithmic}[1]
		\Require $\tilde{\eta}_{0}, \tilde{\eta}_{1}$, $\barbelow{J}$, $\mathcal{T} = \{\eta_i\}_{i=1}^{N_{\mathcal{T}}} $  
		\Ensure $\mathcal{T}$ 
		\vspace{.06cm}
		\State Compute  $\rho(\tilde{\eta}_0)$ based on the node-selection strategy 
		\label{step:append_on_updateO_uni}
		\For {$i \in \mathbb{N}_{1:N_{\mathcal{T}}}$} 
	\If {$\rho(\tilde{\eta}_0)  \leq  \rho(\eta_i) $} \label{step:append_on_comp_uni} 
    \State{$\mathcal{T} = \{ \{\eta_j\}_{j=1}^{i-1}, \tilde{\eta}_{0}, \tilde{\eta}_{1},\{\eta_j\}_{j=i}^{N_{\mathcal{T}}}\}$}\label{step:append_on_pushT_uni} 
	\State \textbf{return} $\mathcal{T}$ 		
	\EndIf
	\EndFor
\end{algorithmic}
\end{algorithm}

\subsection{Heuristics in
B\&B} \label{sec:heuris_on_uni}
To enhance the performance of \bnb, primal heuristic methods are often employed in high-performance solvers. These methods are \textit{incomplete}, meaning they do not guarantee finding a feasible solution. Nonetheless, they have shown to be highly effective, especially in finding feasible solutions early in the \bnb process. 
Heuristic methods for \bnb have been extensively studied, in, e.g.,~\cite{glover1997general, berthold2006primal, achterberg2007constraint}. Below, two key categories of primal heuristics are reviewed. 

\subsubsection{Start heuristics} \label{subsec:st_heu_on_uni}
Start heuristics aim to identify an integer-feasible solution after solving the initial relaxation (root node). 
Finding feasible solutions early can result in a useful upper
bound that can help to prune some nodes in the B\&B search
tree, thus potentially reducing the tree size and the overall effort. Examples of such heuristics include relaxation enforced
neighborhood search (RENS), diving, and feasibility pump
(FP)~\cite{berthold2006primal}. 

\subsubsection{Improvement heuristics} \label{subsec:imp_heu_on_heu}
Improvement heuristics leverage existing integer-feasible solutions $\bar{x}$ to generate improved solutions $\hat{x}$ with lower objective function values. Examples of such heuristics include \textit{local branching} (LB)~\cite{fischetti2003local} and \textit{relaxation-induced neighborhood search} (RINS)~\cite{danna2005exploring}. These methods construct and solve mixed-integer subproblems using available information to identify improved solutions $\hat{x}$. To guarantee improvement over $\bar{x}$, the subproblems incorporate an additional constraint derived from the objective function (the objective cut-off constraint).
When applying LB and RINS  to MIQPs, incorporating the quadratic objective into a constraint leads to mixed-integer quadratically constrained quadratic program (MIQCQP) subproblems. To maintain the tractability of these heuristics—particularly within the certification framework—we restrict the use of LB and RINS to MILPs in this work,  reviewed below.

\vspace{.1cm} 
\textit{Local branching} \label{subsubsec:LB_heuris_on_uni} \hspace{.2cm}
LB is a refinement heuristic that exploits the proximity of feasible solutions in their Manhattan-distance neighborhood~\cite{fischetti2003local}. Starting from \( \bar{x} \), LB seeks a better solution \( \hat{x} \) within the \(r_n\)-neighborhood of \( \bar{x} \), where \( r_n \in \mathbb{N} \), by solving sub-MILP problems \( \hat{\mathcal{P}}_{\text{LB}} \) derived by adding the following constraints to~\eqref{eq:MILP_uni}~\cite{berthold2006primal}:
\begin{itemize} 
\item \textit{Local branching cut}, to enforce the $r_n$-neighborhood constraint: 
\begin{equation} \label{eq:LB_neigh_on_uni}
	\sum_{i \in \mathcal{B} : \bar{x}_i = 0} x_i + \sum_{i \in \mathcal{B} : \bar{x}_i = 1} (1 - x_i) \leq r_n. 
\end{equation}
\item \textit{Objective cut-off constraint}, to ensure an improved objective value:  
\begin{equation} \label{eq:LB_cutoff_on_uni}  
	c^T x \leq (1 - \varepsilon) c^T \bar{x}, \quad \varepsilon > 0.  
\end{equation}  
\end{itemize}  

The \textsc{LB} heuristic for online B\&B is outlined in Algorithm~\ref{alg:heuris_LB_on_uni}. The sub-MILPs are solved at Step~\ref{step:LB_on_solve} using the online \textsc{B\&B}  (Algorithm~\ref{alg:BnB_on_uni}). The computational effort to solve these subproblems can be limited by imposing constraints on, e.g., the maximum number of \bnb nodes solved. The neighborhood size $r_n$ is dynamically adjusted in the \textsc{neighborSize} procedure based on the subproblem results (Step~\ref{step:LB_kneigh_on}). 
The accumulated complexity measure \( \hat{\kappa}^{h*} \) for the heuristic is tracked and returned by  Algorithm~\ref{alg:heuris_LB_on_uni}. 

\begin{algorithm}[htbp]
\caption{\textsc{LB}: Local branching heuristic for the online \bnb} 
\label{alg:heuris_LB_on_uni}
\begin{algorithmic}[1]
	\Require 
	MILP problem~\eqref{eq:MILP_uni} 
    (for a given $\theta$), $r_{n_0}$, $\bar{x}$ 
	\Ensure $\hat{J}$, $\hat{x}$, $\hat{\kappa}^{h*}$ 
	\vspace{.06cm}
	\State $\hat{J}\gets\infty,\hat{x}\gets\texttt{NaN},\hat{\kappa}^{h*}\gets0$, $r_{n}\gets r_{n_0},\texttt{exec}\gets\texttt{true}$ 
	\While{$\texttt{exec}$}
	\label{step:LB_Nmax_on}
	\State \Longunderstack[l]{Formulate subproblem $\hat{\mathcal{P}}_{\text{LB}}$ by adding constraints~\eqref{eq:LB_neigh_on_uni}\\ and~\eqref{eq:LB_cutoff_on_uni} to the original problem~\eqref{eq:MILP_uni}} 
	\State $\hat{J}, \hat{x}, \hat{\kappa} \gets \textsc{B\&B}(\hat{\mathcal{P}}_{\text{LB}})$
	\label{step:LB_on_solve}
	\State $\hat{\kappa}^{h*} \gets \hat{\kappa}^{h*} + \hat{\kappa}$
	\label{step:LB_on_kappa}
	\If{$\hat{x} \neq \bar{x}$} 
	\Comment{\textcolor{darkgray}{\textit{new solution}}}
	\State \textbf{break}
	\label{step:LB_on_sol}
	\EndIf
	\State $\texttt{exec}, r_n \gets \textsc{neighborSize}(\texttt{exec}, r_n, r_{n_0}, \hat{J})$
	\label{step:LB_kneigh_on}
	\EndWhile
	\State \Return $\hat{J}, \hat{x}, \hat{\kappa}^{h*}$
	\\ \hrulefill
	\Procedure{kNeighbor}{$\texttt{exec}, r_n, r_{n_0}, \hat{J}$}
	\vspace{.06cm}
	\If{$\hat{J} = \infty$} 
	\Comment{\textcolor{darkgray}{\textit{subproblem not solved within limits}}} 
	\If{$r_n=r_{n_0}$}
	\State $r_{n} \gets r_{n_0} - \left\lfloor \frac{r_{n_0}}{2} \right\rfloor$ 
	\Comment{\textcolor{darkgray}{\textit{reduce neighborhood size}}}
	\Else
	\State $\texttt{exec} \gets \texttt{false}$
	\EndIf
	\Else
	\If{$r_{n} = r_{n_0}$}
	\State $r_{n} \gets r_{n_0} + \left\lceil \frac{r_{n_0}}{2} \right\rceil$
	\Comment{\textcolor{darkgray}{\textit{increase neighborhood size}}}
	\Else
	\State $\texttt{exec} \gets \texttt{false}$
	\EndIf
	\EndIf
	\State \Return $\texttt{exec}, r_{n}$
	\EndProcedure
\end{algorithmic}
\end{algorithm}

\vspace{.1cm} 
\textit{RINS} \label{subsec:rins_on_heu} \hspace{.2cm}
RINS exploits the similarity between the integer-feasible solution $ \bar{x} $ and the relaxation solution $\barbelow{x}$~\cite{danna2005exploring}. 
Given \( \bar{x} \) and \( \barbelow{x} \), RINS seeks a better solution \( \hat{x} \) by solving a sub-MILP \( \hat{\mathcal{P}}_{\text{RINS}} \) derived by adding the objective cut-off constraint~\eqref{eq:LB_cutoff_on_uni} and the following constraint to~\eqref{eq:MILP_uni}:
\begin{equation}
	x_i = \bar{x}_i, \hspace{.3cm} \forall i \in \mathcal{B} \hspace{.1cm} \text{ such that } \hspace{.1cm} \bar{x}_i = \barbelow{x}_i, \label{eq:Rins_add_cons_heu}
\end{equation}
to fix all variables in \( \mathcal{B} \) with identical values in \( \bar{x} \) and \( \barbelow{x} \).

The \textsc{RINS} heuristic for online B\&B is outlined in Algorithm~\ref{alg:rins_on_uni}. As with LB, limits (e.g., on the maximum number of nodes) can be imposed on  \textsc{B\&B}  (Algorithm~\ref{alg:BnB_on_uni}) at Step~\ref{step:rins_on_solv_heu}. Additionally, this heuristic is applied only if a sufficient fraction \( r \in [0,1] \) of relaxed binary variables can be fixed (see Step~\ref{step:rins_on_if_heu}),  thereby reducing the risk of solving a sub-MILP that is more difficult than the original MILP~\cite{berthold2006primal}. 
		\begin{algorithm}[htbp]
			\caption{\textsc{RINS}: RINS heuristic for the online \bnb} 
		\label{alg:rins_on_uni}
		\begin{algorithmic}[1]
			\Require MILP problem~\eqref{eq:MILP_uni} 
    (for a given $\theta$), $\barbelow{x}$, $\bar{x}$ 
	\Ensure $\hat{J}$, $\hat{x}$, $\hat{\kappa}^{h*}$ 
	\vspace{.06cm} 
		\State $\hat{J} \gets \infty$, $\hat{x} \gets \texttt{NaN}$, $\hat{\kappa}^{h*} \gets 0$	\State{$\hat{\mathcal{B}} \leftarrow \{i \in \mathcal{B} \hspace{.1cm} | \hspace{.1cm} \bar{x}_i = \barbelow{x}_i \hspace{.005cm}\}$} \label{step:rins_on_B_heu}
        \If{$|\hat{\mathcal{B}}| \geq r |\mathcal{B}|$} \label{step:rins_on_if_heu} 
			
            \State \Longunderstack[l]{Formulate subproblem $\hat{\mathcal{P}}_{\text{RINS}}$ by adding constraints~\eqref{eq:LB_neigh_on_uni}\\ and~\eqref{eq:LB_cutoff_on_uni} to the original problem~\eqref{eq:MILP_uni}} 
			\vspace{.05cm}
			\State $\hat{J}, \hat{x}, \hat{\kappa}^{h*}\gets \textsc{B\&B}(\hat{\mathcal{P}}_{\text{RINS}})$ 
            \label{step:rins_on_solv_heu}
			\EndIf
			\State \textbf{return} $\hat{J}, \hat{x},  \hat{\kappa}^{h*} $  \label{Step:rins_on_return_heu}
		\end{algorithmic}
	\end{algorithm}

\subsubsection{Integration of heuristics into the online B\&B algorithm} \label{subsubsec:LB_heu_integ_on_uni}
A start heuristic can be invoked in the online \textsc{B\&B} Algorithm~\ref{alg:BnB_on_uni} after solving the root node (i.e., between Steps~\ref{step:solve_on_uni} and~\ref{step:kappa_on_uni}).  
Similarly, an improvement heuristic can be applied after finding an integer-feasible solution $ \bar{x} $ (between Steps~\ref{step:intfeas_on_uni} and~\ref{step:cut_on_else} in Algorithm~\ref{alg:cut_cond_on_uni}). These heuristics may be invoked once or multiple times during the \bnb process.  
If a heuristic identifies a better solution $ \hat{x} \neq \bar{x} $, the current best solution $ \bar{x} $ and upper bound $ \bar{J} $ are updated to $ \hat{x} $ and $\hat{J}$, respectively. 
Additionally, the complexity measure $\hat{\kappa}^{h*}$ associated with the heuristic is added to the accumulated complexity measure \( \kappa_{\text{tot}}^* \) of \textsc{B\&B} to account for the overall computational effort.

\section{Unified complexity-certification framework}
\label{sec:cert_uni}

In this section, we present a unified framework to determine the complexity measures induced by different parameters  $\theta$ when applying Algorithm~\ref{alg:BnB_on_uni} to the mp-MILP problem~\eqref{eq:mpMILP_uni} or the mp-MIQP problem~\eqref{eq:mpMIQP_uni}. This approach enables analyzing the algorithm's complexity by computing a complexity measure $\kappa(\theta)$ as a function of $\theta$. Extending the work in~\cite{shoja2022overall, shoja2022exact} and drawing on partitioning techniques similar to~\cite{arnstrom2021unifying}, this method systematically partitions the parameter space based on changes in the solver’s state, allowing for an exact characterization of the \bnb solver's behavior. In particular, some operations depend on the parameters, while others remain parameter-independent. Parameter-dependent operations can introduce additional partitioning, resulting in the \bnb search tree being explored differently across various parts of the parameter space.

The following definitions are useful in what follows~\cite{borrelli2017predictive}.	
\begin{definition} \label{def:partition:uni}		
A collection of sets $\{\Theta^i\}_{i=1}^N$ is
a \textit{partition} of a set $\Theta$ if (1) $ \mathring{\Theta}^i \cap \mathring{\Theta}^j = \emptyset, i\neq j$, and (2) $\cup_{i=1}^{N} \Theta^i = \Theta$, where  $\mathring{\Theta}^i$ denotes the interior of $\Theta^i$. Moreover, $\{\Theta^i\}_{i=1}^N$ is a \textit{polyhedral partition} of a polyhedral set $\Theta$ if $\{\Theta^i\}_{i=1}^N$ is a partition of $\Theta$ and $\Theta^i$ is a polyhedron for all $i$.  
\end{definition} 

\begin{definition} \label{def:pwa:uni}			
A function $h(\theta): \Theta \rightarrow  \mathbb{R}^n$, where $\Theta \subset  \mathbb{R}^{n_{\theta}}$, 
is called \textit{piecewise quadratic} (PWQ) if there exists a partition $\{\Theta^i\}_{i}$ of $\Theta$ such that $h(\theta) = \theta^T Q^i \theta + R^i \theta + S^i,$ $\forall  \theta \in  \Theta^i$, and $\forall i$. If $\{\Theta^i\}_{i}$ forms a polyhedral partition, then 
$h(\theta)$ is referred to as polyhedral PWQ (PPWQ). \\
Additionally, the function $h(\theta)$ is called \textit{piecewise affine} (PWA) if $Q^i= \mathbf{0}$, $\forall i$, and  \textit{piecewise constant} (PWC) if $Q^i= \mathbf{0}$ and $R^i= \mathbf{0}$, $\forall i$. 
Polyhedral PWA (PPWA) and polyhedral PWC (PPWC) functions are defined analogously. 
\end{definition} 

\subsection{Algorithm overview}
Algorithm~\ref{alg:cert_uni} presents a unified complexity-certification framework for \bnb algorithms, applicable to mp-MILP~\eqref{eq:mpMILP_uni} and mp-MIQP problems~\eqref{eq:mpMIQP_uni}.  Conceptually, it can be viewed as running Algorithm~\ref{alg:BnB_on_uni} for all parameters while partitioning the initial parameter set $\Theta_0$ into regions in which the solver state remains constant. 
To manage candidate and terminated regions, the algorithm maintains two lists: 
\begin{itemize}
    \item $\mathcal{S}$: the \textit{candidate list}, which stores regions from $\Theta_0$ that are yet to terminate.
    \item $\mathcal{F}$: the \textit{final list}, which contains terminated regions along with their associated complexity measures.
\end{itemize}

Each tuple $\left(\Theta, \mathcal{T}, \kappa_{\text{tot}}, \bar{J}\right)$ 
in $\mathcal{S}$ consists of:  the corresponding parameter set \( \Theta \), the sorted list of pending nodes \( \mathcal{T} \)  forming the local \bnb tree within the region, the accumulated complexity measure \( \kappa_{\text{tot}} \) up to the current region's state, and the upper bound \( \bar{J}(\theta) \) across all \( \theta \in \Theta \). Since each node here contains an mp-LP/mp-QP relaxation and is therefore parameter-dependent, it is denoted \( \eta(\theta) \).


\begin{algorithm}[htbp]
\caption{\textsc{B\&BCert}: Unified complexity-certification framework for B\&B} 
\label{alg:cert_uni}	
\begin{algorithmic}[1] 
\Require\Longunderstack[l]{mp-MILP/mp-MIQP problem~\eqref{eq:mpMILP_uni}/\eqref{eq:mpMIQP_uni} and $\Theta_0$} 
\vspace{.04cm}
\Ensure  $\mathcal{F}=\{(\Theta^i,\kappa_{tot}^i,\bar {J}^i)\}_{i}$ 
\vspace{.06cm}
\State {$\mathcal{F} \gets \emptyset$}
\State {Push $( \Theta_0, \{ (\emptyset,\emptyset)\}, 0, \infty)$ to $\mathcal{S}$} 
\label{step:init_S_uni} 
\vspace{.05cm}
\While {$\mathcal{S} \neq \emptyset$} \label{step:while_uni}
\State Pop $( \Theta, \mathcal{T}, \kappa_{{tot}}, \bar{J})$ from $\mathcal{S}$
\label{step:pop_reg_uni}
\If{$\mathcal{T} \neq \emptyset$}  
\label{step:T_empty_uni} 
\State{$ \eta(\theta)  \leftarrow$ pop node from $\mathcal{T}$}
\label{step:pop_node_cert_uni} 
\State \Longunderstack[l]{$\{(\Theta^j, \kappa^j,  \mathcal{A}^j, \barbelow{J}^{j}, \barbelow{x}^{j})\}_{j=1}^N \leftarrow$ $\textsc{solveCert} \left(\eta(\theta),\Theta \right)$} \label{step:cert_node_uni} 
\vspace{.08cm}
\For {$j \in \mathbb{N}_{1:N} $} \label{step:for_reg_cert_uni} 
\State $\kappa_{tot}^j \leftarrow \kappa_{tot} + \kappa^j $ \label{step:gener_reg_uni}
\State $\mathcal{S} \leftarrow$ $ \textsc{cutCert}\left(( \Theta^j, \mathcal{T}, \kappa_{tot}^j, \bar{J}), \barbelow{J}^j, \barbelow{x}^j,  \mathcal{A}^j, \eta, \mathcal{S} \right)$ \label{step:eval_cut_uni} 
\vspace{-0.3cm}
\EndFor		
\label{step:pop_cert_uni} 
\Else
\State{Add $( \Theta, \kappa_{{tot}}, \bar{J})$ to $\mathcal{F}$} 
\label{step:append_uni}
\EndIf
\EndWhile
\State \textbf{return} $\mathcal{F}$ 
\end{algorithmic}
\end{algorithm}

Algorithm~\ref{alg:cert_uni} begins by initializing the list $\mathcal{S}$ with the initial region (the entire parameter set $\Theta_0$) and the root node (Step~\ref{step:init_S_uni}). The algorithm then iterates over the regions stored in $\mathcal{S}$, retrieving a tuple $(\Theta, \mathcal{T}, \kappa_{{tot}}, \bar{J})$ at each step.
If $\mathcal{T}$ is empty, the \bnb tree for $\Theta$ has been fully explored, and the region is added to $\mathcal{F}$ (Step~\ref{step:append_uni}). Otherwise, the first pending node from $\mathcal{T}$ 
is selected (Step~\ref{step:pop_node_cert_uni}) and processed using the \textsc{solveCert} subroutine (Step~\ref{step:cert_node_uni}). This subroutine certifies the relaxation and partitions $\Theta$ accordingly, returning for each resulting subregion $\Theta^j$: the value function (lower bound) $\barbelow{J}^j(\theta)$, the solution $\barbelow{x}^j(\theta)$, the active set $\mathcal{A}^j$, and the complexity measure $\kappa^j$. An example of such a subroutine is provided in~\cite{arnstrom2021unifying}.

Following this, the region $\Theta$ is decomposed into $N$ subregions, each $\Theta^j$ assigned an updated accumulated complexity measure (Step~\ref{step:gener_reg_uni}) along with a copy of $\mathcal{T}$ and $\bar{J}$. An illustration of this decomposition is provided in Fig.~\ref{fig:part_uni}. Each $\Theta^j$ then undergoes the \textsc{cutCert} procedure, which applies the parameter-dependent \bnb cut conditions  (Step~\ref{step:eval_cut_uni}). This step may cut or branch the node, further partition \( \Theta^j \), and update $\mathcal{S}$ accordingly. Details on this procedure are provided in Section~\ref{subsec:cut_cund_cert_uni}.

\begin{figure}[htbp] 
	\centerline{\includegraphics[scale=0.26]{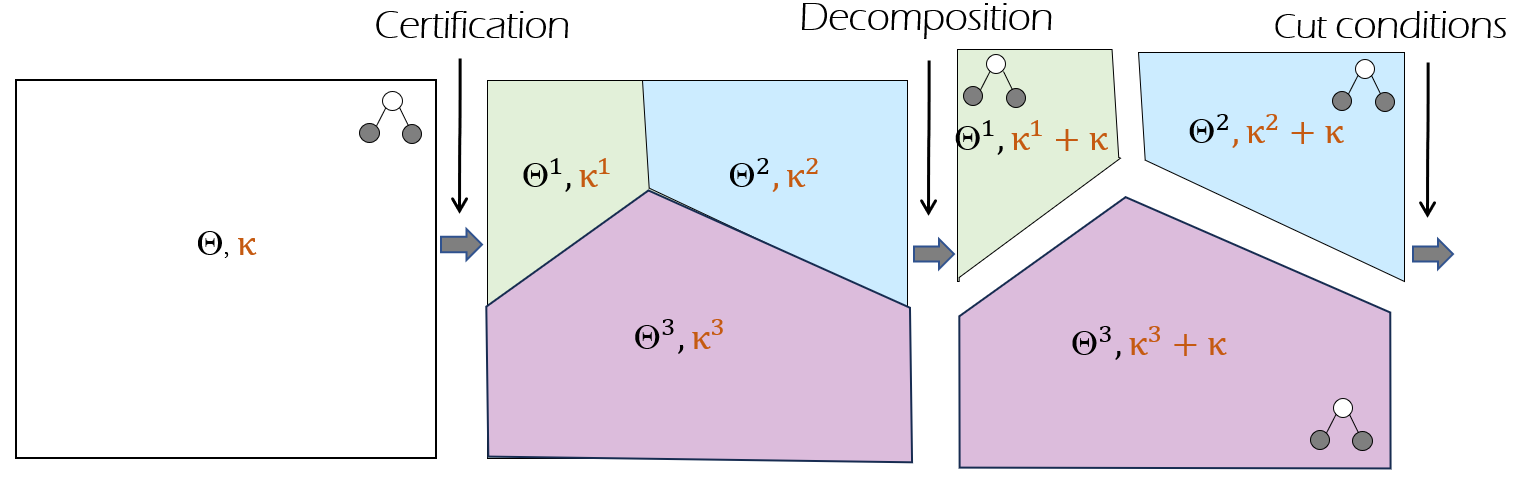}} 
	\caption{Decomposition of a polyhedral region $\Theta$ after relaxation certification using \textsc{solveCert} in Algorithm~\ref{alg:cert_uni}. Each subregion is potentially further decomposed by \textsc{cutCert}.} 
\label{fig:part_uni}
\end{figure}

The following assumptions on the certification function for subproblems are useful. 
\begin{assumption}
\label{ass:cert_uni}
The \textsc{solveCert} function satisfies the following properties. The conditions in parentheses specifically apply when the input parameter set $\Theta$ is polyhedral. 
\begin{enumerate}
\item The function partitions (the polyhedral set) $\Theta$ into a finite (polyhedral) partition $\{\Theta^j\}_{j=1}^N$. 
\item The solution $\barbelow{x}(\theta)$ is (polyhedral) PWA. 
\item The value function $\barbelow{J}(\theta)$ is (polyhedral) PWA for mp-LPs and (polyhedral) PWQ for mp-QPs. 
Infeasibility is represented by $\barbelow{J}(\theta) = \infty$.
\item The complexity measure $\kappa(\theta)$ is (polyhedral) PWC. 
\item The complexity measure obtained from \textsc{solveCert} coincides with the complexity of the online solver. 
\end{enumerate}
\end{assumption}
\begin{remark}
Any LP/QP solver with a corresponding certification method that satisfies Assumption~\ref{ass:cert_uni} (e.g.,~\cite{arnstrom2021unifying}) can be used as the \textsc{solveCert} function at Step~\ref{step:cert_node_uni} in Algorithm~\ref{alg:cert_uni}.
\end{remark}

\begin{remark}
    While the objective functions differ between MILPs and MIQPs, the fundamental principles of Algorithm~\ref{alg:cert_uni} 
    remain consistent across both problem families. 
\end{remark}

\subsection{Cut-condition evaluation} \label{subsec:cut_cund_cert_uni}
The \textsc{cutCert} procedure for  evaluating the \bnb cut conditions over a parameter set $\Theta$ is presented in Algorithm~\ref{alg:cut_cond_cert_uni}. 
This procedure begins by identifying the subset $\tilde{\Theta} \in \Theta$ in which the dominance cut condition holds 
(Step~\ref{step:cut_cond_cert1_uni}). Within this subregion the node is then cut, and the subregion is excluded from further processing in \textsc{cutCert} 
(Step~\ref{step:dom_cut_cert_uni}). Subsequently, $\Theta$ is updated to exclude $\tilde{\Theta}$ (Step~\ref{step:cut_cond_cert2_uni}). 

For the remaining part of $\Theta$, the algorithm checks whether all binary variables are fixed. 
If so, it updates the upper bound and the best solution (Step~\ref{step:int_cut_cert2_uni}) and terminates by adding the subregion to $\mathcal{S}$ (Step~\ref{step:int_cut_st_cert_uni}). Otherwise, the node is branched by selecting a relaxed binary variable (Step~\ref{step:branch1_cert_uni}), and the two new nodes are stored in $\mathcal{T}$ for further exploration  (Step~\ref{step:append_cert_uni}). 
The procedures for selecting branching indices and sorting nodes, both parameter-dependent, are detailed in Sections~\ref{subsec:branching_cert_uni} and~\ref{subsec:appending_cert_uni}, respectively. 

\begin{algorithm}[htbp]
\caption{\textsc{cutCert}: Cut-condition evaluation and branching in the certification framework} 
\label{alg:cut_cond_cert_uni}
\begin{algorithmic}[1]
\Require $\left( \Theta, \mathcal{T}, \kappa_{tot}, \bar{J} \right )$, $\barbelow{J}$, $\barbelow{x}$,  $\mathcal{A}$,  $(\mathcal{B}_0$, $\mathcal{B}_1)$, $\mathcal{S}$ 
\Ensure $\mathcal{S}$ 
\vspace{.06cm}
\State $\tilde{\Theta} \leftarrow  \{\theta \in \Theta \;|\; \barbelow{J}(\theta) \geq \bar{J}(\theta)\}$ \label{step:cut_cond_cert1_uni} 
\If {$\tilde{\Theta} \neq \emptyset$} \label{step:cut_cond_dom_uni}
\State {Push $ \left (\tilde{\Theta}, \mathcal{T}, \kappa_{\text{tot}}, \bar{J}\right )$ to $\mathcal{S}$}
\label{step:dom_cut_cert_uni}
\Comment{\textcolor{darkgray}{\textit{dominance cut}}}
\EndIf
\vspace{-0.1cm}
\State $\Theta \leftarrow \{\theta \in \Theta \;|\; \barbelow{J}(\theta) < \bar{J}(\theta)\}$ \label{step:cut_cond_cert2_uni} 
\If {$\Theta \neq \emptyset$}
\label{step:nodom_cut_cert_uni}
\If {all relaxed binary constraints are active} 
\label{step:int_cut_cert_uni}
\State $\bar{J}(\theta) \leftarrow \barbelow{J}(\theta)$, $\bar{x}(\theta) \leftarrow \barbelow{x}(\theta)$ \label{step:int_cut_cert2_uni}
\Comment{\textcolor{darkgray}{\textit{integer-feasibility cut}}}

\State {Push $ \left ( \Theta, \mathcal{T}, \kappa_{\text{tot}}, \bar{J}\right )$ to $\mathcal{S}$}
\label{step:int_cut_st_cert_uni}
\Else \label{step:cut_cond_branch_uni}
\Comment{\textcolor{darkgray}{\textit{branching}}}
\State{$\{(\Theta^k,k)\}_{k=1}^{N_b}$ $\leftarrow$ \textsc{branchIndCert}($\barbelow{x}(\theta), \mathcal{B}, \Theta$)} 
\label{step:branch1_cert_uni}
\vspace{0.04cm} 
\For {$k \in \mathbb{N}_{1:N_b} $} %
\vspace{0.04cm} 
\State \Longunderstack[l]{$\mathcal{S} \leftarrow$ \textsc{sortCert}$((\mathcal{B}_0 \cup \{k\},\mathcal{B}_1 ), (\mathcal{B}_0,\mathcal{B}_1 \cup \{k\})$, \\$(\Theta^k, \mathcal{T}, \kappa_{\text{tot}}, \bar{J}), \barbelow{J}(\theta), \mathcal{S} )$} 
\label{step:append_cert_uni}  
\EndFor
\EndIf
\EndIf
\State \textbf{return} $\mathcal{S}$
\end{algorithmic}
\end{algorithm}

When comparing the value functions  at Steps~\ref{step:cut_cond_cert1_uni} and~\ref{step:cut_cond_cert2_uni} in Algorithm~\ref{alg:cut_cond_cert_uni}, the resulting partitioning of $\Theta$ depends on the type of these functions:
\begin{itemize}[leftmargin=*]
    \item For MILPs, $\barbelow{J}(\theta)$ and $\bar{J}(\theta)$ are affine in $\Theta$. Consequently, $\Theta$ is decomposed into two polyhedral parameter sets.
    \item For MIQPs,  $\barbelow{J}(\theta)$ and $\bar{J}(\theta)$ are quadratic in $\Theta$. As a result, this partitioning introduces quadratic cuts to $\Theta$. Let 
$\barbelow{J}(\theta) = \theta^T \barbelow{Q} \theta + \barbelow{R} \theta + \barbelow{S}$ and $\bar{J}(\theta) = \theta^T \bar{Q} \theta + \bar{R} \theta + \bar{S}$ within $\Theta$. 
Their difference is 
\begin{equation} \label{eq:eq_obj_diff_qp_uni}
\tilde{J}(\theta ) = \barbelow{J}(\theta ) - \bar{J}(\theta ) = \theta^T \tilde{Q} \theta + \tilde{R} \theta + \tilde{S}, \hspace{.2cm} \forall \theta \in \Theta,
\end{equation}%
where $\tilde{Q} = \barbelow{Q} - \bar{Q}$, $\tilde{R} = \barbelow{R} - \bar{R}$, and $\tilde{S} = \barbelow{S} - \bar{S}$. The function $\tilde{J}(\theta)$ is generally an indefinite (and therefore non-convex) quadratic function. 
 To verify whether \(\tilde{J}(\theta) \geq 0\) over \(\Theta\), one can compute its minimum value by solving the following potentially indefinite QP: 

\begin{equation} \label{eq:QP_compare_uni}
\tilde{J}^* = \min_{\theta \in \Theta} \tilde{J}(\theta),
\end{equation}
where $\tilde{J}^* \geq 0$ confirms that $\barbelow{J}(\theta) \geq \bar{J}(\theta)$ throughout $\Theta$. Such an indefinite QP can be solved using, e.g., Gurobi~\cite{gurobi}. 
If $\tilde{J}^* < 0$, then $\Theta$ is further  partitioned based on the quadratic condition $\tilde{J}(\theta) \geq 0$. This introduces nonlinear inequalities in the definition of the partition, resulting in a non-polyhedral partitioning of $\Theta$. 
\end{itemize}

\subsection{Branching strategies} \label{subsec:branching_cert_uni}
Similar to the online \bnb, various branching strategies can be incorporated into the certification framework. Since the relaxation's solution \( \barbelow{x}(\theta) \) is an affine function of $\theta$ in a region \( \Theta \) for both MILPs and MIQPs, i.e., there exist matrices \( \barbelow{F} \) and \( \barbelow{g} \) such that \( \barbelow{x}(\theta) = \barbelow{F} \theta + \barbelow{g} \), the score function \( s(\theta) \) is also generally parameter-dependent. Thus, comparing score functions within \( \Theta \) results in further partitioning of \( \Theta \).


The \textsc{branchIndCert} procedure for selecting the branching index in the certification framework, one of the contributions of this paper, is detailed in Algorithm~\ref{alg:branching_cert_uni}. This algorithm corresponds to 
Algorithm~\ref{alg:branching_on_uni}, which accommodates parameter-dependent branching-index selection over \(\Theta\). It begins by identifying the set of branching candidates \(\bar{B}\) at Step~\ref{step:B_score_cert}. 
It then computes the score function $s_i(\theta)$ for each $i \in \bar{\mathcal{B}}$ (using, e.g., Algorithm~\ref{alg:mostinfeas_cert_uni}). 
At Step~\ref{step:branch_partition_cert_uni}, the algorithm identifies a subset $\Theta^k \subseteq \Theta$ where the candidate binary variable $k$ achieves the highest score $s_k(\theta)$ within $\Theta^k$. It then stores this subset in the output list \( \mathcal{F}_b \) at Step~\ref{step:branch_pushF_cert_uni}. 
The partitioning of $ \Theta $ at Step~\ref{step:branch_partition_cert_uni} is performed using hyperplanes if the score functions can be expressed in an affine form, i.e., $ s_i(\theta) \triangleq C_i \theta + d_i$, $\forall i \in \bar{\mathcal{B}}$, where $ C_i \in \mathbb{R}^{1 \times n_{\theta}} $ and $ d_i \in \mathbb{R} $.

\begin{algorithm}[htbp]
\caption{\textsc{branchIndCert}: Branching-index selection in the certification framework} 
\label{alg:branching_cert_uni}
\begin{algorithmic}[1]
\Require $\barbelow{x}(\theta)$, $\mathcal{B}$, $\Theta$
\Ensure $\mathcal{F}_b=\{(\Theta^k,k )\}_k$ 
\vspace{.06cm}
\State $\bar{\mathcal{B}} \leftarrow  $ $\{k \in \mathcal{B} \hspace{.1cm} | \hspace{.1cm} \barbelow{x}_k \notin \{0,1\} \}$
\label{step:B_score_cert}
\State{\textbf{for} $i \in \bar{\mathcal{B}}$, compute $s_i(\theta)$} 
\label{step:def_score_cert_uni}	
\For {$k$ in $\bar{\mathcal{B}}$} \label{step:for_b_bar_cert_uni}%
\State {$\Theta^{k} \leftarrow \{\theta \in \Theta \;|\; s_k(\theta) \geq s_i(\theta), \; \forall i \in \bar{\mathcal{B}} \setminus \{k\} \}$} 
\label{step:branch_partition_cert_uni}
\If {$\Theta^{k} \neq \emptyset$} 
\State Push $(\Theta^{k}, k)$ to $\mathcal{F}_b$ 
\label{step:branch_pushF_cert_uni}
\EndIf		
\EndFor	
\State \textbf{return} $\mathcal{F}_b$ \label{Step:returnR_uni}				
\end{algorithmic}
\end{algorithm}


\subsubsection*{Predetermined branching strategies}
When relaxed binary variables are selected in a predetermined order, the score function becomes parameter-independent, i.e., \( C_i = \mathbf{0} \) for all \( i \in \bar{\mathcal{B}} \). In this case, Algorithm~\ref{alg:branching_cert_uni} does not further partition \( \Theta \) but reduces to Algorithm~\ref{alg:branching_on_uni}, selecting the index that gives the highest score over the entire set \( \Theta \).

\subsubsection*{Most infeasible branching strategy}
Analogous to~\eqref{eq:branch_minf_on_uni},
the parameter-dependent score function for MIB is given by
\begin{equation} \label{eq:branch_minf_cert_uni}
    s_i(\theta) = 0.5 - |\barbelow{x}_i(\theta) - 0.5|, \quad \forall i \in \bar{\mathcal{B}}.
\end{equation}
To express \eqref{eq:branch_minf_cert_uni} explicitly as a PWA function compatible with the rest of the framework, it has to be reformulated. This can be done by first partitioning $\Theta$ along the hyperplane $\barbelow{x}_i(\theta) = \barbelow{F}_i \theta + \barbelow{g}_i = 0.5$, resulting in two subsets:
$ \Theta^{i_1} = \{\theta \in \Theta \mid \barbelow{F}_i \theta + \barbelow{g}_i \leq 0.5\}$ and $\Theta^{i_2} = \{\theta \in \Theta \mid \barbelow{F}_i \theta + \barbelow{g}_i > 0.5\}$, $\forall i \in \bar{\mathcal{B}}$.
Within each subset, \( s_i(\theta) \) is affine: 
\begin{align} \label{eq:minf_score_cert_uni} s_i(\theta) =  \begin{cases}   \barbelow{F}_i \theta + \barbelow{g}_i,  & \forall \theta \in \Theta^{i_1}
\\[0.06cm]  -\barbelow{F}_i \theta - \barbelow{g}_i + 1, 
 & \forall \theta \in \Theta^{i_2} 
 \end{cases}\end{align} 

The \textsc{mostInfScoreCert} procedure, which computes the affine score function for MIB, is presented in Algorithm~\ref{alg:mostinfeas_cert_uni}. 
The algorithm maintains two lists, \( \mathcal{S}_b \) and \( \mathcal{F}_b \), to manage the candidate and final regions, respectively. 
At each iteration, the algorithm extracts a tuple \( (\Theta, i_b, C, d) \) from \( \mathcal{S}_b \), where \( \Theta \) is the current parameter set, \( i_b \) is an index used to iterate over the elements of the candidate set \( \bar{\mathcal{B}} \), and \( s(\theta) = C \theta + d \) is the corresponding affine score function. If \( i_b \) is within the range of \( \bar{\mathcal{B}} \), the element \( i = \bar{\mathcal{B}}_{i_b} \) (the \( i_b \)-th element of \( \bar{\mathcal{B}} \)) is selected (Step~\ref{step:minf_cert_j_uni}) and the parameter set is partitioned into two subsets 
(Step~\ref{step:minfb_partitioned_cert}). For each non-empty subset, the corresponding affine score function is computed using~\eqref{eq:minf_score_cert_uni} (Steps~\ref{step:minfb_Theta1_cert} and~\ref{step:minfb_Theta2_cert}). The new subsets are pushed back into \( \mathcal{S}_b \) with the incremented index \( i_b + 1 \) (Steps~\ref{step:minf_cert_pushS1} and~\ref{step:minf_cert_pushS2}). If the index $i_b$ exceeds the size of \( \bar{\mathcal{B}} \) (indicating that all elements in \( \bar{\mathcal{B}} \) have been processed within the region), the parameter set and score function are added to \( \mathcal{F}_b \) (Step~\ref{step:minfb_pushF_cert}). This process results in a partition of \( \Theta \) with the corresponding PWA score functions. The output of this procedure is used in Step~\ref{step:def_score_cert_uni} of Algorithm~\ref{alg:branching_cert_uni} to compute the MIB score function.

\begin{algorithm}[htbp]    \caption{\textsc{mostInfScoreCert}: Most infeasible branching score for the certification framework} 
\label{alg:mostinfeas_cert_uni}
\begin{algorithmic}[1]
\Require 
$\barbelow{x} = \barbelow{F} \theta + \barbelow{g}$, $\bar{\mathcal{B}}$, $\Theta$ 
\Ensure $\mathcal{F}_b = \{ (\Theta^j, C^j, d^j )\}_j$ 
\vspace{.06cm}
\State Push $(\Theta, 1, \texttt{NaN}, \texttt{NaN})$ to $\mathcal{S}_b$
\While{$\mathcal{S}_b \neq \emptyset$}
\State Pop $(\Theta, i_b, C, d)$ from $\mathcal{S}_b$ \label{step:minfb_popS_cert}
\If{$i_b \leq |\bar{\mathcal{B}}|$ }
\State{$i \leftarrow \bar{\mathcal{B}}_{i_b}$}
\label{step:minf_cert_j_uni}
\State \Longunderstack[l]{$\Theta^{i_1} = \{\theta \in \Theta \hspace{.1cm} | \hspace{.1cm}  \barbelow{x}_i (\theta) \leq 0.5 \} $  \\ $\Theta^{i_2} = \{\theta \in \Theta \hspace{.1cm} | \hspace{.1cm}  \barbelow{x}_i (\theta) > 0.5 \}$} 
\label{step:minfb_partitioned_cert}
\If {$\Theta^{i_1} \neq \emptyset$} 
\State{$C_i \leftarrow \barbelow{F}_i$, $d_i \leftarrow \barbelow{g}_i$} \label{step:minfb_Theta1_cert}
\State Push $(\Theta^{i_2}, i_b+1, C, d)$ to $\mathcal{S}_b$ \label{step:minf_cert_pushS1}
\EndIf
\If {$\Theta^{i_2} \neq \emptyset$}
\State{$C_i \leftarrow -\barbelow{F}_i$, $d_i \leftarrow -\barbelow{g}_i + 1$} \label{step:minfb_Theta2_cert}
\State Push $(\Theta^{i_2}, i_b+1, C, d)$ to $\mathcal{S}_b$ \label{step:minf_cert_pushS2}
\EndIf
\Else
\State Push $(\Theta, C, d)$ to $\mathcal{F}_b$ \label{step:minfb_pushF_cert}
\EndIf
\EndWhile
\State \textbf{return} $\mathcal{F}_b$ \label{step:minfb_return_cert_uni}		
\end{algorithmic}
\end{algorithm}


\subsection{Node-selection strategies}
\label{subsec:appending_cert_uni}
Similar to the online \textsc{B\&B}, various node-selection strategies can be incorporated into the certification framework. Since the sorting criterion \( \rho(\eta(\theta)) \) is generally parameter-dependent, sorting nodes in \( \mathcal{T} \) based on \( \rho(\eta(\theta)) \) within a region $\Theta$ can further partition $\Theta$.  In~\cite{shoja2022overall}, DF was used to certify the computational complexities of MIQPs, while~\cite{shoja2022exact} explored different node-selection strategies for MILPs. In this work, we extend the results of~\cite{shoja2022exact} to the unified certification framework in Algorithm~\ref{alg:cert_uni}.

The \textsc{sortCert} procedure, which appends new nodes to $\mathcal{T}$ based on the node-selection strategy within the certification framework, is detailed in Algorithm~\ref{alg:append_cert_uni}. This algorithm corresponds to 
Algorithm~\ref{alg:append_on_uni}, which accommodates parameter-dependent node sorting over $\Theta$. %
At Step~\ref{Step:append_compare_cert_uni1}, the sorting criterion of the new nodes is compared with those of existing nodes in $\mathcal{T}$, identifying subsets of $\Theta$ where they have higher priority. The new nodes are stored in the list within this region, which is subsequently added to 
$\mathcal{S}$ at Step~\ref{step:append_off_push_uni}.

\begin{algorithm}[htbp]
\caption{\textsc{sortCert}: Sorted node insertion 
in the certification framework} 
\label{alg:append_cert_uni}
\begin{algorithmic}[1]
\Require $\tilde{\eta}_{0}(\theta), \tilde{\eta}_{1}(\theta)$,  $(\Theta$, 
 $\mathcal{T} = \{\eta_i\}_{i=1}^{N_{\mathcal{T}}}$ , $\kappa_{tot}$, $\bar{J})$, $\barbelow{J}(\theta)$, $\mathcal{S}$  
\Ensure  $\mathcal{S}$ 
\vspace{.06cm}
\State Compute $\rho(\tilde{\eta}_0(\theta))$ based on the node-selection strategy
\For {$i \in \mathbb{N}_{1:N_{\mathcal{T}}}$} 
\label{Step:sort_off_for_u} 
\State {$\Theta^i \leftarrow \{\theta \in \Theta \;|\; \rho(\tilde{\eta}_0(\theta))  \leq  \rho(\eta_i(\theta)) \} $} \label{Step:append_compare_cert_uni1} 
\If {$\Theta^i \neq \emptyset$}
\State{$\mathcal{T}^i = \{ \{\eta_j\}_{j=1}^{i-1}, \tilde{\eta}_{0}, \tilde{\eta}_{1},\{\eta_j\}_{j=i}^{N_{\mathcal{T}}}\}$}\label{step:append_off_updateT_uni}
\State Push $\left ( \Theta^i,\mathcal{T}^i,\kappa_{tot},\bar{J} \right )$ to $\mathcal{S}$ \label{step:append_off_push_uni}
\State {$\Theta \leftarrow \{\theta \in \Theta \;|\; \rho(\tilde{\eta}_0(\theta))  >  \rho(\eta_i(\theta)) \} $} \label{Step:append_compare_cert_uni2} 
\EndIf
\If {$\Theta = \emptyset$} 	\label{step:sort_off_Th_empty_uni}	
\State \textbf{return} $\mathcal{S}$ 
\EndIf	
\EndFor	
\end{algorithmic}
\end{algorithm} 

The sorting criteria for DF and BrF depend solely on the node's level (see Section~\ref{subsec:node_selec_on_uni}) and are, hence, parameter-independent. As a result, at Step~\ref{Step:append_compare_cert_uni1}, \( \Theta^i \) is either the entire set \( \Theta \) or empty. Thus, no additional partitioning of \( \Theta \) occurs, and Algorithm~\ref{alg:append_cert_uni} reduces to  Algorithm~\ref{alg:append_on_uni} for both strategies.  
In contrast, BF employs a parameter-dependent sorting criterion with \( \rho_{\text{BF}}(\eta(\theta)) = \barbelow{J}(\theta) \). Consequently, Algorithm~\ref{alg:append_cert_uni} further partitions \( \Theta \) based on the type of \( \barbelow{J}(\theta) \), using hyperplanes for MILPs and quadratic functions for MIQPs.

\subsection{Complexity certification of heuristics in
B\&B} \label{sec:heuris_uni}
In this section, we investigate how the primal heuristics reviewed in Section~\ref{sec:heuris_on_uni} can be certified and, hence, integrated into the certification framework. To this end, parameter-dependent versions of these heuristics are required.  The results herein not only preserve the theoretical rigor of the framework but also improve the practical efficiency of online \bnb algorithms by tailoring the heuristic choices to the specific problem. 

\subsubsection{Start Heuristics}  
\label{subsec:st_heu_cert_uni}  
In~\cite{shoja2023sheuristic}, parameter-dependent versions of three common start heuristics—RENS, diving, and (objective) feasibility pump—were presented for mp-MILPs. These methods use the solution \( \barbelow{x}(\theta) \) from the solution at the root node to search for an integer-feasible solution \( \hat{x}(\theta) \). Their integration into the complexity-certification algorithm limited to MILPs was also demonstrated in~\cite{shoja2023sheuristic}.  

Except for the objective feasibility pump, which incorporates the objective function, RENS, diving, and feasibility pump rely only on \( \barbelow{x}(\theta) \). Since \( \barbelow{x}(\theta) \) is affine for both mp-MILPs and mp-MIQPs within a region \( \Theta \), the results in~\cite{shoja2023sheuristic} can be applied similarly to mp-MIQPs. Consequently, these 
start heuristics can be integrated into the unified certification framework in Algorithm~\ref{alg:cert_uni} for both MILPs and MIQPs.

\subsubsection{Improvement heuristics} \label{subsec:imp_heu_cert_uni}
Consider a parameter set $\Theta$ and assume that the integer-feasible solution $\bar{x}(\theta)$ has been found within $\Theta$. Since $\bar{x}(\theta)$ is affine in $\Theta$, there exist matrices $\bar{F}$ and $\bar{g}$ such that $\bar{x}(\theta) = \bar{F} \theta + \bar{g}$. To demonstrate that the certification framework can also accommodate improvement heuristics, this paper presents parameter-dependent versions of LB and RINS. 
Analogous to the online \bnb, we restrict these heuristics to mp-MILPs (see Section~\ref{subsec:imp_heu_on_heu}). 

\vspace{.1cm} 
\textit{Certification of the local branching improvement heuristic} \label{subsubsec:LB_heuris_cert_uni}\hspace{.1cm}
The \textsc{LBCert} procedure in Algorithm~\ref{alg:heuris_LB_cert_uni} presents the complexity certification of the LB heuristic.  
The algorithm maintains two lists, \( \mathcal{S}_h \) and \( \mathcal{F}_h \), to manage the candidate and final regions, respectively. Each tuple \( (\Theta, \hat{\kappa}^h, \bar{x}(\theta),  r_n, \texttt{exec}) \) in \( \mathcal{S}_h \) consists of: the current parameter set \( \Theta \), the heuristic accumulated complexity measure \( \hat{\kappa}^h \), the solution \( \bar{x}(\theta) \), the neighborhood size $r_n$, and a boolean variable \( \texttt{exec} \) indicating whether execution should continue. At each iteration, a sub-mpMILP problem $\hat{\mathcal{P}}_{\text{mpLB}}(\theta)$ from the original problem~\eqref{eq:mpMILP_uni} is formulated (Step~\ref{step:LB_cert_prob}) and certified using \textsc{B\&BCert}  in Algorithm~\ref{alg:cert_uni} (Step~\ref{step:LB_cert_uni}).  These subproblems are additionally constrained with the $r_n$-neighborhood constraints~\eqref{eq:LB_neigh_on_uni} and the parameter-dependent  objective cut-off constraints in~\eqref{eq:LB_cutoff_on_uni} 
given by
\begin{equation} \label{eq:LB_cutoff_cert_uni}
c^T x(\theta) \leq (1-\varepsilon) c^T \bar{F} \theta + (1-\varepsilon) c^T \bar{g}, \quad \varepsilon > 0.
\end{equation}

\begin{algorithm}[htbp]
\caption{\textsc{LBCert}: Local branching heuristic certification} 
\label{alg:heuris_LB_cert_uni}
\begin{algorithmic}[1]
\Require {mp-MILP problem~\eqref{eq:mpMILP_uni}, $r_{n_0}$, \(\bar{x}(\theta) = \bar{F} \theta + \bar{g}\), \(\Theta\)} 
\Ensure \(\mathcal{F}_h = \{(\Theta^j, \hat{\kappa}^{h_j}, \hat{x}^j(\theta))\}_j\) 
\vspace{.06cm}
\State Push $(\Theta, 0, \bar{x}(\theta), r_{n_0}, \texttt{true})$ to $\mathcal{S}_h$
\label{step:LB_start_cert}
\While{\(\mathcal{S}_h \neq \emptyset\)} 
\State Pop \((\Theta, \hat{\kappa}^h, \bar{x}(\theta),  r_{n}, \texttt{exec})\) from \(\mathcal{S}_h\)
\label{step:LB_pop_cert}
\If{\(\texttt{exec}\)} \label{step:LB_cert_ifexec}
\State \Longunderstack[l]{Formulate subproblem \(\hat{\mathcal{P}}_{\text{mpLB}}(\theta)\) by adding constraints \\ ~\eqref{eq:LB_neigh_on_uni} and~\eqref{eq:LB_cutoff_cert_uni} to the original  problem~\eqref{eq:mpMILP_uni}}
\label{step:LB_cert_prob}
\vspace{.01cm}
\State \Longunderstack[l]{\(\{(\Theta^i, \hat{\kappa}^i, \hat{J}^i(\theta), \hat{x}^i(\theta))\}_{i=1}^N \gets \textsc{B\&BCert}(\hat{\mathcal{P}}_{\text{mpLB}}, \Theta)\)}
\label{step:LB_cert_uni}
\For{$i \in \mathbb{N}_{1:N}$} \label{step:LB_for_cert}
\If{\(\hat{x}^i(\theta)  \neq \bar{x}(\theta)\) in \(\Theta^i\)} \label{step:LB_cert_if}
\Comment\textcolor{darkgray}{$\hat{F}^i \neq \bar{F}$ and $\hat{g}^i \neq \bar{g}$} 
\State Push \((\Theta^i, \hat{\kappa}^{h} + \hat{\kappa}^i, \hat{x}^i(\theta))\) to \(\mathcal{F}_h\)
\label{step:LB_cert_F}
\Comment\textcolor{darkgray}{\textit{{new solution}}}
\Else
\State \Longunderstack[l]{$\texttt{exec},r_{n} \gets\textsc{neighborSize}(\texttt{exec},r_{n},r_{n_0},\hat{J}^i(\theta))$}\label{step:LB_kneigh_cert}
\State Push \((\Theta^i, \hat{\kappa}^{h} + \hat{\kappa}^i, \bar{x}(\theta), r_{n}, \texttt{exec})\) to \(\mathcal{S}_h\)
\label{step:LB_cert_S} 
\EndIf
\EndFor
\Else
\State Push \((\Theta, \hat{\kappa}^{h}, \texttt{NaN})\) to \(\mathcal{F}_h\)
\label{step:LB_push_cert}
\EndIf
\EndWhile   
\State \Return \(\mathcal{F}_h\)  \label{Step:LB_cert_return}
\end{algorithmic}
\end{algorithm}

If a better integer-feasible solution $\hat{x}^i(\theta) = \hat{F}^i \theta + \hat{g}^i$ ($\neq \bar{x}(\theta)$) is found within a region $\Theta^i$ (Step~\ref{step:LB_cert_if}), it is stored in $\mathcal{F}_h$ (Step~\ref{step:LB_cert_F}). Otherwise, the neighborhood size $r_n$ is adjusted (Step~\ref{step:LB_kneigh_cert}) using the \textsc{neighborSize} procedure from Algorithm~\ref{alg:heuris_LB_on_uni}, and the region is requeued for further exploration.
The algorithm ultimately returns a partition $\{\Theta^j\}_j$, where each $\Theta^j$ contains a potentially found improved solution $\hat{x}^j(\theta)$ and its associated complexity measure $\hat{\kappa}^{h_j}$.

\vspace{.1cm} 
\textit{Certification of the RINS improvement heuristic} \label{subsubsec:RINS_heuris_cert_uni}\hspace{.2cm}  
The \textsc{RINSCert} procedure in Algorithm~\ref{alg:heuris_rins_cert_uni} presents the complexity certification of the RINS heuristic. In this algorithm, a sub-mpMILP $ \hat{\mathcal{P}}_{\text{mpRINS}}(\theta) $ from the original problem~\eqref{eq:mpMILP_uni} is formulated (Step~\ref{step:RINS_cert_prob}) and certified using \textsc{B\&BCert}  in Algorithm~\ref{alg:cert_uni} (Step~\ref{step:rins_cert_uni}). 
This subproblem is additionally constrained by fixing variables where \( \bar{x}_i(\theta) = \barbelow{x}_i(\theta) \) for all \( i \in \mathcal{B} \) (see~\eqref{eq:Rins_add_cons_heu}) while incorporating the objective cut-off constraint~\eqref{eq:LB_cutoff_cert_uni}. Similar to the online RINS in Algorithm~\ref{alg:rins_on_uni}, this heuristic is applied only if a sufficient fraction \( r \in [0,1] \) of the relaxed binary variables can be fixed (Step~\ref{step:rins_on_if_heu}). 

		\begin{algorithm}[htbp]			\caption{\textsc{RINSCert}: RINS heuristic certification} 
		\label{alg:heuris_rins_cert_uni}
		\begin{algorithmic}[1]
        \Require {mp-MILP problem~\eqref{eq:mpMILP_uni}, $\barbelow{x} = \barbelow{F} \theta + \barbelow{g}$, $\bar{x} = \bar{F} \theta + \bar{g}$, $\Theta$} 
        \Ensure $\mathcal{F}_h = \{(\Theta^j, \hat{\kappa}^{h_j}, \hat{x}^j(\theta))\}_j$ 
        \vspace{.06cm}
        		\State $\mathcal{F}_h \gets \emptyset$, $\hat{x} \gets \texttt{NaN}$, $\hat{\kappa}^{h} \gets 0$ 
                \State{$\hat{\mathcal{B}} \leftarrow \{i \in \mathcal{B} \hspace{.1cm} | \hspace{.1cm} \bar{F}_i = \barbelow{F}_i \hspace{.005cm}, \bar{g}_i = \barbelow{g}_i \hspace{.005cm}\}$} \label{step:rins_cert_B_heu}
        \If{$|\hat{\mathcal{B}}| \geq r |\mathcal{B}|$} \label{step:rins_cert_if_heu} 
			
            \State \Longunderstack[l]{Formulate subproblem $\hat{\mathcal{P}}_{\text{mpRINS}}(\theta)$ by adding constraints\\ ~\eqref{eq:Rins_add_cons_heu} and~\eqref{eq:LB_cutoff_cert_uni} to the original  problem~\eqref{eq:mpMILP_uni}} 
            \label{step:RINS_cert_prob}
			\vspace{.02cm}
            \State 
            $\mathcal{F}_h \gets \textsc{B\&BCert}(\hat{\mathcal{P}}_{\text{mpRINS}}, \Theta)$ 
\label{step:rins_cert_uni} 
			\EndIf
			\State \textbf{return} $\mathcal{F}_h$
            \label{Step:rins_cert_return_heu}
		\end{algorithmic}
	\end{algorithm}
    
\vspace{.1cm} 
\textit{Integration of heuristics into the B\&B certification framework} \label{subsubsec:LB_heu_integ_cert} \hspace{.2cm}
Similar to the online B\&B algorithm, a parameter-dependent start heuristic can be invoked in  Algorithm~\ref{alg:cert_uni} after solving the root node (i.e., between Steps~\ref{step:cert_node_uni} and~\ref{step:for_reg_cert_uni}).
Similarly, a parameter-dependent improvement heuristic can be invoked in Algorithm~\ref{alg:cert_uni} after an integer-feasible solution \( \bar{x}(\theta)\) is found for the first time (between Steps~\ref{step:int_cut_cert2_uni} and~\ref{step:int_cut_st_cert_uni} in Algorithm~\ref{alg:cut_cond_cert_uni}).  
The invoked heuristic may further partition \(\Theta\) into regions \(\{\Theta^j\}_{j=1}^{N_h}\). For each $\Theta^j$, if a better integer-feasible solution (\( \hat{x}(\theta) \neq \texttt{NaN}\)) is found, then \( \bar{x}^j(\theta) \) and \( \bar{J}^j(\theta) \) are updated to \( \hat{x}^j(\theta) \) and \( \hat{J}^j(\theta) \), respectively, within $\Theta^j$. Additionally, the complexity measure  $\hat{\kappa}^{h_j}$ associated with $\Theta^j$ is added to the accumulated complexity measure \( \kappa_{\text{tot}} \), accounting for the overall computational effort in $\Theta^j$.  
Similar to the online algorithm, these heuristics can be invoked once or multiple times during the certification process. However, to ensure correct certification, all heuristic invocations must occur identically in both Algorithms~\ref{alg:BnB_on_uni} and~\ref{alg:cert_uni}. 

\section{Properties of the unified certification framework} \label{subsec:prop_cert_uni}
In this section, we analyze the properties of the unified complexity-certification framework. Our objective is to establish that the results of Algorithm~\ref{alg:cert_uni}  coincide, pointwise, with those of the online \textsc{B\&B} Algorithm~\ref{alg:BnB_on_uni} for any parameter in $\Theta_0$, for both MILPs and MIQPs. 
To ensure meaningful results, we assume that the \bnb algorithms considered in this work satisfy the following assumption.  
\begin{assumption} \label{ass:order-tree_uni} 
The tree-exploration strategy is identical in both Algorithms~\ref{alg:BnB_on_uni} and~\ref{alg:cert_uni}. Specifically, both algorithms employ the same branching, node-selection, and heuristic strategies, with heuristic invocation performed identically in both. Additionally, the order in which the \( 0 \)-branch and \( 1 \)-branch are explored coincides in both algorithms.
\end{assumption}

To ensure the correctness of Algorithm~\ref{alg:cert_uni}, we first analyze the properties of the helper procedures invoked in the algorithm (Algorithms~\ref{alg:cut_cond_cert_uni}--\ref{alg:append_cert_uni}) over an input parameter set $\Theta$. 

\subsection{Properties of the helper procedures in the framework} \label{subsec:prop_proc_uni}
In this section, we show that the results of the parameter-dependent helper procedures coincide pointwise with their corresponding online counterparts in Algorithm~\ref{alg:BnB_on_uni}. Proofs of the following lemmas are provided in Appendix.

\begin{lemma}[Equivalence of \textsc{sortCert} and \textsc{sort}]  
\label{lem:append_uni}  
Assume that Assumptions~\ref{ass:cert_uni}--\ref{ass:order-tree_uni} hold. Then, Algorithm~\ref{alg:append_cert_uni} (\textsc{sortCert}) partitions $\Theta$ into subsets $\{\Theta^i\}_i$, such that the updated $\mathcal{T}^i$ (stored in $\mathcal{S}$) coincides with the updated $\mathcal{T}$ 
returned by Algorithm~\ref{alg:append_on_uni} (\textsc{sort}) for any fixed $\theta \in \Theta^i$, $\forall i$.  
\end{lemma}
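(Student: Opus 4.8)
The plan is to compare Algorithm~\ref{alg:append_cert_uni} (\textsc{sortCert}) and Algorithm~\ref{alg:append_on_uni} (\textsc{sort}) step by step, tracking the loop index $i$ over the existing pending nodes $\mathcal{T} = \{\eta_j\}_{j=1}^{N_{\mathcal{T}}}$, and to show that for each fixed $\theta$ the insertion position chosen by \textsc{sort} lies in exactly one of the subsets $\Theta^i$ produced by \textsc{sortCert}. First I would invoke Assumption~\ref{ass:order-tree_uni} to note that both algorithms use the same node-selection strategy, hence the same sorting criterion $\rho(\cdot)$, and the same ordering of the $0$- and $1$-branches; in particular the new nodes $\tilde\eta_0,\tilde\eta_1$ and the value $\rho(\tilde\eta_0(\theta)) = \rho(\tilde\eta_1(\theta))$ are computed identically. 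Next I would examine the partitioning performed at Step~\ref{Step:append_compare_cert_uni1}: I claim that after $i$ iterations of the \textbf{for} loop, the ``remaining'' set (updated at Step~\ref{Step:append_compare_cert_uni2}) is exactly $\{\theta \in \Theta \;|\; \rho(\tilde\eta_0(\theta)) > \rho(\eta_j(\theta)),\ j = 1,\dots,i\}$, and the sets $\Theta^1,\dots,\Theta^i$ pushed so far, together with this remaining set, partition $\Theta$. This is a straightforward induction on $i$: at each step $\Theta$ is split into $\Theta^i = \{\rho(\tilde\eta_0(\theta)) \le \rho(\eta_i(\theta))\}$ and its complement within the current $\Theta$, and the first piece is removed from further processing. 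Here I use Assumption~\ref{ass:cert_uni} (parts 2--3) so that $\barbelow{J}(\theta)$, and hence $\rho(\tilde\eta_0(\theta))$ and each $\rho(\eta_j(\theta))$, is PWA (for MILPs) or PWQ (for MIQPs) and the comparison sets are well-defined; for DF and BrF the criterion is parameter-independent so each $\Theta^i$ is either all of $\Theta$ or empty, and for BF one gets genuine (possibly quadratic) cuts.

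Having fixed $\theta \in \Theta^i$ for some $i$, I would then argue that \textsc{sort} run at this $\theta$ selects the same insertion index $i$. By construction $\theta \in \Theta^i$ means $\rho(\tilde\eta_0(\theta)) \le \rho(\eta_i(\theta))$, and $\theta$ belongs to the ``remaining'' set after iteration $i-1$, i.e. $\rho(\tilde\eta_0(\theta)) > \rho(\eta_j(\theta))$ for all $j < i$. This is precisely the condition under which \textsc{sort} (Algorithm~\ref{alg:append_on_uni}) fails the test $\rho(\tilde\eta_0) \le \rho(\eta_j)$ for $j = 1,\dots,i-1$ and first passes it at $j = i$, so it inserts $\tilde\eta_0,\tilde\eta_1$ between $\eta_{i-1}$ and $\eta_i$ and returns. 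The updated list is therefore $\{\{\eta_j\}_{j=1}^{i-1}, \tilde\eta_0, \tilde\eta_1, \{\eta_j\}_{j=i}^{N_{\mathcal{T}}}\}$, which is exactly $\mathcal{T}^i$ built at Step~\ref{step:append_off_updateT_uni} of \textsc{sortCert}. I would also handle the boundary cases: if the loop in \textsc{sortCert} exhausts all $N_{\mathcal{T}}$ nodes with a nonempty remaining set, that case falls under Step~\ref{step:sort_off_Th_empty_uni} only when empty — so I should note that, consistent with the online algorithm's behavior when $\rho(\tilde\eta_0) > \rho(\eta_j)$ for all $j$ (appending at the end), the same append-at-end behavior is assumed in both; alternatively, under Assumption~\ref{ass:cert_uni} the criterion values are such that this residual region is empty. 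Finally, since $\{\Theta^i\}_i$ partitions $\Theta$ by the induction above, and on each $\Theta^i$ the two algorithms agree pointwise, the lemma follows.

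The main obstacle I anticipate is the careful bookkeeping of the ``remaining'' region across iterations — in particular showing that the $\Theta^i$ are interior-disjoint (as required by Definition~\ref{def:partition:uni}) even when several criterion functions $\rho(\eta_j(\theta))$ coincide on a lower-dimensional set, and reconciling the tie-breaking convention ($\le$ versus $<$) used at Steps~\ref{Step:append_compare_cert_uni1} and~\ref{Step:append_compare_cert_uni2} with the convention $\rho(\tilde\eta_0) \le \rho(\eta_i)$ used in \textsc{sort}. As long as both algorithms resolve ties identically (which Assumption~\ref{ass:order-tree_uni} guarantees, since the tree-exploration strategy is identical), the boundary overlaps are confined to sets of measure zero and the interiors remain disjoint, so the partition property and the pointwise equivalence both hold. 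The argument for MIQPs is identical except that the sets $\Theta^i$ may be bounded by quadratic rather than affine surfaces, which does not affect the combinatorial reasoning.
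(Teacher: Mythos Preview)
Your proposal is correct and follows essentially the same approach as the paper's proof: both arguments track the for-loop over $i$, observe that Step~\ref{Step:append_compare_cert_uni1} and Step~\ref{Step:append_compare_cert_uni2} split the current $\Theta$ into $\{\rho(\tilde\eta_0(\theta))\le\rho(\eta_i(\theta))\}$ and its complement, and then verify that for any fixed $\theta\in\Theta^i$ the online \textsc{sort} first satisfies the test at the same index $i$, yielding the identical list $\mathcal{T}^i$. Your treatment is in fact somewhat more careful than the paper's --- you make the induction on $i$ explicit, flag the append-at-end boundary case, and discuss tie-breaking and the MILP/MIQP distinction --- but the core idea is the same.
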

\begin{lemma}[Equivalence of \textsc{mostInfScoreCert} and~\eqref{eq:branch_minf_on_uni}]
\label{lem:most_inf_score_uni}
Assume that Assumption~\ref{ass:cert_uni} holds. Then, Algorithm~\ref{alg:mostinfeas_cert_uni} (\textsc{mostInfScoreCert}) partitions $\Theta$ into $\{ \Theta^j\}_j$, and for each $\Theta^j$, the affine score function satisfies~\eqref{eq:branch_minf_on_uni}. That is, $s^j(\theta) = C^j \theta + d^j = 0.5 - |\barbelow{x}(\theta) - 0.5|$,  $\forall \theta \in \Theta^j$, $\forall j$.
\end{lemma}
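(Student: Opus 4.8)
The plan is to track the execution of Algorithm~\ref{alg:mostinfeas_cert_uni} and argue by induction on the iteration index $i_b$ that the tuples pushed to $\mathcal{S}_b$ and $\mathcal{F}_b$ maintain the invariant: for a tuple $(\Theta', i_b, C, d)$ currently in $\mathcal{S}_b$, the matrices $C$ and $d$ have their rows indexed by $\bar{\mathcal{B}}_1, \dots, \bar{\mathcal{B}}_{i_b - 1}$ correctly assigned so that $C_{\bar{\mathcal{B}}_\ell}\theta + d_{\bar{\mathcal{B}}_\ell} = 0.5 - |\barbelow{x}_{\bar{\mathcal{B}}_\ell}(\theta) - 0.5|$ for all $\theta \in \Theta'$ and all $\ell < i_b$, while $\Theta' \subseteq \Theta$ is exactly the subset of $\Theta$ on which the case distinctions (``$\barbelow{x}_{\bar{\mathcal{B}}_\ell}(\theta)\le 0.5$'' versus ``$>0.5$'') made so far hold simultaneously. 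The base case is the initial push $(\Theta, 1, \texttt{NaN}, \texttt{NaN})$ at Step~1, which is vacuously correct since no rows have been assigned.

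For the inductive step, I would take a tuple $(\Theta', i_b, C, d)$ popped at Step~\ref{step:minfb_popS_cert} with $i_b \le |\bar{\mathcal{B}}|$, set $i = \bar{\mathcal{B}}_{i_b}$, and observe that Step~\ref{step:minfb_partitioned_cert} splits $\Theta'$ into $\Theta^{i_1} = \{\theta \in \Theta' \mid \barbelow{x}_i(\theta) \le 0.5\}$ and $\Theta^{i_2} = \{\theta \in \Theta' \mid \barbelow{x}_i(\theta) > 0.5\}$. Since $\barbelow{x}(\theta) = \barbelow{F}\theta + \barbelow{g}$ is affine (Assumption~\ref{ass:cert_uni}, item~2, inherited via the input), each of these is a polyhedral refinement of $\Theta'$, and their union is $\Theta'$. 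On $\Theta^{i_1}$ we have $|\barbelow{x}_i(\theta) - 0.5| = 0.5 - \barbelow{x}_i(\theta)$, hence $0.5 - |\barbelow{x}_i(\theta)-0.5| = \barbelow{x}_i(\theta) = \barbelow{F}_i\theta + \barbelow{g}_i$, matching the assignment $C_i \leftarrow \barbelow{F}_i$, $d_i \leftarrow \barbelow{g}_i$ at Step~\ref{step:minfb_Theta1_cert}; on $\Theta^{i_2}$ we have $|\barbelow{x}_i(\theta)-0.5| = \barbelow{x}_i(\theta) - 0.5$, hence $0.5 - |\barbelow{x}_i(\theta)-0.5| = 1 - \barbelow{x}_i(\theta) = -\barbelow{F}_i\theta - \barbelow{g}_i + 1$, matching Step~\ref{step:minfb_Theta2_cert}. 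This is precisely~\eqref{eq:minf_score_cert_uni}, and it corresponds pointwise to~\eqref{eq:branch_minf_on_uni} since the latter is evaluated coordinate-wise. The rows of $C,d$ indexed by earlier candidates are unchanged, so the invariant is preserved, and the new tuples are pushed with index $i_b+1$ at Steps~\ref{step:minf_cert_pushS1} and~\ref{step:minf_cert_pushS2}. When $i_b = |\bar{\mathcal{B}}| + 1$, the tuple is moved to $\mathcal{F}_b$ at Step~\ref{step:minfb_pushF_cert} with all $|\bar{\mathcal{B}}|$ rows correctly assigned over the accumulated region, which gives the claimed conclusion for that $\Theta^j$. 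Termination (finiteness of $\{\Theta^j\}_j$) follows because the recursion depth is bounded by $|\bar{\mathcal{B}}|$ and each level at most doubles the number of regions, so $\mathcal{S}_b$ is emptied after finitely many iterations; that $\bigcup_j \Theta^j = \Theta$ and the interiors are disjoint follows from the fact that each split at Step~\ref{step:minfb_partitioned_cert} partitions the popped region.

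The main subtlety, rather than a genuine obstacle, is a bookkeeping point: a single bug-like quirk in the pseudocode (both Steps~\ref{step:minf_cert_pushS1} and~\ref{step:minf_cert_pushS2} push $\Theta^{i_2}$) must be read as the intended behavior of pushing $\Theta^{i_1}$ in the first branch; I would state this reading explicitly. Beyond that, I need to be careful that the case boundary $\barbelow{x}_i(\theta) = 0.5$ lies in $\Theta^{i_1}$ (a closed halfspace) while $\Theta^{i_2}$ is open, so the two pieces of~\eqref{eq:minf_score_cert_uni} agree on the shared boundary (both give $0.5$), meaning the resulting PWA score function is continuous and well-defined as a function on $\Theta$ regardless of the tie-breaking convention — which is what lets us speak of ``the'' score function $s^j(\theta)$ on each $\Theta^j$. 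Everything else is a routine unrolling of the algorithm together with the elementary identity for $0.5 - |t - 0.5|$ on the two sides of $t = 0.5$.
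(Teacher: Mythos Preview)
Your proposal is correct and follows essentially the same approach as the paper's proof: both verify that the split at Step~\ref{step:minfb_partitioned_cert} yields the two affine pieces in~\eqref{eq:minf_score_cert_uni}, which coincide pointwise with~\eqref{eq:branch_minf_on_uni}, and then conclude by iterating over all indices in $\bar{\mathcal{B}}$. Your version is in fact more thorough---you make the induction on $i_b$ explicit, spell out termination and the partition property, note the boundary agreement at $\barbelow{x}_i(\theta)=0.5$, and flag the typo in Steps~\ref{step:minf_cert_pushS1}--\ref{step:minf_cert_pushS2}---whereas the paper's proof treats these points more informally.
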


\begin{lemma}
\label{lem:branch_uni} \textit{(Equivalence of \textsc{branchIndCert} and \textsc{branchInd})}:    
Assume that Assumptions~\ref{ass:cert_uni}--\ref{ass:order-tree_uni} hold. Then, Algorithm~\ref{alg:branching_cert_uni} (\textsc{branchIndCert}) partitions $\Theta$ into subsets $\{\Theta^k\}_k$ and returns a PWC branching index $k(\theta)$, 
such that the selected $k$ for   $\Theta^k$ 
coincides with that  selected by Algorithm~\ref{alg:branching_on_uni} (\textsc{branchInd}) for any fixed $\theta \in \Theta^k$, $\forall k$.  
\end{lemma}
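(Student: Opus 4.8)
The plan is to argue by comparing the two algorithms line-by-line, exploiting the fact that the online procedure \textsc{branchInd} (Algorithm~\ref{alg:branching_on_uni}) is precisely the pointwise specialization of \textsc{branchIndCert} (Algorithm~\ref{alg:branching_cert_uni}). First I would establish that the set of branching candidates agrees: in \textsc{branchIndCert} the set $\bar{\mathcal{B}}$ is formed from the relaxed binary variables $i$ with $\barbelow{x}_i(\theta)\notin\{0,1\}$, and since $\barbelow{x}(\theta)$ is affine on $\Theta$ by Assumption~\ref{ass:cert_uni} (part 2), the index set $\bar{\mathcal{B}}$ is constant over (the interior of) $\Theta$ and coincides with the candidate set computed by \textsc{branchInd} at each fixed $\theta\in\Theta$. (If one wants to be careful about boundary parameters where an extra coordinate happens to be integral, one notes this forms a measure-zero set and can be absorbed into the partition; I would remark on this only briefly, since the online and certified runs make the same tie-breaking choice.)

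Next I would invoke Lemma~\ref{lem:most_inf_score_uni} (or, for a predetermined order, the trivial observation $C_i=\mathbf{0}$) to conclude that after the score-computation step (Step~\ref{step:def_score_cert_uni}), $\Theta$ has been partitioned into subsets on each of which every score $s_i(\theta)=C_i\theta+d_i$ is affine and equals the online score $s_i$ from \eqref{eq:branch_minf_on_uni} pointwise. On each such subset, Step~\ref{step:branch_partition_cert_uni} of \textsc{branchIndCert} carves out $\Theta^k=\{\theta\mid s_k(\theta)\ge s_i(\theta)\ \forall i\in\bar{\mathcal{B}}\setminus\{k\}\}$; since these are finitely many linear inequalities, each $\Theta^k$ is polyhedral, the collection covers $\Theta$, and distinct $\Theta^k$ overlap only on their boundaries, so $\{\Theta^k\}_k$ is a (polyhedral) partition of $\Theta$ and $k(\theta)$ is PWC. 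For any fixed $\theta\in\Theta^k$ we have $s_k(\theta)\ge s_i(\theta)$ for all $i\ne k$, which is exactly the selection rule at Step~\ref{step:select_branch_on_uni} of \textsc{branchInd}; hence \textsc{branchInd} run at that $\theta$ returns the same index $k$, provided the two procedures break score ties the same way. The tie-breaking consistency is guaranteed by Assumption~\ref{ass:order-tree_uni}, which stipulates identical branching strategies in both algorithms — I would make this dependence explicit, since without a common tie-break rule the "coincides" claim could fail on the overlaps.

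The main obstacle I anticipate is handling the non-uniqueness on the overlaps $\Theta^k\cap\Theta^{k'}$ and, relatedly, the case where $\barbelow{x}_i(\theta)$ touches $0$ or $1$ on a lower-dimensional slice of $\Theta$ (so $\bar{\mathcal{B}}$ is not literally constant on all of $\Theta$). The clean way to dispose of this is to note that Algorithm~\ref{alg:branching_cert_uni} and the online algorithm resolve these ambiguities by the \emph{same} deterministic rule (fixed ordering of indices / fixed priority in the \texttt{for}-loop), so the returned index is well-defined and identical in both; I would state this as a consequence of Assumption~\ref{ass:order-tree_uni} rather than re-deriving it. With that settled, the equivalence follows by chaining the three observations — identical candidate sets, identical affine scores (Lemma~\ref{lem:most_inf_score_uni}), identical argmax-with-tie-break rule — and the partition/PWC properties follow from the polyhedrality of each $\Theta^k$.
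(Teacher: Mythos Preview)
Your proposal is correct and follows essentially the same approach as the paper's own proof: establish that the candidate set $\bar{\mathcal{B}}$ coincides (via Assumption~\ref{ass:cert_uni}), note that the partition $\Theta^k=\{\theta\mid s_k(\theta)\ge s_i(\theta)\ \forall i\}$ makes $k$ the maximizer on each piece, and observe that \textsc{branchInd} applied pointwise selects the same $k$. You are in fact more careful than the paper on two points---you explicitly invoke Lemma~\ref{lem:most_inf_score_uni} to justify that the certified and online scores agree, and you address tie-breaking and boundary overlaps via Assumption~\ref{ass:order-tree_uni}---whereas the paper's proof leaves these implicit.
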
  

\begin{lemma}[Equivalence of \textsc{cutCert} and \textsc{cut}] 
\label{lem:cut_cond_uni}  
Assume that Assumptions~\ref{ass:cert_uni}--\ref{ass:order-tree_uni} hold. Then, Algorithm~\ref{alg:cut_cond_cert_uni} (\textsc{cutCert}) partitions $\Theta$ into subsets $\{\Theta^i\}_i$, such that the updated $\mathcal{T}^i$ and $\bar{J}^i(\theta)$ (stored in $\mathcal{S}$) coincide with the updated $\mathcal{T}$ and $\bar{J}$ returned by Algorithm~\ref{alg:cut_cond_on_uni} (\textsc{cut}) for any fixed $\theta \in \Theta^i$, $\forall i$.  
\end{lemma}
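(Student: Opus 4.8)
The plan is to trace \textsc{cutCert} (Algorithm~\ref{alg:cut_cond_cert_uni}) through each of its branches and, on each resulting sub-region of $\Theta$, match its output to the branch that the online \textsc{cut} (Algorithm~\ref{alg:cut_cond_on_uni}) would take for any $\theta$ in that sub-region; the branching branch is then handled by composing the partitions delivered by Lemma~\ref{lem:branch_uni} and Lemma~\ref{lem:append_uni}. The starting observation is that, by Assumption~\ref{ass:cert_uni}, the data $\barbelow{J}(\theta)$, $\barbelow{x}(\theta)$, $\mathcal{A}$ fed to \textsc{cutCert} are exactly those produced by \textsc{solveCert} on $\Theta$: $\barbelow{x}(\theta)$ is affine on $\Theta$, $\barbelow{J}(\theta)$ is affine (MILP) or quadratic (MIQP) on $\Theta$, the active set $\mathcal{A}$ is constant on $\Theta$, and---by the consistency part of Assumption~\ref{ass:cert_uni} together with Assumption~\ref{ass:order-tree_uni}---these coincide pointwise with the output of the online \textsc{solve}. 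Hence the inputs handed to \textsc{cut} and \textsc{cutCert} agree for every fixed $\theta \in \Theta$, and the remaining task is purely to follow the control flow.

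First I would treat Steps~\ref{step:cut_cond_cert1_uni}--\ref{step:cut_cond_cert2_uni}: $\Theta$ is split into $\tilde\Theta = \{\theta \in \Theta \mid \barbelow{J}(\theta) \geq \bar{J}(\theta)\}$ and its complement, which is a polyhedral split for MILPs and a single (possibly non-polyhedral) quadratic cut for MIQPs, as described around~\eqref{eq:eq_obj_diff_qp_uni}. For every $\theta \in \tilde\Theta$ the online \textsc{cut} enters the dominance branch at Step~\ref{step:dominc_on_uni} and leaves $\mathcal{T}$, $\bar{J}$, $\bar{x}$ unchanged; this is precisely the tuple $(\tilde\Theta,\mathcal{T},\kappa_{\text{tot}},\bar J)$ that \textsc{cutCert} pushes to $\mathcal{S}$ at Step~\ref{step:dom_cut_cert_uni}. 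On the complement, because $\mathcal{A}$ is constant the predicate ``all relaxed binary constraints are active'' has the same truth value for every $\theta$ in that set, so no further partitioning arises from this test and both algorithms enter the same branch.

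In the integer-feasibility case both algorithms perform the pointwise update $\bar J \leftarrow \barbelow{J}$, $\bar x \leftarrow \barbelow{x}$ and leave $\mathcal{T}$ untouched, so the stored $(\mathcal{T},\bar J(\theta))$ agrees with the online output. In the branching case, \textsc{branchIndCert} partitions the current set into $\{\Theta^k\}$ with a branch index $k$ that, by Lemma~\ref{lem:branch_uni}, coincides on each $\Theta^k$ with the index chosen by \textsc{branchInd}; then, for each $\Theta^k$, \textsc{sortCert} further partitions it so that, by Lemma~\ref{lem:append_uni}, the resulting $\mathcal{T}^i$ equals the $\mathcal{T}$ returned by \textsc{sort} for every $\theta$ in the piece, while $\bar J$ is carried through unchanged---exactly as \textsc{cut} does at its branching branch.

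It then remains to assemble the pieces. The collection $\{\Theta^i\}_i$ reported by \textsc{cutCert} is the common refinement of the dominance split, the branching split of Lemma~\ref{lem:branch_uni}, and the sorting split of Lemma~\ref{lem:append_uni}; each of these is a finite partition (Assumption~\ref{ass:cert_uni}, part~1, and the two lemmas), so their refinement is a finite partition of $\Theta$ with pairwise-disjoint interiors and union $\Theta$. On each cell the online \textsc{cut} executes exactly one of its three branches, and by the case analysis above the $(\mathcal{T}^i,\bar J^i(\theta))$ stored in $\mathcal{S}$ equals the $(\mathcal{T},\bar J)$ returned by \textsc{cut} for every fixed $\theta$ in that cell. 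I expect the main obstacle to be the partition bookkeeping---specifically, confirming that the integer-feasibility test contributes no split (which is where the constancy of $\mathcal{A}$ guaranteed by Assumption~\ref{ass:cert_uni} is indispensable) and that the nested branching-then-sorting splits glue into a single finite partition consistent with the online control flow.
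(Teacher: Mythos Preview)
Your proposal is correct and follows essentially the same approach as the paper's proof: a case analysis on the three branches of \textsc{cutCert} (dominance, integer-feasibility, branching), using the constancy of the active set $\mathcal{A}$ on $\Theta$ to conclude that the integer-feasibility test introduces no further split, and invoking Lemmas~\ref{lem:branch_uni} and~\ref{lem:append_uni} in sequence for the branching case. Your treatment is, if anything, slightly more explicit than the paper's about the partition bookkeeping (the common-refinement structure and finiteness), but the logical skeleton is the same.
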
 
\subsection{Properties of the core   framework} 
After establishing the one-to-one correspondence between the helper procedures in Algorithm~\ref{alg:cert_uni} and their online counterparts, we can now analyze the properties of the framework as a whole. We begin by discussing the decomposition steps, followed by the analysis of the algorithm's properties.

To summarize, 
the parameter set is partitioned at each iteration of Algorithm~\ref{alg:cert_uni} through the following steps:
\begin{enumerate}[label=(\roman*)]  
    \item \label{item:part_uni0}
    Relaxation certification using \textsc{solveCert} (Step~\ref{step:cert_node_uni})
    \item \label{item:part_uni1} 
    Evaluation of the dominance cut (Algorithm~\ref{alg:cut_cond_cert_uni}; Steps~\ref{step:cut_cond_cert1_uni} and~\ref{step:cut_cond_cert2_uni}).  
    \item \label{item:part_uni2} Selection of the branching index 
 (Algorithm~\ref{alg:branching_cert_uni}; Step~\ref{step:branch_partition_cert_uni}).  
    \item \label{item:part_uni3} Sorting and storing nodes (Algorithm~\ref{alg:append_cert_uni}; Steps~\ref{Step:append_compare_cert_uni1} and~\ref{Step:append_compare_cert_uni2}).  
\end{enumerate}  

The following lemma establishes the correctness of the partitioning performed in Algorithm~\ref{alg:cert_uni}.  

\begin{lemma}[Maintenance of complete partition]  
\label{lem:maintenance_uni}  
Assume that Assumption~\ref{ass:cert_uni} holds. At any iteration of Algorithm~\ref{alg:cert_uni}, the union of regions in $\mathcal{S}$ and $\mathcal{F}$ forms a partition of $\Theta_0$.  
\end{lemma}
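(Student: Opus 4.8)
The plan is to argue by induction on the number of completed iterations of the \texttt{while} loop at Step~\ref{step:while_uni}, using the elementary \emph{refinement fact}: if $\{\Theta^i\}_i$ is a partition of $\Theta$ in the sense of Definition~\ref{def:partition:uni} and each cell $\Theta^i$ is in turn partitioned by a finite collection $\{\Theta^{i,k}\}_k$, then $\{\Theta^{i,k}\}_{i,k}$ is a partition of $\Theta$. For the base case, after the initialization at Step~\ref{step:init_S_uni} we have $\mathcal{S}=\{(\Theta_0,\{(\emptyset,\emptyset)\},0,\infty)\}$ and $\mathcal{F}=\emptyset$, so the regions in $\mathcal{S}\cup\mathcal{F}$ are exactly $\{\Theta_0\}$, trivially a partition of $\Theta_0$. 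For the inductive step, suppose the regions in $\mathcal{S}\cup\mathcal{F}$ form a partition of $\Theta_0$ at the start of an iteration. The iteration pops a single tuple with region $\Theta$ from $\mathcal{S}$ (Step~\ref{step:pop_reg_uni}) and leaves all other regions untouched; if $\mathcal{T}=\emptyset$ it simply moves $\Theta$ to $\mathcal{F}$ (Step~\ref{step:append_uni}) and the property is preserved. It therefore suffices to show that when $\mathcal{T}\neq\emptyset$ the regions pushed to $\mathcal{S}$ (via \textsc{cutCert}) in place of $\Theta$ form a finite partition of $\Theta$; the refinement fact then closes the induction.

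To establish that, I would trace the decomposition steps~\ref{item:part_uni0}--\ref{item:part_uni3} applied to $\Theta$ during one iteration and verify that each one partitions the region it acts on. Step~\ref{step:cert_node_uni}: by Assumption~\ref{ass:cert_uni}(1), \textsc{solveCert} partitions $\Theta$ into a finite collection $\{\Theta^j\}_{j=1}^N$. For each $\Theta^j$, \textsc{cutCert} (Algorithm~\ref{alg:cut_cond_cert_uni}) is invoked, and I would check that every branch of it replaces the current region by finitely many subregions covering it with pairwise disjoint interiors: (i) the dominance step (Steps~\ref{step:cut_cond_cert1_uni}--\ref{step:cut_cond_cert2_uni}) splits $\Theta^j$ into the disjoint sets $\{\theta\in\Theta^j:\barbelow{J}(\theta)\ge\bar{J}(\theta)\}$ and $\{\theta\in\Theta^j:\barbelow{J}(\theta)<\bar{J}(\theta)\}$, whose union is $\Theta^j$; (ii) when all binaries are fixed, the remaining region is pushed to $\mathcal{S}$ unchanged (Step~\ref{step:int_cut_st_cert_uni}); (iii) otherwise \textsc{branchIndCert} (Step~\ref{step:branch_partition_cert_uni}) covers the remaining region by the sets $\Theta^k=\{\theta:s_k(\theta)\ge s_i(\theta)\ \forall i\in\bar{\mathcal{B}}\setminus\{k\}\}$, whose union is the whole region since a highest-scoring candidate always exists and whose interiors are disjoint (a fixed tie-breaking rule removes overlaps on the tie set), after which \textsc{sortCert} (Steps~\ref{Step:append_compare_cert_uni1} and~\ref{Step:append_compare_cert_uni2}) successively splits each $\Theta^k$ along the comparisons $\rho(\tilde{\eta}_0(\theta))\le\rho(\eta_i(\theta))$ versus $\rho(\tilde{\eta}_0(\theta))>\rho(\eta_i(\theta))$ into finitely many subregions before pushing them to $\mathcal{S}$. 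If the most infeasible branching rule is used, Step~\ref{step:branch_partition_cert_uni} additionally relies on \textsc{mostInfScoreCert}, whose splits along $\{\barbelow{x}_i(\theta)=0.5\}$ are again binary partitions; by Lemma~\ref{lem:most_inf_score_uni} that procedure returns a finite partition of its input. Since only finitely many such elementary steps are applied and each partitions the region it acts on, composing them via the refinement fact yields a finite partition of each $\Theta^j$, hence of $\Theta$.

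If a parameter-dependent start or improvement heuristic is invoked between the indicated steps of Algorithm~\ref{alg:cert_uni}, the same argument covers it: \textsc{LBCert} (Algorithm~\ref{alg:heuris_LB_cert_uni}) and \textsc{RINSCert} (Algorithm~\ref{alg:heuris_rins_cert_uni}) return a finite partition $\{\Theta^j\}_j$ of the region they are called on, with the updated solution and complexity data attached, so the substitution again preserves the partition of $\Theta_0$. These procedures recursively call \textsc{B\&BCert} on sub-mpMILPs, but this is not circular: the recursion is applied to strictly smaller mixed-integer problems and terminates, so a nested induction on subproblem size shows those inner calls also return partitions. The step I expect to be the main obstacle is exactly this bookkeeping — keeping Definition~\ref{def:partition:uni}'s notion of partition consistent through the chain \textsc{solveCert} $\to$ \textsc{cutCert} $\to$ \textsc{branchIndCert}/\textsc{mostInfScoreCert} $\to$ \textsc{sortCert} and through the recursive heuristic calls, in particular handling the half-open sets produced by the strict inequalities in the dominance and node-sorting comparisons and the tie sets in the score comparisons. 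Once the refinement fact is stated precisely and applied at each of steps~\ref{item:part_uni0}--\ref{item:part_uni3}, the remaining verification is routine.
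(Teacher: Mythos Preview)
Your proposal is correct and follows essentially the same approach as the paper: an induction on the iterations of the \texttt{while} loop, with the inductive step reduced to checking that each decomposition substep (\textsc{solveCert}, dominance split, \textsc{branchIndCert}, \textsc{sortCert}) returns a finite partition of the region it acts on. The paper's proof is only a one-paragraph sketch that cites Assumption~\ref{ass:cert_uni} and Lemmas~\ref{lem:append_uni}--\ref{lem:cut_cond_uni} for this per-step partitioning property and defers the detailed induction to~\cite{shoja2022overall}; you have simply written that induction out explicitly, including the heuristic branches and the bookkeeping about strict inequalities and tie sets that the paper leaves implicit.
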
  

\begin{proof}
The result follows directly from Assumption~\ref{ass:cert_uni} and Lemmas~\ref{lem:append_uni}--\ref{lem:cut_cond_uni}, which establish the correctness of partitioning at each decomposition step of Algorithm~\ref{alg:cert_uni}, along with the fact that \( \mathcal{S} \) is initialized with \( \Theta_0 \) and \( \mathcal{F} \) is initially empty. A more detailed proof can be constructed following the same reasoning as the proof of Lemma~1 in~\cite{shoja2022overall}. 
\end{proof}

\begin{corollary}[Complete partition at termination]  
Assume that Assumption~\ref{ass:cert_uni} holds and that Algorithm~\ref{alg:cert_uni} has terminated with the output $\mathcal{F}$ = $\{(\Theta^i, \kappa_\text{tot}^i, \bar{J}^i)\}_i$. Then, $\{\Theta^i\}_{i}$ forms a partition of $\Theta_0$.  
\end{corollary}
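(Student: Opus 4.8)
The plan is to obtain the corollary essentially for free from Lemma~\ref{lem:maintenance_uni}, which guarantees that the union of the regions stored in $\mathcal{S}$ and $\mathcal{F}$ is a partition of $\Theta_0$ at \emph{every} iteration of Algorithm~\ref{alg:cert_uni}. First I would note the termination condition: the \texttt{while} loop (Step~\ref{step:while_uni}) exits precisely when the candidate list $\mathcal{S}$ becomes empty, so the state of the algorithm upon termination satisfies $\mathcal{S}=\emptyset$, and the returned list $\mathcal{F}=\{(\Theta^i,\kappa_\text{tot}^i,\bar J^i)\}_i$ is exactly the content of $\mathcal{F}$ at that final state.

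Next I would apply Lemma~\ref{lem:maintenance_uni} to this terminal state (equivalently, to the start of the loop iteration that is never entered). Under Assumption~\ref{ass:cert_uni}, the lemma gives that $\bigcup$ of all parameter sets in $\mathcal{S}\cup\mathcal{F}$ is a partition of $\Theta_0$; substituting $\mathcal{S}=\emptyset$ collapses this to $\bigcup_i \Theta^i = \Theta_0$ together with $\mathring{\Theta}^i\cap\mathring{\Theta}^j=\emptyset$ for $i\neq j$, i.e., $\{\Theta^i\}_i$ is a partition of $\Theta_0$ in the sense of Definition~\ref{def:partition:uni}. This is the whole argument.

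There is no real obstacle here; the only points warranting a sentence of care are bookkeeping ones, and they are already subsumed by earlier results. Specifically, that moving a region from $\mathcal{S}$ to $\mathcal{F}$ at Step~\ref{step:append_uni} and the decomposition steps~\ref{item:part_uni0}--\ref{item:part_uni3} neither create overlaps nor leave gaps is exactly what Lemma~\ref{lem:maintenance_uni} (via Assumption~\ref{ass:cert_uni} and Lemmas~\ref{lem:append_uni}--\ref{lem:cut_cond_uni}) establishes, and that each $\Theta^i$ pushed to $\mathcal{F}$ is nonempty follows because \textsc{cutCert}, \textsc{branchIndCert}, and \textsc{sortCert} only push nonempty regions, so no degenerate cells appear in the output. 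If one additionally wishes to justify that termination actually occurs, it suffices to observe that the \bnb tree has finitely many nodes and each certification/decomposition step splits a region into finitely many subregions, so $\mathcal{S}$ is exhausted after finitely many iterations; but since the corollary already hypothesizes termination, this last remark is optional.
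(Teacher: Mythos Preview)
Your proof is correct and follows exactly the paper's approach: invoke Lemma~\ref{lem:maintenance_uni} at the terminal state and use that Algorithm~\ref{alg:cert_uni} terminates precisely when $\mathcal{S}=\emptyset$. The paper's proof is just the one-line version of what you wrote; your additional bookkeeping remarks (nonemptiness of regions, finite termination) are not needed given the corollary's hypotheses, as you yourself note.
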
  

\begin{proof}  
The result follows directly from Lemma~\ref{lem:maintenance_uni} and the fact that Algorithm~\ref{alg:cert_uni} terminates when $\mathcal{S} = \emptyset$.  
\end{proof}  

The following theorems analyze the properties of the unified certification Algorithm~\ref{alg:cert_uni}.

\begin{theorem}[Equivalence of explored nodes sequences]
\label{thrm:node_seq_uni}
Assume that Assumptions~\ref{ass:cert_uni}--\ref{ass:order-tree_uni} hold. Let $\hat{\mathbb{B}}(\theta)$ denote the sequence of nodes explored to solve the 
problem~\eqref{eq:MILP_uni}/\eqref{eq:MIQP_uni} for any fixed $\theta \in \Theta_0$ using Algorithm~\ref{alg:BnB_on_uni} (\textsc{B\&B}). Moreover, let $\mathbb{B}(\theta)$ denote the sequence of nodes explored by Algorithm~\ref{alg:cert_uni} (\textsc{B\&BCert}) applied to the problem~\eqref{eq:mpMILP_uni}/\eqref{eq:mpMIQP_uni} for a terminated region $\Theta^i \ni \theta$ in $\mathcal{F}$ $= \{(\Theta^i, \kappa_{\text{tot}}^i, \bar{J}^i)\}_i$. Then, $\hat{\mathbb{B}}(\theta) = \mathbb{B}(\theta)$, $\forall \theta \in \Theta^i$, and all $i$.
\end{theorem}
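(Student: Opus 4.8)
The plan is to prove the theorem by induction on the number of processed nodes (equivalently, on the iteration count of the while-loop in Algorithm~\ref{alg:BnB_on_uni} / the outer loop over $\mathcal{S}$ in Algorithm~\ref{alg:cert_uni}), showing that at every step the pending-node list $\mathcal{T}$ held by the certification framework for a region $\Theta \ni \theta$ is \emph{identical} (as an ordered list of nodes, i.e.\ of $(\mathcal{B}_0,\mathcal{B}_1)$ tuples) to the list $\mathcal{T}$ held by the online algorithm when run on the instance for that $\theta$, and moreover that the upper bound $\bar J$ and best solution $\bar x$ coincide pointwise. Since the node \emph{explored} at each iteration is simply the node popped from the front of $\mathcal{T}$ (Step~\ref{step:pop_on_uni} / Step~\ref{step:pop_node_cert_uni}), equality of the ordered lists at every iteration immediately gives $\hat{\mathbb{B}}(\theta) = \mathbb{B}(\theta)$.

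First I would set up the induction carefully. The \emph{invariant} to maintain is: for every tuple $(\Theta, \mathcal{T}, \kappa_{\text{tot}}, \bar J)$ currently in $\mathcal{S}$ and every $\theta \in \Theta$, if one runs Algorithm~\ref{alg:BnB_on_uni} on the instance $\mathcal{P}(\theta)$ and pauses it at the corresponding iteration, the online pending list equals $\mathcal{T}$ (pointwise, after instantiating each $\eta(\theta)$), the online upper bound equals $\bar J(\theta)$, and the online best solution equals the stored $\bar x(\theta)$; furthermore the two runs have explored exactly the same sequence of nodes so far. \emph{Base case:} at initialization both algorithms have $\mathcal{T} = \{(\emptyset,\emptyset)\}$, $\bar J = \infty$, $\bar x = \texttt{NaN}$, and no nodes explored; the single region is $\Theta_0$, so the invariant holds trivially. \emph{Induction step:} take a region $(\Theta,\mathcal{T},\kappa_{\text{tot}},\bar J)$ popped at Step~\ref{step:pop_reg_uni}. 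If $\mathcal{T}=\emptyset$ the region is simply moved to $\mathcal{F}$ and nothing changes, so the invariant is preserved (and by Lemma~\ref{lem:maintenance_uni} the partition is maintained). Otherwise the front node $\eta(\theta)$ is popped — matching the online pop of the same front node, using the induction hypothesis that the lists agree — and \textsc{solveCert} is applied. By Assumption~\ref{ass:cert_uni}, $\Theta$ is partitioned into $\{\Theta^j\}$ on each of which $\barbelow{J}(\theta)$, $\barbelow{x}(\theta)$ are the correct (PWA/PWQ resp.\ PWA) relaxation value and solution and $\kappa^j$ is the exact online iteration count, so on each $\Theta^j$ the quantities $(\kappa,\mathcal{A},\barbelow{J},\barbelow{x})$ handed to \textsc{cutCert} coincide with those produced by \textsc{solve} online (Step~\ref{step:solve_on_uni}) for any $\theta\in\Theta^j$; hence the accumulated measure $\kappa_{\text{tot}}^j$ matches as well (Assumption~\ref{ass:cert_uni}, item~5). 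Then I invoke Lemma~\ref{lem:cut_cond_uni}: \textsc{cutCert} further partitions each $\Theta^j$ so that the updated $\mathcal{T}^i$ and $\bar J^i(\theta)$ stored in $\mathcal{S}$ equal the $\mathcal{T}$ and $\bar J$ that \textsc{cut} returns online for any fixed $\theta$ in that subregion — and, reading the proof of that lemma, the same holds for $\bar x$. This re-establishes the invariant for all new regions; Lemma~\ref{lem:maintenance_uni} guarantees these regions still tile $\Theta_0$.

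Next I would handle heuristics. If a (parameter-dependent) start or improvement heuristic is invoked, Assumption~\ref{ass:order-tree_uni} forces the invocation to happen at the same place in both algorithms; the certification heuristics (\textsc{LBCert}, \textsc{RINSCert}, and the start-heuristic certifications of~\cite{shoja2023sheuristic}) are designed so that, on each subregion they produce, the returned improved solution $\hat x(\theta)$, objective $\hat J(\theta)$, and accumulated cost $\hat\kappa^{h}$ match the online heuristic pointwise (this itself reduces, via \textsc{B\&BCert} on the sub-mp-MILP, to the same induction — one can either fold it into the main induction or invoke it as a previously-established fact). Since the heuristic only updates $\bar J$, $\bar x$ and $\kappa_{\text{tot}}$ — exactly as online — and never reorders $\mathcal{T}$, the invariant survives. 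Finally, at termination $\mathcal{S}=\emptyset$, so every $\theta\in\Theta_0$ lies in some terminated $\Theta^i\in\mathcal{F}$ (Corollary following Lemma~\ref{lem:maintenance_uni}); applying the invariant along the whole chain of regions that contained $\theta$, the sequence of popped nodes recorded by \textsc{B\&BCert} for $\Theta^i$ is exactly the sequence popped by \textsc{B\&B} on $\mathcal{P}(\theta)$, i.e.\ $\mathbb{B}(\theta) = \hat{\mathbb{B}}(\theta)$ for all $\theta\in\Theta^i$ and all $i$.

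The main obstacle I anticipate is bookkeeping the \emph{ordered} nature of $\mathcal{T}$ across the nested partitioning: a single online iteration corresponds to one \textsc{solveCert} partition followed by \textsc{cutCert}, which itself nests \textsc{branchIndCert} (Lemma~\ref{lem:branch_uni}) and \textsc{sortCert} (Lemma~\ref{lem:append_uni}) partitions, so the region $\Theta$ can split several times within one iteration, and one must argue that on each final sub-sub-region the \emph{entire} node list — not just the newly inserted pair — remains in the same order as the online list. This is precisely what Lemmas~\ref{lem:append_uni}, \ref{lem:most_inf_score_uni}, \ref{lem:branch_uni}, \ref{lem:cut_cond_uni} are built to deliver, so the theorem's proof is mostly a matter of chaining them correctly under Assumption~\ref{ass:order-tree_uni} (which pins down the $0$-branch/$1$-branch ordering, the one genuinely ambiguous choice); the secondary subtlety is making sure the heuristic sub-certifications are invoked in a well-founded order (the induction on sub-mp-MILPs must bottom out, which it does because each sub-MILP is solved to optimality within imposed node limits).
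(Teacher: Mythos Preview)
Your proposal is correct and follows essentially the same approach as the paper: both argue by induction over iterations that the ordered pending-node list $\mathcal{T}$ maintained by \textsc{B\&BCert} for the region containing $\theta$ coincides with the online list, invoking Assumption~\ref{ass:cert_uni} for the relaxation step and Lemma~\ref{lem:cut_cond_uni} for the cut/branch step, with Lemma~\ref{lem:maintenance_uni} guaranteeing the partition is maintained. Your invariant is stated somewhat more explicitly (tracking $\bar J$ and $\bar x$ alongside $\mathcal{T}$), and you address the heuristic invocations head-on---including the well-foundedness of the recursive \textsc{B\&BCert} calls in \textsc{LBCert}/\textsc{RINSCert}---whereas the paper's proof of this theorem does not mention heuristics and instead defers their treatment to Lemmas~\ref{lem:LB_equiv_uni}--\ref{lem:RINS_equiv_uni} (whose proofs in turn invoke Theorem~\ref{thrm:node_seq_uni}); your observation that this must be folded into a single well-founded induction is a valid sharpening of the paper's presentation.
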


\begin{proof}
Define an iteration in Algorithm~\ref{alg:cert_uni} as one execution of Steps~\ref{step:pop_reg_uni}--\ref{step:append_uni}.  
By Lemma~\ref{lem:cut_cond_uni}, the partitioning performed during any iteration ensures that the regions in $\mathcal{S}$ and $\mathcal{F}$ do not overlap and fully cover $\Theta_0$. Therefore, for any $\theta \in \Theta_0$, there exists a unique region $\Theta$ such that $\theta \in \Theta$. Furthermore, by the structure of Algorithm~\ref{alg:cert_uni}, no operations are performed outside the region $\Theta$ popped at Step~\ref{step:pop_reg_uni}.  
Thus, only iterations for which $\Theta \ni \theta$ 
need to be considered to complete the certification for $\theta$. Since all other iterations do not affect the sequence of explored nodes for this $\theta$, the role of $\mathcal{S}$ can be omitted here. Consequently, Algorithm~\ref{alg:cert_uni} can be interpreted as an iteration over the sorted node list $\mathcal{T}$, similar to the main loop started at Step~\ref{step:loop_T_on_uni} in Algorithm~\ref{alg:BnB_on_uni}.
We now proceed the proof by induction over the nodes in $\mathcal{T}$.

At the first iteration, $\mathcal{T}$ is in both algorithms initialized with the root node $(\emptyset, \emptyset)$. Thus, the sequences of explored nodes are identical at the start, 
confirming the base case.

Now, consider an arbitrary iteration of Algorithm~\ref{alg:cert_uni} and the corresponding iteration of Algorithm~\ref{alg:BnB_on_uni}. Assume that the respective node lists $\mathcal{T}$ in the two algorithms coincide at the beginning of this iteration, i.e., at Step~\ref{step:loop_T_on_uni} in Algorithm~\ref{alg:BnB_on_uni} and Step~\ref{step:T_empty_uni} in Algorithm~\ref{alg:cert_uni}. It will now be shown that $\mathcal{T}$ remains identical in both algorithms in the end of this iteration. 
In both algorithms, the first node in $\mathcal{T}$ is selected for exploration. 
The relaxation  is certified over $\Theta$ in Algorithm~\ref{alg:cert_uni} (Step~\ref{step:cert_node_uni}) and is solved for $\theta$ in Algorithm~\ref{alg:BnB_on_uni} (Step~\ref{step:solve_on_uni}).  
By Assumption~\ref{ass:cert_uni}, the computed solutions and complexity measures remain identical for $\theta$.  
After solving the relaxation, both algorithms proceed by applying B\&B cut conditions: Algorithm~\ref{alg:cert_uni} applies \textsc{cutCert} at Step~\ref{step:eval_cut_uni} while Algorithm~\ref{alg:BnB_on_uni} applies \textsc{cut} at Step~\ref{step:cutcond_on_uni}. In particular, \textsc{cutCert} partitions $\Theta$ into subsets $\{\Theta^j\}_j$, such that there is a unique region $\Theta^j$ that contains $\theta$.  Since only $\Theta^j \ni \theta$ is relevant for $\theta$, we restrict the analysis to $\Theta^j$. By Lemma~\ref{lem:cut_cond_uni}, the updated $\mathcal{T}^j$ from \textsc{cutCert} coincides with the updated $\mathcal{T}$ from \textsc{cut} for $\theta \in \Theta^j$.  
That is, the node lists remain identical in both algorithms for \( \theta \) at the end of the iteration.

Thus, by induction, the node lists remain identical in all iterations for \( \theta \). That is, the nodes contained in the respective algorithms' \( \mathcal{T} \), and the nodes' order, coincides at every iteration, ensuring that the sequence of explored nodes is identical in both algorithms, i.e., \( \hat{\mathbb{B}}(\theta) = \mathbb{B}(\theta) \) for \( \theta \). Since \( \theta \) was chosen arbitrarily, this holds for any \( \theta \in \Theta_0\), which completes the proof.
\end{proof}

\begin{theorem}
[Equivalence of complexity measures]
\label{thrm:complexity_meas_uni} 
Assume that Assumptions~\ref{ass:cert_uni}--\ref{ass:order-tree_uni} hold, and let $\kappa^*_{\text{tot}}(\theta)$ denote the accumulated complexity measure to solve the problem~\eqref{eq:MILP_uni}/\eqref{eq:MIQP_uni} problem for any fixed $\theta \in \Theta_0$ using Algorithm~\ref{alg:BnB_on_uni}.
Then, the complexity measure $\kappa_{\text{tot}}^i(\theta)$ 
returned by Algorithm~\ref{alg:cert_uni} applied to  the problem~\eqref{eq:mpMILP_uni}/\eqref{eq:mpMIQP_uni} for a terminated region $\Theta^i \ni \theta$ in $\mathcal{F} = \{(\Theta^i, \kappa^i_{\text{tot}}\bar{J}^i)\}_i$ satisfies $\kappa^i_{\text{tot}}= \kappa^*_{\text{tot}}$, $\forall \theta \in \Theta^i$, and all $i$.
\end{theorem}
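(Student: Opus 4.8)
The plan is to obtain this theorem as an essentially immediate consequence of Theorem~\ref{thrm:node_seq_uni} combined with part~5 of Assumption~\ref{ass:cert_uni}. Fix $\theta \in \Theta_0$ and let $\Theta^i \ni \theta$ be the terminated region in $\mathcal{F}$. By Theorem~\ref{thrm:node_seq_uni}, the sequence of nodes $\mathbb{B}(\theta)$ explored by Algorithm~\ref{alg:cert_uni} for $\Theta^i$ coincides with the sequence $\hat{\mathbb{B}}(\theta)$ explored by Algorithm~\ref{alg:BnB_on_uni} for $\theta$. The accumulated complexity measure is built additively in both algorithms — at Step~\ref{step:kappa_on_uni} of Algorithm~\ref{alg:BnB_on_uni} via $\kappa^*_{\text{tot}} \leftarrow \kappa^*_{\text{tot}} + \kappa$, and at Step~\ref{step:gener_reg_uni} of Algorithm~\ref{alg:cert_uni} via $\kappa^j_{\text{tot}} \leftarrow \kappa_{\text{tot}} + \kappa^j$ — so $\kappa^*_{\text{tot}}$ and $\kappa^i_{\text{tot}}$ are each the sum of the per-node contributions accumulated over the corresponding (identical) node sequence.

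Next I would show the per-node contributions agree. For each node $\eta$ explored for $\theta$, Algorithm~\ref{alg:BnB_on_uni} obtains $\kappa$ from the online solver $\textsc{solve}(\eta)$, whereas Algorithm~\ref{alg:cert_uni} obtains $\kappa^j$, for the unique subregion $\Theta^j \ni \theta$ produced by \textsc{solveCert}, from the certification routine. Since the relaxation at $\eta$ is the same parametric LP/QP in both cases and $\theta$ belongs both to the online instance and to the certified region, part~5 of Assumption~\ref{ass:cert_uni} gives $\kappa^j(\theta) = \kappa$. Formally this is a short induction over the matched iterations of the two algorithms, mirroring the induction in the proof of Theorem~\ref{thrm:node_seq_uni}: the loop invariant is that the value of $\kappa_{\text{tot}}$ carried by the region containing $\theta$ equals $\kappa^*_{\text{tot}}$ at the start of each matched iteration, and the inductive step adds the same quantity $\kappa = \kappa^j(\theta)$ on both sides. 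With $\mathcal{S}$ playing no role for this fixed $\theta$ (only the unique region containing $\theta$ is acted upon), this establishes $\kappa^i_{\text{tot}} = \kappa^*_{\text{tot}}$ in the heuristic-free case.

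Finally I would handle the case where primal heuristics are invoked. By Assumption~\ref{ass:order-tree_uni}, every heuristic call happens at the same point, with the same inputs, in both Algorithm~\ref{alg:BnB_on_uni} and Algorithm~\ref{alg:cert_uni}, and the heuristic sub-MILPs are themselves solved by \textsc{B\&B}/\textsc{B\&BCert}. Hence the equality of the heuristic complexity measures, $\hat{\kappa}^{h*} = \hat{\kappa}^{h_j}(\theta)$, follows by applying the already-established equivalence recursively to those subproblems — an induction on the heuristic nesting depth, with base case a heuristic-free \bnb run — and since $\hat{\kappa}^{h*}$ (resp. $\hat{\kappa}^{h_j}$) is added into $\kappa^*_{\text{tot}}$ (resp. $\kappa^i_{\text{tot}}$) identically in both algorithms, the accumulated measures still coincide. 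I expect the main obstacle to be exactly this recursive bookkeeping for the heuristics: one must verify that the certification routines for the heuristic subproblems (Algorithms~\ref{alg:heuris_LB_cert_uni} and~\ref{alg:heuris_rins_cert_uni}) re-partition $\Theta^i$ so that the neighborhood-size updates and the (parameter-dependent) objective cut-off constraints stay in lock-step with the online heuristic for every $\theta$, which is what makes the recursion well-founded and the per-call complexities match; modulo that, the theorem is a direct accumulation of Assumption~\ref{ass:cert_uni}(5) over the common node sequence furnished by Theorem~\ref{thrm:node_seq_uni}.
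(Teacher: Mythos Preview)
Your proposal is correct and follows essentially the same approach as the paper: combine Theorem~\ref{thrm:node_seq_uni} (identical node sequences) with Assumption~\ref{ass:cert_uni}(5) (per-node complexity from \textsc{solveCert} matches the online solver) to conclude that the additively accumulated totals coincide. Your treatment is more detailed in two respects --- you make the induction over matched iterations explicit, and you address the heuristic contributions directly --- whereas the paper's proof is a terse two-sentence argument that leaves the heuristic equivalence to the separate Lemmas~\ref{lem:LB_equiv_uni} and~\ref{lem:RINS_equiv_uni}.
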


\begin{proof}
By Assumption~\ref{ass:cert_uni},  \textsc{solveCert} correctly certifies the relaxations. Specifically, for any arbitrary node, the resulting complexity measure $\kappa^j(\theta)$ returned at Step~\ref{step:cert_node_uni} in Algorithm~\ref{alg:cert_uni} for a region $\Theta^j$ is identical to $\kappa$ returned at Step~\ref{step:solve_on_uni} in Algorithm~\ref{alg:BnB_on_uni} for any fixed $\theta \in \Theta^j$. Furthermore, by Theorem~\ref{thrm:node_seq_uni}, both Algorithms~\ref{alg:cert_uni} and \ref{alg:BnB_on_uni} explores the same sequence of relaxations for  $\theta$. As a result, the accumulated complexity measure $\kappa^i_{\text{tot}}$ returned by Algorithm~\ref{alg:cert_uni} for  $\Theta^i$ is identical to the accumulated complexity measure $\kappa^*_{\text{tot}}$ returned by Algorithm~\ref{alg:BnB_on_uni}, $\forall \theta \in \Theta^i$,  $\forall i$, which completes the proof. 
\end{proof}

\begin{corollary} \label{corr:comp_pwc_uni} 
The complexity measure $\kappa_{\text{tot}}(\theta): \Theta_0 \rightarrow \mathbb{N}_0$ returned by Algorithm~\ref{alg:cert_uni} applied to the problem~\eqref{eq:mpMILP_uni}/\eqref{eq:mpMIQP_uni} is PWC. Furthermore, for the problem~\eqref{eq:mpMILP_uni},  $\kappa_{\text{tot}}(\theta)$ is PPWC.
\end{corollary}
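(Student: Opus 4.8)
The plan is to obtain the result as an immediate consequence of the two structural facts already proved: that the terminated regions tile $\Theta_0$, and that the returned complexity measure is pointwise equal to the online one. First I would note that, by Lemma~\ref{lem:maintenance_uni} together with the termination of Algorithm~\ref{alg:cert_uni} (at which point $\mathcal{S}=\emptyset$), the regions $\{\Theta^i\}_i$ in $\mathcal{F}=\{(\Theta^i,\kappa_\text{tot}^i,\bar J^i)\}_i$ form a finite partition of $\Theta_0$ in the sense of Definition~\ref{def:partition:uni}. Then, by Theorem~\ref{thrm:complexity_meas_uni}, for every $i$ and every $\theta\in\Theta^i$ we have $\kappa_\text{tot}(\theta)=\kappa_\text{tot}^i=\kappa^*_\text{tot}(\theta)$, a single value independent of $\theta$ within $\Theta^i$, and since the online complexity measure (accumulated iterations, nodes, or flops) is nonnegative and integer‑valued, $\kappa_\text{tot}$ is well defined as a map $\Theta_0\to\mathbb{N}_0$. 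Thus $\kappa_\text{tot}$ is constant on each element of a finite partition, which is exactly the definition of a PWC function in Definition~\ref{def:pwa:uni} with $Q^i=\mathbf{0}$, $R^i=\mathbf{0}$, and $S^i=\kappa_\text{tot}^i$. This disposes of the first claim, and importantly it needs nothing about the geometry of the pieces, so it covers the MIQP case where the partition is non‑polyhedral.

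For the MILP case I would additionally argue that the partition $\{\Theta^i\}_i$ is polyhedral, giving PPWC. The argument is to walk through the decomposition steps (i)--(iv) listed before Lemma~\ref{lem:maintenance_uni} and observe that, when the problem is an mp‑MILP, each one cuts the current region only with hyperplanes: (i) \textsc{solveCert} sends a polyhedral input to a polyhedral partition by Assumption~\ref{ass:cert_uni}; (ii) the dominance cut compares $\barbelow{J}(\theta)$ and $\bar J(\theta)$, which are affine for MILPs, so $\{\theta:\barbelow{J}(\theta)\ge\bar J(\theta)\}$ and its complement are halfspaces; (iii) branching‑index selection compares score functions, which for the considered strategies are either parameter‑independent (predetermined order) or PPWA in $\theta$ (MIB, via \textsc{mostInfScoreCert}, by Lemma~\ref{lem:most_inf_score_uni}), so their comparison regions are polyhedral; (iv) node sorting adds no cut for DF/BrF and, for BF, again compares the affine $\barbelow{J}(\theta)$. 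If start or improvement heuristics are enabled they are restricted to mp‑MILPs and are themselves certified by recursive \textsc{B\&BCert} calls on mp‑MILP subproblems, so the same reasoning applies to the regions they produce. Since each $\Theta^i$ is obtained from the polyhedron $\Theta_0$ by finitely many such refinements, and intersections and relative complements of polyhedra with halfspaces are polyhedra, every $\Theta^i$ is a polyhedron; hence $\{\Theta^i\}_i$ is a polyhedral partition and $\kappa_\text{tot}$ is PPWC.

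The only step requiring care is the "finitely many refinements" claim, i.e.\ ruling out unbounded subdivision and confirming that Algorithm~\ref{alg:cert_uni} (including any heuristic recursion) terminates so that $\mathcal{F}$ is finite. I expect this to be the main obstacle, and I would handle it exactly as in the proof of Lemma~1 of~\cite{shoja2022overall}: each local B\&B tree contains at most $2^{n_b+1}-1$ nodes, every call to \textsc{solveCert} yields only a finite partition by Assumption~\ref{ass:cert_uni}, and each heuristic is invoked finitely often on subproblems of the same finite type, so only finitely many (region, tree) tuples are ever pushed onto $\mathcal{S}$; consequently the while‑loop terminates and $\{\Theta^i\}_i$ is a finite collection. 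With finiteness in place, both assertions follow immediately from Theorem~\ref{thrm:complexity_meas_uni} and Definition~\ref{def:pwa:uni} as sketched above.
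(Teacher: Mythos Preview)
Your proposal is correct and follows essentially the same approach as the paper: deduce PWC from Theorem~\ref{thrm:complexity_meas_uni} (constancy on each region of the partition produced by Algorithm~\ref{alg:cert_uni}), and deduce PPWC for mp-MILPs by observing that every partitioning step uses only hyperplanes. The paper's proof is terser---it simply asserts that all partitioning in the MILP case is via affine functions---whereas you spell out the case analysis over steps (i)--(iv) and address termination explicitly, but the underlying argument is the same.
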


\begin{proof}
This follows directly from Theorem~\ref{thrm:complexity_meas_uni}. Specifically,  $\kappa_{\text{tot}}(\theta)$ remains constant within each region $\Theta^i$ of the partition defined by Algorithm~\ref{alg:cert_uni}, implying that it is PWC.
Furthermore, if all partitioning in Algorithm~\ref{alg:cert_uni} is performed using hyperplanes (i.e., affine functions, as for the problem in~\eqref{eq:mpMILP_uni}), the regions generated at each iteration remain polyhedral. Consequently, $\kappa_{\text{tot}}(\theta)$ becomes PPWC.
\end{proof}


\subsection{Properties of the improvement heuristic certification}
After deriving the properties of Algorithm~\ref{alg:cert_uni},  we can now analyze the properties of the improvement heuristic certification algorithms (where Algorithm~\ref{alg:cert_uni} is used). The proofs of the following lemmas are provided in Appendix.

\begin{lemma}[Equivalence of \textsc{LBCert} and \textsc{LB}]
\label{lem:LB_equiv_uni} 
Assume that Assumptions~\ref{ass:cert_uni}--\ref{ass:order-tree_uni} hold. Then, Algorithm~\ref{alg:heuris_LB_cert_uni} (\textsc{LBCert}) partitions $\Theta$ into subsets $\{\Theta^j\}_j$, such that the potentially found improved solution $\hat{x}^j(\theta)$ and the accumulated complexity measure $\hat{\kappa}^{h_j}(\theta)$ for $\Theta^j$ coincide with $\hat{x}$ and $\hat{\kappa}^{h*}$  returned by Algorithm~\ref{alg:heuris_LB_on_uni} (\textsc{LB}) for any fixed $\theta \in \Theta^j$, $\forall j$.
\end{lemma}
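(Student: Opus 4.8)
The plan is to argue by induction over the iterations of the \texttt{while}-loop in \textsc{LBCert} (Algorithm~\ref{alg:heuris_LB_cert_uni}), running it in lockstep with the \texttt{while}-loop of the online \textsc{LB} (Algorithm~\ref{alg:heuris_LB_on_uni}). The invariant I would maintain is that (i) the parameter sets stored in $\mathcal{S}_h$ and $\mathcal{F}_h$ always form a partition of $\Theta$, and (ii) for every tuple $(\Theta', \hat\kappa^h, \bar x(\theta), r_n, \texttt{exec})$ currently in $\mathcal{S}_h$ and every fixed $\theta\in\Theta'$, the state $(\hat\kappa^h, r_n, \texttt{exec})$ matches the state $(\hat\kappa^{h*}, r_n, \texttt{exec})$ that online \textsc{LB} has at the top of its loop after the same number of sub-MILP solves. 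The base case is immediate, since Step~\ref{step:LB_start_cert} pushes $(\Theta, 0, \bar x(\theta), r_{n_0}, \texttt{true})$, which is exactly the initialization of Algorithm~\ref{alg:heuris_LB_on_uni}.

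For the inductive step I would pop a region $\Theta'$ whose state matches online \textsc{LB} pointwise. If $\texttt{exec}=\texttt{false}$, Step~\ref{step:LB_push_cert} adds $(\Theta', \hat\kappa^h, \texttt{NaN})$ to $\mathcal{F}_h$, which coincides with online \textsc{LB} leaving its loop with $\hat x=\texttt{NaN}$ and accumulated measure $\hat\kappa^{h*}=\hat\kappa^h$. If $\texttt{exec}=\texttt{true}$, both algorithms form the local-branching subproblem by appending the (parameter-independent) neighborhood cut~\eqref{eq:LB_neigh_on_uni} and an objective cut-off; the parametric cut-off~\eqref{eq:LB_cutoff_cert_uni} is obtained from~\eqref{eq:LB_cutoff_on_uni} by the substitution $\bar x=\bar F\theta+\bar g$, so evaluated at any fixed $\theta\in\Theta'$ the sub-mpMILP $\hat{\mathcal P}_{\mathrm{mpLB}}(\theta)$ is precisely the online sub-MILP $\hat{\mathcal P}_{\mathrm{LB}}$. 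I would then invoke the already-established properties of \textsc{B\&BCert} (Theorems~\ref{thrm:node_seq_uni} and~\ref{thrm:complexity_meas_uni}, plus Lemma~\ref{lem:cut_cond_uni} for the upper-bound/solution bookkeeping): applying it to $\hat{\mathcal P}_{\mathrm{mpLB}}$ over $\Theta'$ partitions $\Theta'$ into $\{\Theta^i\}_i$ such that, for each $\Theta^i$ and fixed $\theta\in\Theta^i$, the triple $(\hat\kappa^i,\hat J^i(\theta),\hat x^i(\theta))$ equals the output of online \textsc{B\&B} on $\hat{\mathcal P}_{\mathrm{LB}}$ at $\theta$; in particular $\hat\kappa^h+\hat\kappa^i$ matches the online accumulated measure after this solve (Step~\ref{step:LB_on_kappa}).

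The remaining work is to check the branching \textsc{LBCert} performs on each $\Theta^i$. Since $\hat x^i(\theta)=\hat F^i\theta+\hat g^i$ and $\bar x(\theta)=\bar F\theta+\bar g$ are both affine on $\Theta^i$, the test $\hat x^i(\theta)\neq\bar x(\theta)$ either holds throughout $\Theta^i$ or fails throughout it, except possibly on the hyperplane $\{(\hat F^i-\bar F)\theta=\bar g-\hat g^i\}$; splitting $\Theta^i$ along that hyperplane when needed makes the region-wise test identical to the pointwise test in Step~\ref{step:LB_on_sol} of online \textsc{LB}. On the part where $\hat x^i(\theta)\neq\bar x(\theta)$, Step~\ref{step:LB_cert_F} pushes $(\Theta^i,\hat\kappa^h+\hat\kappa^i,\hat x^i(\theta))$ to $\mathcal{F}_h$, matching online \textsc{LB} breaking with $\hat x=\hat x^i(\theta)$; on the part where $\hat x^i(\theta)=\bar x(\theta)$ (no genuine new solution, including $\hat J^i(\theta)=\infty$, which by Assumption~\ref{ass:cert_uni} occurs on exactly the infeasible subregion), Step~\ref{step:LB_kneigh_cert} calls the \emph{same} \textsc{neighborSize} routine with the pointwise-identical argument $\hat J^i(\theta)$, so the updated $(\texttt{exec},r_n)$ agrees with online \textsc{LB} and the tuple requeued in $\mathcal{S}_h$ again satisfies the invariant. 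Termination follows because online \textsc{LB} performs finitely many sub-MILP solves (\textsc{neighborSize} forces $\texttt{exec}=\texttt{false}$ after at most two adjustments of $r_n$) and each \textsc{B\&BCert} call yields a finite partition, so the recursion terminates; at termination the invariant on $\mathcal{F}_h$ gives the claim. I expect the main obstacle to be this solution-comparison step: turning ``$\hat x^i(\theta)\neq\bar x(\theta)$'' into a region-wise predicate forces the extra hyperplane split of $\Theta^i$, and one must verify that this split is consistent with the partitioning already carried out inside \textsc{B\&BCert}; relatedly, one has to be careful that \textsc{B\&BCert} returns not only the complexity measure and explored-node sequence of Theorems~\ref{thrm:node_seq_uni}--\ref{thrm:complexity_meas_uni} but also $\hat J^i(\theta)$ and $\hat x^i(\theta)$ that agree pointwise with the online solver, which is precisely why Lemma~\ref{lem:cut_cond_uni} is needed here.
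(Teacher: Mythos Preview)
Your proposal is correct and follows essentially the same approach as the paper: an inductive lockstep between iterations of \textsc{LBCert} and \textsc{LB}, invoking Theorems~\ref{thrm:node_seq_uni}--\ref{thrm:complexity_meas_uni} for the inner \textsc{B\&BCert}/\textsc{B\&B} call, and then verifying that the subsequent case split (new solution versus \textsc{neighborSize} update) agrees pointwise. The paper's proof is considerably terser but structurally identical.

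One small simplification relative to your write-up: the hyperplane split you propose for the test $\hat x^i(\theta)\neq\bar x(\theta)$ is not needed. Because both sides are affine on $\Theta^i$, either $(\hat F^i,\hat g^i)=(\bar F,\bar g)$ and the functions agree everywhere, or they differ and the functions agree only on a lower-dimensional set. The paper's Algorithm~\ref{alg:heuris_LB_cert_uni} therefore implements the test at Step~\ref{step:LB_cert_if} as a coefficient comparison $\hat F^i\neq\bar F$ or $\hat g^i\neq\bar g$ (see the inline comment there), which is region-wise constant and makes the predicate coincide with the online test throughout~$\Theta^i$ without any further partitioning.
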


\begin{lemma}[Equivalence of \textsc{RINSCert} and \textsc{RINS}] 
\label{lem:RINS_equiv_uni} 
Assume that Assumptions~\ref{ass:cert_uni}--\ref{ass:order-tree_uni} hold. Then, Algorithm~\ref{alg:heuris_rins_cert_uni} (\textsc{RINSCert}) partitions $\Theta$ into subsets $\{\Theta^i\}_j$, such that the potentially found improved solution $\hat{x}^j(\theta)$ and the accumulated complexity measure $\hat{\kappa}^{h_j}(\theta)$ for $\Theta^j$  coincide with $\hat{x}$ and $\hat{\kappa}^{h*}$ returned by Algorithm~\ref{alg:rins_on_uni} (\textsc{RINS}) for any fixed $\theta \in \Theta^j$, $\forall j$.
\end{lemma}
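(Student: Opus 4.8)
The plan is to mirror the argument used for Theorem~\ref{thrm:node_seq_uni} and the companion Lemma~\ref{lem:LB_equiv_uni} for \textsc{LBCert}: fix an arbitrary $\theta$, restrict attention to the unique region $\Theta$ handled by \textsc{RINSCert} that contains $\theta$, and compare one invocation of \textsc{RINSCert} over $\Theta$ with the corresponding invocation of \textsc{RINS} for the fixed $\theta$, step by step, reusing the already-established equivalence of \textsc{B\&BCert} and \textsc{B\&B}.

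First I would show that the index set $\hat{\mathcal{B}}$ of fixed binary variables is computed consistently. In \textsc{RINSCert} (Step~\ref{step:rins_cert_B_heu}) it is $\hat{\mathcal{B}} = \{i \in \mathcal{B} \mid \bar{F}_i = \barbelow{F}_i,\ \bar{g}_i = \barbelow{g}_i\}$, whereas in \textsc{RINS} (Step~\ref{step:rins_on_B_heu}) it is $\{i \in \mathcal{B} \mid \bar{x}_i(\theta) = \barbelow{x}_i(\theta)\}$. By Assumption~\ref{ass:cert_uni}, both $\bar{x}(\theta) = \bar{F}\theta + \bar{g}$ and $\barbelow{x}(\theta) = \barbelow{F}\theta + \barbelow{g}$ are affine over $\Theta$; moreover, since $\bar{x}$ is integer-feasible, $\bar{x}_i(\theta) \in \{0,1\}$ for every $\theta$ in the (full-dimensional) region $\Theta$, so $\bar{x}_i$ is constant there, i.e., $\bar{F}_i = \mathbf{0}$ and $\bar{g}_i \in \{0,1\}$ for $i \in \mathcal{B}$. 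Hence $\bar{x}_i(\theta) - \barbelow{x}_i(\theta)$ is an affine function on $\Theta$ that vanishes identically precisely when $\bar{F}_i = \barbelow{F}_i$ and $\bar{g}_i = \barbelow{g}_i$, and on the interior of $\Theta$ this is exactly the online condition $\bar{x}_i(\theta) = \barbelow{x}_i(\theta)$; the two definitions of $\hat{\mathcal{B}}$ therefore coincide for all $\theta \in \Theta$, up to lower-dimensional boundary sets that are absorbed consistently with Definition~\ref{def:partition:uni}. Since $|\hat{\mathcal{B}}|$ is then the same for all $\theta \in \Theta$, the test $|\hat{\mathcal{B}}| \geq r|\mathcal{B}|$ at Step~\ref{step:rins_cert_if_heu} (resp.\ Step~\ref{step:rins_on_if_heu}) selects the same branch in \textsc{RINSCert} over all of $\Theta$ and in \textsc{RINS} for every $\theta \in \Theta$.

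Next I would treat the two branches. If the test fails, \textsc{RINSCert} returns $\mathcal{F}_h$ with no improved solution (equivalently $\hat{x}(\theta) = \texttt{NaN}$, $\hat{\kappa}^{h} = 0$ on $\Theta$) and \textsc{RINS} returns $\hat{x} = \texttt{NaN}$, $\hat{\kappa}^{h*} = 0$, so the claim holds with the trivial partition $\{\Theta\}$. If the test passes, the sub-mpMILP $\hat{\mathcal{P}}_{\text{mpRINS}}(\theta)$ formed at Step~\ref{step:RINS_cert_prob} by augmenting~\eqref{eq:mpMILP_uni} with the fixing constraints~\eqref{eq:Rins_add_cons_heu} on the (parameter-independent) index set $\hat{\mathcal{B}}$ and the parameter-dependent objective cut-off~\eqref{eq:LB_cutoff_cert_uni} coincides, for each fixed $\theta \in \Theta$, with the non-parametric sub-MILP $\hat{\mathcal{P}}_{\text{RINS}}$ solved at Step~\ref{step:rins_on_solv_heu} of \textsc{RINS}, since~\eqref{eq:LB_cutoff_cert_uni} reduces to~\eqref{eq:LB_cutoff_on_uni} upon substituting $\bar{x}(\theta) = \bar{F}\theta + \bar{g}$. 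Applying \textsc{B\&BCert} to $\hat{\mathcal{P}}_{\text{mpRINS}}$ over $\Theta$ partitions $\Theta$ into $\{\Theta^j\}_j$, and by Theorems~\ref{thrm:node_seq_uni} and~\ref{thrm:complexity_meas_uni} applied to the mp-MILP $\hat{\mathcal{P}}_{\text{mpRINS}}$ (using Assumptions~\ref{ass:cert_uni}--\ref{ass:order-tree_uni}), the solution $\hat{x}^j(\theta)$ and accumulated complexity measure $\hat{\kappa}^{h_j}(\theta)$ returned for $\Theta^j$ coincide, for every $\theta \in \Theta^j$, with the output of \textsc{B\&B} applied to $\hat{\mathcal{P}}_{\text{RINS}}$, i.e., with the output of \textsc{RINS}. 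As $\theta$ was arbitrary, this proves the claim for all $\theta$ in each $\Theta^j$.

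The main obstacle I anticipate is the first step: justifying that comparing the affine representations $(\bar{F}_i,\bar{g}_i)$ and $(\barbelow{F}_i,\barbelow{g}_i)$ in \textsc{RINSCert} is pointwise equivalent over $\Theta$ to the online test $\bar{x}_i(\theta) = \barbelow{x}_i(\theta)$, since an affine difference can vanish at isolated parameter values without vanishing identically. The resolution above exploits that $\bar{x}_i$ is constant on $\Theta$ for $i \in \mathcal{B}$ and confines any residual mismatch to lower-dimensional boundary sets, consistent with the partition notion of Definition~\ref{def:partition:uni} and with how pointwise equivalence is understood throughout the paper; everything else is a routine unrolling that defers to the equivalence of \textsc{B\&BCert} and \textsc{B\&B} already proved.
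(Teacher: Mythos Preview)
Your proposal is correct and follows essentially the same route as the paper's proof: establish a step-by-step correspondence between \textsc{RINSCert} and \textsc{RINS} (computation of $\hat{\mathcal{B}}$, the cardinality test, formulation of the subproblem), then invoke Theorems~\ref{thrm:node_seq_uni} and~\ref{thrm:complexity_meas_uni} for the \textsc{B\&BCert}/\textsc{B\&B} call. The paper's own proof is considerably terser---it dispatches the pre-\textsc{B\&BCert} steps in one sentence by appealing to ``structural similarity''---whereas you have taken the trouble to justify why comparing the affine coefficients $(\bar{F}_i,\bar{g}_i)$ with $(\barbelow{F}_i,\barbelow{g}_i)$ is pointwise equivalent to the online test $\bar{x}_i(\theta)=\barbelow{x}_i(\theta)$, using the constancy of $\bar{x}_i$ on $\Theta$ for $i\in\mathcal{B}$ and confining residual mismatch to lower-dimensional sets; this is a genuine subtlety that the paper does not spell out, so your version is, if anything, more careful on this point.
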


\section{Conservative complexity certification framework for MIQPs}  \label{sec:cert_cons_uni} 
As discussed before, when certifying the B\&B algorithms for MIQPs, 
comparing quadratic value functions within a parameter set \(\Theta\) (e.g., when evaluating the dominance cut) introduces quadratic inequalities, partitioning \(\Theta\) into non-polyhedral subsets and potentially rendering further computations intractable. To maintain the polyhedral structure of partitions and enhance tractability, spatial partitioning based on quadratic function comparisons can be avoided by conservatively comparing those, thereby accepting some level of conservativeness across the region. 

Methods for conservative quadratic function comparisons are particularly useful when evaluating the dominance cut.  When using the BF node-selection strategy for MIQPs, however, approximating quadratic function comparisons $ \rho_{\text{BF}}(\cdot) $ over $\Theta$  in \textsc{sortCert} can alter the node ordering during insertion into $ \mathcal{T} $, causing deviations from the order determined online by \textsc{sort} for a given $ \theta \in \Theta $. This would lead to different exploration paths and discrepancies between the certification results of Algorithm~\ref{alg:cert_uni} and the behavior of  Algorithm~\ref{alg:BnB_on_uni}.  
To ensure consistency and meaningful certification, we impose the following assumption:
\begin{assumption} \label{ass:BF_MIQP_uni} When Algorithm~\ref{alg:cert_uni}, applied to the problem~\eqref{eq:mpMIQP_uni}, employs an approximated quadratic function comparison (e.g., for the dominance cut evaluation), then either the DF or BrF node-selection strategy is used.
\end{assumption}  

\subsection{Affine approximation of a quadratic function}\label{sec:qp_comp_apprx}
Consider a polyhedral parameter set $\Theta$ and let $\tilde{J}(\theta) = \barbelow{J}(\theta) - \bar{J}(\theta) = \theta^T \tilde{Q} \theta  + \tilde{R} \theta +  \tilde{S}$ 
(see~\eqref{eq:eq_obj_diff_qp_uni}). This section presents three alternative methods for constructing an affine approximation $J'(\theta) = R' \theta + S'$ of $\tilde{J}(\theta)$  within $\Theta$, 
such that $J'(\theta) \geq 0$,  $\forall \theta \in \Theta$,  guarantees that $\tilde{J}(\theta) \geq 0$,  $\forall \theta \in \Theta$. 

\subsubsection{McCormick relaxations}  \label{subsec:qp_comp_mcc_uni}

In this method, the quadratic term in $\tilde{J}(\theta)$ is approximated using McCormick envelopes~\cite{mccormick1976computability}. To this end, new variables $\hat{\theta}_{ij} = \theta_i \theta_j$ are introduced to represent the bilinear terms in $\tilde{J}(\theta)$ for all $i,j$. Since $\theta$ lies within the polyhedral set $\Theta$, each component $\theta_i$ has known bounds: $\barbelow{\theta}_i \leq \theta_i \leq \bar{\theta}_i$. Using these bounds, convex and concave envelopes can be constructed for the bilinear terms, providing linear constraints that relax the original quadratic term. By reformulating $\tilde{J}(\theta)$ into a set of linear inequalities, an affine approximation of the form $J'(\theta, \hat{\theta}) = R' [\theta; \hat{\theta}] + S'$ is obtained, where $R'$ and $S'$ are determined by the coefficients of $\tilde{J}(\theta)$ and the McCormick-envelope bounds.


\subsubsection{Under-approximations of quadratic functions} \label{subsec:qp_comp_aff_uni}
An alternative approach is to construct an affine under-approximation of the quadratic function by minimizing its quadratic term over the region $\Theta$~\cite{arnstrom2021unifying}, given by  
\begin{equation} \label{eq:qp_comp_aff_uni}
J'(\theta) = R' \theta + S' = \min_{\theta \in \Theta}   \left (\theta^T \tilde{Q} \theta \right ) + \tilde{R} \theta +  \tilde{S}.
\end{equation}%
This ensures that $J'(\theta)$ provides a valid approximation, 
while maintaining relatively low computational complexity.


\subsubsection{Regions as atomic units} \label{subsec:qp_comp_single_uni}
A third, relatively simple but more conservative approximation is to approximate the quadratic function $\tilde{J}(\theta)$ using  
\begin{equation} \label{eq:qp_comp_single_uni}
J'(\theta) = S' = \min_{\theta \in \Theta} \left( \theta^T \tilde{Q} \theta + \tilde{R} \theta + \tilde{S} \right),
\end{equation} %
which provides a simple constant approximation. If $S' \geq 0$, then $\tilde{J}(\theta) \geq 0$, $\forall \theta \in \Theta$. This strategy is motivated by treating the region $\Theta$ as an atomic unit when checking whether $\tilde{J}(\theta) \geq 0$, $\forall \theta \in \Theta$, as suggested in previous works~\cite{axehill2010improved,shoja2022overall}. 

\subsection{Conservative cut-condition evaluation} 
A conservative version of Algorithm~\ref{alg:cut_cond_cert_uni} (\textsc{cutCond}) for evaluating the dominance cut while preserving the polyhedral structure is presented in Algorithm~\ref{alg:cut_condcons_cert_uni} (\textsc{cutCondCert}).
In this algorithm, an affine approximation $J'(\theta)$ of $\tilde{J}(\theta) = \barbelow{J}(\theta) - \bar{J}(\theta)$ is first generated using any of the methods in Section~\ref{sec:qp_comp_apprx} (Step~\ref{step:cut_cons_L_uni}). By construction, the condition $J'(\theta) \geq 0$ for all $\theta \in \Theta$ implies that $\tilde{J}(\theta) \geq 0$ for all $\theta \in \Theta$. Consequently, for the subset $\tilde{\Theta} = \{ \theta \in \Theta \mid J'(\theta) \geq 0 \}$ where the dominance cut is conservatively applied (Step~\ref{step:cut_condcons_dom_uni}), the cut remains valid for all $\theta \in \tilde{\Theta}$ under exact evaluation. However, the converse does not necessarily hold, indicating conservativeness. For the remaining region ($\Theta \setminus \tilde{\Theta}$) where $J'(\theta) < 0$, the dominance cut is dismissed, preserving it for further processing.  
To avoid updating the upper bound to a non-improving lower bound at Step~\ref{step:int_cutcons_st_cert_uni} due to conservativeness, it is necessary to maintain a collection of upper bounds, denoted by $\bar{\mathbb{J}}(\theta) = \{\bar{J}^i(\theta)\}_i$ (Step~\ref{step:int_cutcons_cert2_uni}), for each region rather than a single global upper bound $\bar{J}(\theta)$ for the entire region.


\begin{algorithm}[htbp]
\caption{\textsc{cutCertCons}: Conservative cut-condition evaluation and branching in the certification framework} %
\label{alg:cut_condcons_cert_uni}
\begin{algorithmic}[1]
\Require $\left( \Theta, \mathcal{T}, \kappa_{tot}, \bar{\mathbb{J}} \right )$, $\barbelow{J}$, $\barbelow{x}$,  $\mathcal{A}$,  $(\mathcal{B}_0$, $\mathcal{B}_1)$, $\mathcal{S}$ 
\Ensure $\mathcal{S}$ 
\vspace{.06cm}
\If {$ \exists \bar{J} \in \bar{\mathbb{J}}$ and $ \exists \theta \in \Theta$: $ \barbelow{J}(\theta) \geq \bar{J}(\theta)$} \label{step:cut_condcons_cert1_uni} 
\State{Construct an affine approximation $J'(\theta)$ of $\tilde{J}(\theta) = \barbelow{J}(\theta)-\bar{J}(\theta)$, $\forall \theta \in \Theta$}\label{step:cut_cons_L_uni}
\State $\tilde{\Theta} = \{ \theta \in \Theta \;|\; J'(\theta) \geq 0 \}$ \label{step:cut_condcons_dom_uni}
\State {Push $ \left (\tilde{\Theta}, \mathcal{T}, \kappa_{\text{tot}}, \bar{\mathbb{J}}\right )$ to $\mathcal{S}$}
\label{step:dom_cutcons_cert_uni}
\State $\Theta \leftarrow \{\theta \in \Theta \;|\; J'(\theta) < 0 \}$ \label{step:cut_condcons_dom2_uni}
\EndIf
\vspace{-0.1cm}
\If {$\Theta \neq \emptyset$}
\label{step:nodom_cutcons_cert_uni}
\If {all relaxed binary constraints are active} %
\label{step:int_cutcons_cert_uni}
\State $\bar{\mathbb{J}}(\theta) \leftarrow \bar{\mathbb{J}}(\theta) \cup \{J(\theta)\}$ 
\label{step:int_cutcons_cert2_uni}
\State {Push $ \left ( \Theta, \mathcal{T}, \kappa_{\text{tot}}, \bar{\mathbb{J}}\right )$ to $\mathcal{S}$}
\label{step:int_cutcons_st_cert_uni}
\Else \label{step:cut_condcons_branch_uni}
\State{$\{\left(\Theta^k,k \right)\}_{k=1}^{N_b}$ $\leftarrow$ \textsc{branchIndCert}($\barbelow{x}(\theta), \mathcal{B}, \Theta$)} %
\label{step:branch1cons_cert_uni}
\vspace{0.04cm} 
\For {$k \in \mathbb{N}_{1:N_b} $} %
\vspace{0.04cm} 
\State \Longunderstack[l]{$\mathcal{S} \leftarrow$ \textsc{sortCert}$((\mathcal{B}_0 \cup \{k\},\mathcal{B}_1 ), (\mathcal{B}_0,\mathcal{B}_1 \cup \{k\})$, \\$(\Theta^k, \mathcal{T}, \kappa_{\text{tot}}, \bar{\mathbb{J}}), \barbelow{J}(\theta), \mathcal{S} )$} %
\label{step:appendcons_cert_uni}  
\EndFor
\EndIf
\EndIf
\State \textbf{return} $\mathcal{S}$
\end{algorithmic}
\end{algorithm}
Note that in Algorithm~\ref{alg:cut_condcons_cert_uni}, only one of the upper bounds $\bar{J}$ from the collection $\bar{\mathbb{J}}$ is assumed to be selected at Step~\ref{step:cut_cons_L_uni}. However, the algorithm can be modified to loop over all the upper bounds in $\bar{\mathbb{J}}$, potentially partitioning $\Theta$ based on each comparison to eliminate larger parts of $\Theta$ where the dominance cut holds. This would reduce conservativeness at the cost of increased partitioning and potential computational effort.  
In the conservative certification framework in Algorithm~\ref{alg:cert_uni} applied to MIQPs, Step~\ref{step:eval_cut_uni} invokes \textsc{cutCondCons} instead of \textsc{cutCond}, ensuring a polyhedral structure and avoiding nonlinearity within regions.

\subsection{Properties of the conservative certification framework} \label{subsec:prop_cert_cons_uni}
In this section, we analyze the properties of the conservative certification framework to ensure that Algorithm~\ref{alg:cert_uni} provides correct upper bounds on the complexity measures. Specifically, we aim to show that although this conservative strategy may increase the number of explored nodes and potentially introduce additional computations, it still provides a reliable \textit{upper bound} on the worst-case complexity measures. Prior works~\cite{axehill2010improved,shoja2022overall} have shown that an upper bound on the worst-case computational complexity is guaranteed when the dominance cut condition is applied conservatively while treating regions as atomic units.  
We now extend these results to any affine approximation $J'(\theta)$.

From Assumption~\ref{ass:BF_MIQP_uni}, and since the branching indices are determined based on solutions to relaxations—which are also PWA for MIQPs—the properties of \textsc{sortCert} and \textsc{branchIndCert} (established in Lemmas~\ref{lem:append_uni} and~\ref{lem:branch_uni}, respectively) remain unchanged in both the exact and conservative certification frameworks. 
Thus, in this section, we focus on analyzing the properties of  \textsc{cutCertCons} and, subsequently, the conservative framework in Algorithm~\ref{alg:cert_uni}. 

\begin{lemma}[Conservative cut-condition evaluation]
\label{cor:cut_cond_cons_uni}
Assume that Assumptions~\ref{ass:cert_uni}--\ref{ass:BF_MIQP_uni} hold. Then, Algorithm~\ref{alg:cut_condcons_cert_uni} (\textsc{cutCertCons}) partitions the polyhedral set $\Theta$ into polyhedral subsets $\{\Theta^j\}_j$. Moreover, the updated node list $\mathcal{T}^j$ (stored in $\mathcal{S}$) for $\Theta^j$ is a superset of $\mathcal{T}$ returned by Algorithm~\ref{alg:cut_cond_on_uni} (\textsc{cut}) for any fixed $\theta \in \Theta^j$. 
\end{lemma}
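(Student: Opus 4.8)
The plan is to establish two separate claims: (i) the polyhedral structure of the partition $\{\Theta^j\}_j$ produced by \textsc{cutCertCons}, and (ii) the superset relation $\mathcal{T}^j \supseteq \mathcal{T}$ pointwise. For (i), I would trace through Algorithm~\ref{alg:cut_condcons_cert_uni} and observe that every partitioning operation it performs is polyhedral: the subset $\tilde{\Theta} = \{\theta \in \Theta \mid J'(\theta) \geq 0\}$ at Step~\ref{step:cut_condcons_dom_uni} and its complement at Step~\ref{step:cut_condcons_dom2_uni} are defined by a single affine inequality in $J'(\theta) = R'\theta + S'$ (by construction in Section~\ref{sec:qp_comp_apprx}, $J'$ is affine), hence polyhedral; the branching step calls \textsc{branchIndCert}, whose partitioning is performed using hyperplanes because $\barbelow{x}(\theta)$ is PWA for MIQPs (Assumption~\ref{ass:cert_uni}, part 2); and the node-insertion step calls \textsc{sortCert}, whose partitioning — under Assumption~\ref{ass:BF_MIQP_uni}, which forces DF or BrF — is trivial (the sorting criterion is parameter-independent, so $\Theta^i$ is either all of $\Theta$ or empty, as noted after Algorithm~\ref{alg:append_cert_uni}). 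Since $\Theta$ is polyhedral and all cuts are affine, each resulting $\Theta^j$ is polyhedral.

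**For (ii), the superset relation,** I would argue by comparing \textsc{cutCertCons} against \textsc{cutCertExact} (i.e., \textsc{cutCert} of Algorithm~\ref{alg:cut_cond_cert_uni}), invoking Lemma~\ref{lem:cut_cond_uni} which already equates the latter's output with the online \textsc{cut}. Fix $\theta \in \Theta^j$. The only place where the conservative algorithm differs from the exact one is the dominance-cut test: the exact algorithm cuts the node whenever $\barbelow{J}(\theta) \geq \bar{J}(\theta)$ for \emph{some} $\bar{J} \in \bar{\mathbb{J}}$, whereas the conservative one cuts only when $J'(\theta) \geq 0$, and by construction $J'(\theta) \geq 0 \Rightarrow \tilde{J}(\theta) = \barbelow{J}(\theta) - \bar{J}(\theta) \geq 0$, but not conversely. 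Hence the set of parameters on which the conservative algorithm applies the dominance cut at this node is a \emph{subset} of those on which the exact (and online) algorithm does. On the complementary parameters, the conservative algorithm proceeds to branch and insert the two children into $\mathcal{T}$ — exactly the nodes the online algorithm would have pruned. On the parameters where both cut, the resulting node lists coincide. In either case the conservative node list contains (at least) every node the online list contains, giving $\mathcal{T}^j(\theta) \supseteq \mathcal{T}(\theta)$. One should also note that maintaining the \emph{collection} $\bar{\mathbb{J}}(\theta)$ rather than a single $\bar{J}(\theta)$ (Step~\ref{step:int_cutcons_cert2_uni}) only adds more candidate bounds to test against, which can only make the dominance cut \emph{easier} to apply, never harder — so this does not break the superset direction; and the integer-feasibility branch behaves identically to the exact case since the active-set test is parameter-independent given the correctly certified relaxation.

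**The main obstacle** I anticipate is handling the inductive/recursive structure cleanly: \textsc{cutCertCons} does not just perform one dominance test but, after carving off $\tilde{\Theta}$, continues on $\Theta \setminus \tilde{\Theta}$ through branching and then \textsc{sortCert}, which itself may partition further; I must be careful that "superset" is preserved through this chain of sub-partitions and that the node lists are compared on the matching sub-region containing $\theta$. This is where I would lean on Lemmas~\ref{lem:append_uni} and~\ref{lem:branch_uni} (whose statements, as the excerpt notes before this lemma, carry over unchanged to the conservative setting) to conclude that after branching and sorting the child nodes land in $\mathcal{T}^j$ in the same relative positions as online, so the only discrepancy between $\mathcal{T}^j$ and the online $\mathcal{T}$ is the \emph{presence of extra (un-pruned) subtrees} — precisely a superset. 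I would close by remarking that $\theta \in \Theta^j$ was arbitrary, so the relation holds on all of $\Theta^j$ and for all $j$, completing the proof. A detailed induction can be written following the template of the proof of Lemma~\ref{lem:cut_cond_uni}, replacing the dominance-cut equality step with the one-sided implication $J'(\theta) \geq 0 \Rightarrow \tilde{J}(\theta) \geq 0$.
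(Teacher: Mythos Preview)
Your proposal is correct and follows essentially the same approach as the paper's proof: both establish polyhedrality by noting that $J'$ is affine and that \textsc{branchIndCert} and \textsc{sortCert} (under Assumption~\ref{ass:BF_MIQP_uni}) partition only via hyperplanes, and both derive the superset relation from the one-sided implication $J'(\theta)\geq 0 \Rightarrow \tilde{J}(\theta)\geq 0$. The paper organizes the second part as an explicit three-case analysis (cut correctly invoked, correctly dismissed, incorrectly dismissed on a subset $\Theta^\epsilon$) rather than your comparison against the exact \textsc{cutCert} via Lemma~\ref{lem:cut_cond_uni}, but the underlying argument is the same.
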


\begin{proof}
Partitioning $\Theta$ using the affine approximation $J'(\theta)$ at Steps~\ref{step:cut_condcons_dom_uni} and~\ref{step:cut_condcons_dom2_uni} in \textsc{cutCertCons} divides $\Theta$ into subsets: $\tilde{\Theta} = \{ \theta \in \Theta \mid J'(\theta) \geq 0 \}$ and $ \Theta \setminus \tilde{\Theta} = \{ \theta \in \Theta \mid J'(\theta) < 0 \}$.
Since $J'(\theta)$ is affine and $\Theta$ is polyhedral by assumption, both $\tilde{\Theta}$ and $\Theta \setminus \tilde{\Theta}$ are polyhedral. Furthermore, the partitioning of $\Theta$ under the branching operations at Steps~\ref{step:branch1cons_cert_uni}--\ref{step:appendcons_cert_uni} remains polyhedral by the construction of \textsc{branchIndCert} and Assumption~\ref{ass:BF_MIQP_uni}. Therefore, all partitions generated by \textsc{cutCertCons} are polyhedral.

Next, consider the inputted node $\eta = (\mathcal{B}_0$, $\mathcal{B}_1)$, and assume that the condition at Step~\ref{step:cut_condcons_cert1_uni} is satisfied, i.e., $\exists \bar{J} \in \bar{\mathbb{J}}$ such that $ \barbelow{J}(\theta) \geq \bar{J}(\theta)$. Then, one of the following cases arises for $\eta$ within the inputted $\Theta$: 
\begin{enumerate}[leftmargin=*]
    \item If $J'(\theta) \geq 0$, $\forall \theta \in \Theta$ $\Rightarrow$  
    $ \barbelow{J}(\theta) \geq \bar{J}(\theta)$, $\forall \theta \in \Theta$ (by construction), and  the dominance cut is invoked correctly throughout the entire $\Theta$.
    \item If $\exists \hat{\theta} \in \Theta$:  $J'(\hat{\theta}) < 0$, then:
    \begin{enumerate}
        \item If $\barbelow{J}(\theta) < \bar{J}(\theta)$, $\forall \theta \in \Theta$ $\Rightarrow$ $J'(\theta) < 0$, $\forall \theta \in \Theta$, and the dominance cut is correctly dismissed throughout the entire $\Theta$.
        \item Otherwise, the dominance cut is dismissed incorrectly for all $\theta \in \Theta^{\epsilon}$, where $\Theta^{\epsilon} \triangleq \{\theta \in \Theta \mid J'(\theta) < 0, \; \barbelow{J}(\theta) \geq \bar{J}(\theta)\}$.
    \end{enumerate}
\end{enumerate}

The conservativeness arises in case (2b) for $\Theta^{\epsilon}$, where \textsc{cutCertCons} potentially branches on $\eta$ within $\Theta^{\epsilon}$, whereas $\eta$ would be pruned in \textsc{cut} if evaluated pointwise. The additional descendant nodes generated from branching on $\eta$ are stored in $\mathcal{T}$ for $\Theta^{\epsilon}$ according to Assumption~\ref{ass:order-tree_uni}, whereas the node list would not be updated in \textsc{cut}. Thus, the node list returned by \textsc{cutCertCons} includes additional nodes compared to the list returned by \textsc{cut}, and is, therefore, a superset of this list for any fixed $\theta \in \Theta^{\epsilon}$. 
\end{proof}

\begin{theorem}[Subsequence property of explored node sequences]
\label{thrm:node_seq_cons_uni}
Assume that Assumptions~\ref{ass:cert_uni}--\ref{ass:order-tree_uni} hold. Let $\hat{\mathbb{B}}(\theta)$ denote the sequence of nodes explored  to solve the problem~\eqref{eq:MIQP_uni} for any fixed $\theta \in \Theta_0$ using Algorithm~\ref{alg:BnB_on_uni}  (\textsc{B\&B}). Moreover, let $\mathbb{B}(\theta)$ denote the sequence of nodes explored by Algorithm~\ref{alg:cert_uni} (\textsc{B\&BCert}) while employing \textsc{cutCertCons} applied to the problem~\eqref{eq:mpMIQP_uni} for a terminated region $\Theta^i$ in  $\mathcal{F}$ $= \{(\Theta^i, \kappa_{\text{tot}}^i, \bar{J}^i)\}_i$. Then, $\hat{\mathbb{B}}(\theta) \subseteq \mathbb{B}(\theta)$, $\forall \theta \in \Theta^i$ and all $i$.
\end{theorem}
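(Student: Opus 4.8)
The plan is to mirror the proof of Theorem~\ref{thrm:node_seq_uni}, but replace the exact equivalence of \textsc{cutCert} and \textsc{cut} (Lemma~\ref{lem:cut_cond_uni}) with the superset relation established in Lemma~\ref{cor:cut_cond_cons_uni}, and argue that this superset relation propagates through the induction. As before, by Lemma~\ref{lem:maintenance_uni} (whose hypotheses still hold, since Assumption~\ref{ass:BF_MIQP_uni} only restricts the node-selection strategy) the regions in $\mathcal{S}$ and $\mathcal{F}$ always partition $\Theta_0$, so for a fixed $\theta$ there is a unique region $\Theta \ni \theta$ at each iteration, and only iterations with $\theta \in \Theta$ are relevant. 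Thus Algorithm~\ref{alg:cert_uni} with \textsc{cutCertCons} can again be viewed as an iteration over a node list $\mathcal{T}$, to be compared with the online loop in Algorithm~\ref{alg:BnB_on_uni}.

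The induction hypothesis is now weaker: at the start of each relevant iteration, the node list $\mathcal{T}$ maintained by Algorithm~\ref{alg:cert_uni} (restricted to the region containing $\theta$) is a \emph{superset} of the online list $\hat{\mathcal{T}}$, with the relative order of the shared nodes preserved. The base case holds trivially since both lists are initialized to $\{(\emptyset,\emptyset)\}$. For the inductive step I would distinguish two cases for the node $\eta$ popped by the certification algorithm. If $\eta$ is also the node popped online (i.e.\ $\eta$ is the first element of $\hat{\mathcal{T}}$), then since the relaxation is certified correctly by Assumption~\ref{ass:cert_uni}, \textsc{cutCertCons} is invoked with the same relaxation data as \textsc{cut}; by Lemma~\ref{cor:cut_cond_cons_uni} the updated $\mathcal{T}^j$ for the subregion $\Theta^j \ni \theta$ is a superset of the online-updated $\hat{\mathcal{T}}$, and since \textsc{sortCert} preserves ordering (Lemma~\ref{lem:append_uni}, which is unaffected per the paragraph preceding Lemma~\ref{cor:cut_cond_cons_uni}), the superset-with-order property is maintained. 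If instead $\eta$ is one of the ``extra'' nodes present only in the certification list (an artifact of a past conservative dismissal), then the online algorithm does not pop $\eta$; processing $\eta$ and possibly branching it only \emph{adds} further nodes to $\mathcal{T}$, so the superset property is trivially preserved, and we simply advance to the next iteration. In both cases, crucially, every node explored online is eventually explored by the certification algorithm in the same order, because the shared sublist stays ordered and the extra nodes are interleaved without displacing the shared ones (they sit at positions dictated by the same sorting criterion $\rho$, which is parameter-independent for DF/BrF under Assumption~\ref{ass:BF_MIQP_uni}).

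Concluding, by induction the online explored sequence $\hat{\mathbb{B}}(\theta)$ appears as a subsequence of the certification explored sequence $\mathbb{B}(\theta)$ for every $\theta \in \Theta^i$; since $\theta$ and $i$ are arbitrary, $\hat{\mathbb{B}}(\theta) \subseteq \mathbb{B}(\theta)$ for all $\theta \in \Theta_0$, which is the claim (with ``$\subseteq$'' read as the subsequence relation, consistent with the statement). The main obstacle I anticipate is making precise and tracking the exact sense in which $\mathcal{T}$ is a ``superset with preserved order'': one must verify that a conservative dismissal at some earlier iteration cannot cause the online and certification algorithms to later pop genuinely different nodes \emph{in a way that reorders the shared nodes} — i.e.\ that the extra nodes only ever get inserted, processed, and pruned/branched without ever overtaking a shared node in the priority queue. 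This hinges on Assumption~\ref{ass:BF_MIQP_uni} forcing a parameter-independent $\rho$ (so insertion positions of shared nodes relative to each other never depend on $\theta$ or on the presence of extra nodes), together with the fact that the extra nodes' descendants are themselves extra, so the ``extra'' set is closed under the B\&B dynamics. I would state this closure property explicitly as the crux of the argument and then let the induction run as above.
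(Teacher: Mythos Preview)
Your inductive setup is sound and close in spirit to the paper's argument, but there is a genuine gap in the inductive step. Your induction invariant tracks only the node list $\mathcal{T}$ (superset with preserved order), not the upper-bound state. In \textsc{cutCertCons} the collection $\bar{\mathbb{J}}$ can receive new entries whenever an integer-feasible relaxation is encountered---including at an ``extra'' node. If such an extra node produced a \emph{strictly better} upper bound than anything the online algorithm has seen, then at a later ``shared'' node the certification algorithm could invoke the dominance cut (the condition at Step~\ref{step:cut_condcons_cert1_uni} would fire with this tighter $\bar{J}$, and the affine approximation could well validate it on the subregion containing $\theta$) while the online \textsc{cut} would not. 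That would \emph{remove} a node that the online algorithm explores, breaking the superset property. Lemma~\ref{cor:cut_cond_cons_uni} does not save you here because its comparison of \textsc{cutCertCons} and \textsc{cut} implicitly assumes compatible upper-bound inputs; once $\bar{\mathbb{J}}$ contains extra, potentially tighter bounds, you cannot invoke it as a black box.

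The missing ingredient is precisely Lemma~\ref{lem:decentent_uni}: every extra node is a descendant of some $\eta_p$ that was pruned online because $\barbelow{J}_{\eta_p}(\theta)\ge\bar J(\theta)$, and by Lemma~\ref{lem:decentent_uni} every descendant has relaxation value at least $\barbelow{J}_{\eta_p}(\theta)$. Hence any integer-feasible solution found at an extra node has objective value $\ge\bar J(\theta)$, so it never improves the effective upper bound at $\theta$. This is exactly the argument the paper gives, and it is what closes your gap: with it, your first case goes through (the certification's $\bar{\mathbb{J}}$ is, pointwise at $\theta$, no tighter than the online $\bar J$, so \textsc{cutCertCons} never prunes a shared node that \textsc{cut} keeps), and your second case is then genuinely harmless. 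Your closure observation (extra nodes beget only extra nodes) is correct and needed, but it is not by itself sufficient---you must also add the upper-bound invariant to your induction hypothesis and discharge it via Lemma~\ref{lem:decentent_uni}.
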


\begin{proof}
Let $( \Theta, \mathcal{T}, \kappa_{{tot}}, \bar{J})$ be popped from $\mathcal{S}$ at Step~\ref{step:pop_reg_uni} of Algorithm~\ref{alg:cert_uni}, and let a new node $\eta$ be popped from $\mathcal{T}$ at Step~\ref{step:pop_node_cert_uni}. Then, $\eta$ falls into one of two cases: 
\begin{enumerate}[label=(\roman*)]
\item It is a non-redundant node that would be explored online in Algorithm~\ref{alg:BnB_on_uni} for a  fixed $\theta \in \Theta$.
\item It is a redundant node that has been pruned online in Algorithm~\ref{alg:BnB_on_uni} for a fixed $\theta \in \Theta$. In other words, there existed a node $\eta_p$ in a previous iteration where $\eta \in \mathcal{D}(\eta_p)$, such that $\eta_p$ was pruned online in \textsc{cut} due to the dominance cut, whereas it was branched upon in \textsc{cutCertCons} (see case (2b) in the proof of Lemma~\ref{cor:cut_cond_cons_uni}).
\end{enumerate}

For a fixed $\theta \in \Theta$ corresponding to a redundant descendant node in case (ii), there exists a $\bar{J} \in \bar{\mathbb{J}}$ such that $ \barbelow{J}(\theta) \geq \bar{J}(\theta)$.  
From Lemma~\ref{lem:decentent_uni}, the nodes in $\mathcal{D}({\eta_p})$ do not yield a better solution. As a result, exploring these additional descendants does not improve the upper bound for $\theta$. Consequently, the upper bound used in future dominance cuts remains unchanged for $\theta$. Therefore, the exploration of future nodes is not influenced by the exploration of the additional nodes in $\mathcal{D}({\eta_p})$. This implies that Algorithm~\ref{alg:cert_uni} follows a conservative pruning strategy that may explore redundant nodes but does not exclude any relevant nodes explored by Algorithm~\ref{alg:BnB_on_uni}, i.e., $\hat{\mathbb{B}}(\theta) \subseteq \mathbb{B}(\theta)$ for all $\theta$.  
Since $\theta$ was chosen arbitrarily, this holds for all $\theta \in \Theta_0$.
\end{proof}

\begin{theorem}[Upper bound on the complexity measure]
\label{thrm:complexity_meas_cons_uni}
Assume that Assumptions~\ref{ass:cert_uni}--\ref{ass:order-tree_uni} hold, and let $\kappa^*_{\text{tot}}(\theta)$ denote the accumulated complexity measure to solve the problem~\eqref{eq:MIQP_uni} for a fixed $\theta \in \Theta_0$ using Algorithm~\ref{alg:BnB_on_uni}.
Then, the complexity measure $\kappa_{\text{tot}}(\theta): \Theta_0 \rightarrow \mathbb{N}_0$ returned by Algorithm~\ref{alg:cert_uni} while employing \textsc{cutCertCons} applied the problem~\eqref{eq:mpMIQP_uni} for a  region $\Theta^i$ in $\mathcal{F} = \{(\Theta^i, \kappa^i_{\text{tot}}, \bar{J}^i)\}_i$
satisfies $\kappa^i_{\text{tot}} \geq \kappa^*_{\text{tot}}$, $\forall \theta \in \Theta^i$, and all $i$.
\end{theorem}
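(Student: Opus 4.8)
The plan is to reduce this statement to Theorem~\ref{thrm:node_seq_cons_uni} together with the fact that the complexity measure is accumulated node-by-node and that \textsc{solveCert} reports the correct per-node cost (Assumption~\ref{ass:cert_uni}, item~5). First I would fix an arbitrary $\theta \in \Theta_0$ and let $\Theta^i \in \mathcal{F}$ be the unique terminated region containing $\theta$ (uniqueness follows from Lemma~\ref{lem:maintenance_uni} applied to the conservative run, noting that, by Lemma~\ref{cor:cut_cond_cons_uni}, all partitioning in \textsc{cutCertCons} is still polyhedral and forms a partition, so the maintenance-of-partition argument carries over unchanged). On that region Algorithm~\ref{alg:cert_uni} records $\kappa_{\text{tot}}^i$ by initializing $\kappa_{\text{tot}} = 0$ and, at each processed node, adding the per-node measure $\kappa^j$ returned by \textsc{solveCert} at Step~\ref{step:gener_reg_uni}. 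Hence $\kappa_{\text{tot}}^i = \sum_{\eta \in \mathbb{B}(\theta)} \kappa_\eta(\theta)$, where $\mathbb{B}(\theta)$ is the sequence of nodes explored by the conservative certification run for $\theta$, and $\kappa_\eta(\theta)$ is the per-node cost of the relaxation at $\eta$ for that $\theta$. Symmetrically, from Algorithm~\ref{alg:BnB_on_uni} (Steps~\ref{step:solve_on_uni}--\ref{step:kappa_on_uni}) we have $\kappa^*_{\text{tot}}(\theta) = \sum_{\eta \in \hat{\mathbb{B}}(\theta)} \kappa_\eta(\theta)$.

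Next I would invoke Theorem~\ref{thrm:node_seq_cons_uni}, which gives $\hat{\mathbb{B}}(\theta) \subseteq \mathbb{B}(\theta)$: every node explored online is also explored in the conservative certification run. Combined with Assumption~\ref{ass:cert_uni}(5), the per-node cost $\kappa_\eta(\theta)$ agrees between the online solver and \textsc{solveCert} for each common node $\eta$ (the relaxation at a node $\eta = (\mathcal{B}_0,\mathcal{B}_1)$ depends only on $\eta$ and $\theta$, not on the pruning history, so the same $\eta$ yields the same relaxation and hence the same cost in both runs). Since $\kappa_\eta(\theta) \ge 0$ for every node (the complexity measure — iterations, nodes, or flops — is nonnegative), dropping the extra summands in $\mathbb{B}(\theta) \setminus \hat{\mathbb{B}}(\theta)$ can only decrease the sum:
\begin{equation*}
\kappa_{\text{tot}}^i \;=\; \sum_{\eta \in \mathbb{B}(\theta)} \kappa_\eta(\theta) \;\ge\; \sum_{\eta \in \hat{\mathbb{B}}(\theta)} \kappa_\eta(\theta) \;=\; \kappa^*_{\text{tot}}(\theta).
\end{equation*}
Because $\theta \in \Theta^i$ was arbitrary, this holds for all $\theta \in \Theta^i$ and all $i$, which is the claim.

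The main obstacle — and the point that needs to be stated carefully rather than waved through — is the interaction between the conservative dominance cut and the node-selection / branching machinery. One must make sure that the extra nodes explored conservatively never \emph{perturb} the subset of $\hat{\mathbb{B}}(\theta)$: i.e., that Theorem~\ref{thrm:node_seq_cons_uni}'s containment is genuinely a containment of \emph{sequences} and not merely of node sets. This is exactly what the proof of Theorem~\ref{thrm:node_seq_cons_uni} secures via Lemma~\ref{lem:decentent_uni} (the redundant subtree rooted at a falsely-branched node $\eta_p$ contains no improving solution, so the upper bound collection $\bar{\mathbb{J}}(\theta)$ — and therefore every subsequent dominance test — is unaffected), together with Assumption~\ref{ass:BF_MIQP_uni} (so that the node ordering in $\mathcal{T}$ is parameter-robust under the DF/BrF strategies and the conservative approximation of quadratic comparisons in \textsc{sortCert} is never needed). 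So the proof itself is short: it is essentially "accumulate nonnegative per-node costs over a superset," with all the real work delegated to Theorem~\ref{thrm:node_seq_cons_uni} and the lemmas it rests on; I would only need to be explicit that the per-node cost is well-defined and nonnegative and that the accumulation in Algorithm~\ref{alg:cert_uni} is literally a sum over $\mathbb{B}(\theta)$.
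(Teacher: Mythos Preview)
Your proposal is correct and follows essentially the same approach as the paper: invoke Theorem~\ref{thrm:node_seq_cons_uni} to obtain $\hat{\mathbb{B}}(\theta)\subseteq\mathbb{B}(\theta)$, use Assumption~\ref{ass:cert_uni} to match the per-node complexity contributions, and conclude that the accumulated cost over the superset of relaxations upper-bounds the online cost. The paper's proof is just a two-sentence version of what you wrote; your explicit sum decomposition and appeal to nonnegativity of $\kappa_\eta(\theta)$ make the logic clearer but add nothing substantively new.
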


\begin{proof}
From Assumption~\ref{ass:cert_uni} and Theorem~\ref{thrm:node_seq_cons_uni}, the sequence of nodes, and hence QP relaxations, considered in Algorithm~\ref{alg:cert_uni} for a fixed $\theta$ form a superset of the QP relaxations that would be solved in Algorithm~\ref{alg:BnB_on_uni}. 
As a result, the accumulated complexity measure $\kappa^i_{\text{tot}}(\theta)$ returned by Algorithm~\ref{alg:cert_uni} for $\Theta^i$ upper bounds 
$\kappa^*_{\text{tot}}(\theta)$ returned by Algorithm~\ref{alg:BnB_on_uni}, $\forall \theta \in \Theta^i$, $\forall i$, which completes the proof. 
\end{proof}

\begin{corollary} \label{corr:comp_pwc_cons_uni} 
The complexity measure $\kappa_{\text{tot}}(\theta): \Theta_0 \rightarrow \mathbb{N}_0$ returned by Algorithm~\ref{alg:cert_uni} while employing \textsc{cutCertCons} applied to the problem~\eqref{eq:mpMIQP_uni} is PPWC. 
\end{corollary}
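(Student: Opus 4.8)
The plan is to obtain the result from two ingredients that are already in place: Theorem~\ref{thrm:complexity_meas_cons_uni}, which guarantees that $\kappa_{\text{tot}}(\theta)$ is constant on every terminated region $\Theta^i$ in $\mathcal{F}$, and an induction showing that, once \textsc{cutCertCons} replaces \textsc{cutCert}, every parameter set ever produced by Algorithm~\ref{alg:cert_uni} is polyhedral. The first ingredient yields piecewise constancy; the second upgrades it to \emph{polyhedral} piecewise constancy, i.e.\ PPWC in the sense of Definition~\ref{def:pwa:uni}.

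First I would set up the induction over the iterations of Algorithm~\ref{alg:cert_uni}, with the invariant that every region stored in $\mathcal{S}$ and in $\mathcal{F}$ is polyhedral. The base case is immediate, since $\mathcal{S}$ is initialized with the polyhedral set $\Theta_0$ and $\mathcal{F}$ is empty. For the inductive step I would take a polyhedral region $\Theta$ popped at Step~\ref{step:pop_reg_uni} and track the partitioning operations applied to it within an iteration: (i) relaxation certification by \textsc{solveCert} at Step~\ref{step:cert_node_uni}, which by item~(1) of Assumption~\ref{ass:cert_uni} returns a finite polyhedral partition of $\Theta$ when $\Theta$ is polyhedral; (ii) the conservative dominance-cut split inside \textsc{cutCertCons}, which by Lemma~\ref{cor:cut_cond_cons_uni} splits the polyhedral $\Theta$ into the polyhedral subsets $\tilde{\Theta} = \{\theta \in \Theta \mid J'(\theta) \geq 0\}$ and $\Theta \setminus \tilde{\Theta}$, since $J'(\theta)$ is affine in $\theta$ by construction in Section~\ref{sec:qp_comp_apprx} (for the McCormick variant one additionally projects out the auxiliary variables, and projections of polyhedra are polyhedra); (iii) branching-index selection by \textsc{branchIndCert} at Step~\ref{step:branch1cons_cert_uni}, where for predetermined branching no partitioning occurs and for most-infeasible branching the score functions are PWA (Lemma~\ref{lem:most_inf_score_uni}), so the comparisons $s_k(\theta) \geq s_i(\theta)$ cut $\Theta$ only with hyperplanes; and (iv) node insertion by \textsc{sortCert} at Step~\ref{step:appendcons_cert_uni}, which under Assumption~\ref{ass:BF_MIQP_uni} uses only the parameter-independent DF or BrF criteria and hence introduces no further partitioning. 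Each operation thus maps a polyhedral region to finitely many polyhedral subregions, closing the induction; combining this with the analogue of Lemma~\ref{lem:maintenance_uni} obtained by substituting \textsc{cutCertCons} for \textsc{cutCert} shows that these regions still form a finite partition of $\Theta_0$.

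To conclude, at termination $\mathcal{S} = \emptyset$, so $\mathcal{F} = \{(\Theta^i, \kappa_{\text{tot}}^i, \bar{J}^i)\}_i$ is a finite collection of polyhedral sets partitioning $\Theta_0$, and by Theorem~\ref{thrm:complexity_meas_cons_uni} we have $\kappa_{\text{tot}}(\theta) = \kappa_{\text{tot}}^i$ for every $\theta \in \Theta^i$. By Definition~\ref{def:pwa:uni} this is precisely a PPWC function, completing the proof.

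The step I expect to be the main obstacle is clause~(iv): the only mechanism in the entire framework capable of introducing genuinely non-polyhedral (quadratic) cuts for MIQPs is the best-first rule through $\rho_{\text{BF}}(\eta(\theta)) = \barbelow{J}(\theta)$, and it is exactly Assumption~\ref{ass:BF_MIQP_uni} that excludes this whenever conservative quadratic comparisons are in force. I would make the proof invoke that assumption explicitly at this point, and additionally verify that any invoked start heuristic — and, in the MILP case, \textsc{LBCert}/\textsc{RINSCert}, which call \textsc{B\&BCert} recursively — contributes only affine, hence polyhedral, partitioning, so that the polyhedral invariant is genuinely maintained throughout.
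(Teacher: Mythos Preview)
Your proof is correct and follows essentially the same approach as the paper. The paper's own proof is a one-liner — ``It follows from an analogous reasoning as in the proof of Corollary~\ref{corr:comp_pwc_uni}'' — and your induction over the decomposition steps (i)--(iv) is precisely a spelled-out version of that analogous reasoning, with the key point being that \textsc{cutCertCons} together with Assumption~\ref{ass:BF_MIQP_uni} ensures every partitioning step uses only affine cuts. One minor remark: the constancy of $\kappa_{\text{tot}}$ on each $\Theta^i$ is really by construction of the algorithm (each $\kappa^i_{\text{tot}}$ is stored as a single value in $\mathcal{F}$) rather than a consequence of Theorem~\ref{thrm:complexity_meas_cons_uni}, which concerns the upper-bound relation to the online complexity; but this is a matter of attribution, not a gap.
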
 
\begin{proof}
It follows from an analogous reasoning as in the proof of Corollary~\ref{corr:comp_pwc_uni}.
\end{proof}
\section{Extensions} 
\label{sec:extention_uni} 

\subsection{Warm-starting the subproblems} \label{subsec:warmstart_uni}
In the \bnb\ search tree, a sequence of similar relaxations is solved, differing only in the fixed binary variables along a path. To improve runtime, it is standard in \bnb to utilize the solution to the relaxation of a parent node to warm-start the subproblems in its child nodes. In this case, each node retains relevant information from its parent’s solution, particularly the active set $ \mathcal{A} $. 

To enable complexity-certified warm starts for the solver of the relaxations in \bnb-based MILP solvers, a \textit{recovery} procedure is required to extract an initial basis (active set) for the child nodes from $\mathcal{A}$ of the parent node, to drop a constraint from $ \mathcal{A} $. 
A basis recovery procedure for this purpose is provided in~\cite[Algorithm~3]{shoja2022exact} and can be directly incorporated into Algorithm~\ref{alg:cert_uni} to warm-start the solver of the relaxations for MILPs in the certification framework.

For MIQPs, warm-starting is more straightforward, as a recovery approach is not required in this case. The active set  $ \mathcal{A} $ from a parent node can be directly used to initialize and warm-start the solver for the relaxations of the child nodes, enabling complexity-certified warm starts of the solver of the relaxations in \bnb\ for MIQPs using Algorithm~\ref{alg:cert_uni}.

\subsection{Certifying suboptimal \textsc{B\&B} algorithms} %
To reduce the computational complexity of challenging optimization problems, one approach is to relax the requirement for global optimality and accept suboptimal solutions.  
In~\cite{ibaraki1983using}, three suboptimal methods were introduced to reduce the computational burden of B\&B for MILPs by trading optimality for solution time. Building on these methods, the certification framework in~\cite{shoja2023subopt} established worst-case complexity bounds for suboptimal B\&B-based MILP solvers using these suboptimal strategies, summarized below.
\begin{itemize}[leftmargin=*]
\item \textit{The $\epsilon$-optimality method.} This method relaxes the dominance cut $\barbelow{J}(\theta) \geq \bar{J}(\theta)$ by introducing an $\epsilon$-optimality criterion. Specifically, the dominance cut is satisfied if  $\varepsilon(\theta) + (1 + \varepsilon_r)\barbelow{J}(\theta) \geq \bar{J}(\theta)$, 
where \(\varepsilon(\theta) \geq 0\) is generally assumed to belong to the same function class as \(\barbelow{J}(\theta)\) over \(\Theta\), and \(\varepsilon_r \geq 0\) is a constant. 

\item \textit{The $T$-cut method.} This method limits the number of processed nodes to a predefined maximum, terminating the algorithm after $T_0$ node decompositions (branching).

\item \textit{The $M$-cut method.} This method restricts the size of the pending node list \(\mathcal{T}\) to a fixed maximum of $M_0$ nodes. 
\end{itemize}

The $T$-cut and $M$-cut suboptimality methods rely on resource constraints and are independent of whether the objective function is linear or quadratic. Consequently, the results in~\cite{shoja2023subopt} when employing these methods can be directly extended to certify suboptimal \bnb\ algorithms for MIQPs without modification.  
For the $\epsilon$-optimality method, the results in~\cite{shoja2023subopt} can, in principle, be extended to MIQPs if the exact framework is used. In summary, the framework can compute worst-case complexity guarantees for suboptimal B\&B algorithms employing these suboptimality methods for both MILPs and MIQPs. Note however, if the conservative framework from Section~\ref{sec:cert_cons_uni} for MIQPs is combined with the $\epsilon$-optimality method, the subsequence property from Theorem~\ref{thrm:node_seq_cons_uni} is generally no longer preserved. 

\subsection{Certifying the number of floating-point operations}  
\label{subsec:flop_uni}  

The unified certification framework in this work provides detailed insights into the \bnb algorithm by identifying the operations performed, their arguments, and their execution frequency. This allows mapping each operation and its arguments to the corresponding number of floating-point operations (flops).  
Thus, the certification framework can, in principle, be extended to quantify the number of flops required by the online \bnb algorithm to compute optimal or suboptimal solutions for all parameters of interest. To this end, it is necessary to track \emph{all} operations executed during the certification process of the relaxations in \textsc{solveCert}, as well as those performed throughout the \bnb algorithm.

\subsection{Extension to general mixed-integer optimization problems} 
\label{subsec:general_prob_uni}  
The certification framework in Algorithm~\ref{alg:cert_uni} has the potential to be extended to more general (multi-parametric) mixed-integer optimization problems, provided that (1) the \textsc{solveCert} function (at Step~\ref{step:cert_node_uni}) is available for other types of relaxations, and (2) value function comparisons (e.g., at Steps~\ref{step:cut_cond_cert1_uni} and~\ref{step:cut_cond_cert2_uni} of Algorithm~\ref{alg:cut_cond_cert_uni}) can be performed for objective functions beyond linear and quadratic functions.
\section{Numerical experiments}
In this section, the proposed complexity certification framework is applied to random MILPs, MIQPs, and optimization problems originating from an MPC application. The complexity measures considered include the number of iterations ($\kappa^I$) and the number of \bnb nodes ($\kappa^N$). For the \textsc{solveCert} procedure, the certification algorithm for LPs/QPs described in~\cite{arnstrom2021unifying} was utilized. Furthermore, to ensure tractable computations when certifying \bnb-based MIQP solvers, the conservative framework with the affine approximations in~\eqref{eq:qp_comp_aff_uni} and~\eqref{eq:qp_comp_single_uni} was applied. 
The numerical experiments were implemented in Julia (version 1.10.5) and executed on a laptop with an Intel$^{\circledR}$ Core i7-8565U CPU at 1.8 GHz. 

\subsection{Random examples}
First, we consider randomized mp-MILPs and mp-MIQPs with $n_b = n_c =8$ binary and continuous decision variables, 
$m = 16$ constraints, and $n_{\theta} = n_b/2 $ parameters. 
The coefficients of~\eqref{eq:mpMILP_uni} and~\eqref{eq:mpMIQP_uni} were generated as: $\bar{H}~\sim~\mathcal{N}(0,1)$, $H~=~\bar{H} \bar{H}^T$, $f~\sim~\mathcal{N}(0,1)$, $c~\sim~\mathcal{N}(0,1)$, $f_{\theta}~\sim~\mathcal{N}(0,1)$, $A~\sim~\mathcal{N}(0,1)$, $b~\sim~\mathcal{U}([0,2])$, and  $W~\sim~\mathcal{N}(0,1)$, 
where \(\mathcal{N}(\mu, \sigma)\) represents a normal distribution with mean \(\mu\) and standard deviation \(\sigma\), and \(\mathcal{U}([l, u])\) denotes a uniform distribution over the interval \([l, u]\). The parameter set was defined as $\Theta_0 = \{\theta \in \mathbb{R}^{n_{\theta}} \mid -0.5 \leq \theta_i \leq 0.5, \forall i\}$.

The certification framework in Algorithm~\ref{alg:cert_uni} was validated 
on 10,000 samples drawn from \(\Theta_0\) on a deterministic grid, along with the Chebyshev centers of the resulting terminated regions. The online \textsc{B\&B} Algorithm~\ref{alg:BnB_on_uni} was applied to the corresponding MILP and MIQP instances. For MILPs, 
the complexity measures \(\kappa^I\) and \(\kappa^N\) were identical across all sampled points and problems, validating the correctness of the certification framework. For MIQPs, the experiments confirmed that the conservative certification framework provided valid upper bounds on the worst-case complexity measures without underestimation. The worst-case iteration number $\kappa^I$ was overestimated by approximately $8\%$ when using the affine approximation in~\eqref{eq:qp_comp_single_uni}, but this conservativeness decreased to $4\%$ when using the affine approximation in~\eqref{eq:qp_comp_aff_uni}.

To demonstrate the potential use of the results presented in this work, Tab.~\ref{tab:cert_res_uni} summarizes the results of applying the certification framework to 50 random mp-MILPs, averaged over all experiments with different warm-started B\&B configurations. 
The table presents complexity metrics, including the worst-case (wc) and average (avg) number of iterations and nodes, as well as the number of final regions (\#reg) and the CPU time required for certification (\(t_{\text{cert}}\)). The results highlight that the BF and DF strategies generally require fewer computations compared to BrF.  Additionally, the choice of branching strategy (first binary (FB) variable vs. MIB) significantly impacts the results, with MIB typically leading to fewer computations. For random mp-MIQPs, the same trend was observed, with BrF and FB settings demanding considerably higher computational effort. While the purpose of this work is not to determine the superiority of a particular B\&B setting, it provides tools that enable answering that question for the specific application.

\begin{table}[htbp]
\centering
\caption{Certification results for random mp-MILPs.}
\scalebox{0.94}{ 
\begin{tabular}{l|cccccc}
B\&B setting & 
$\kappa^I_{\text{wc}}$ & $\kappa^N_{\text{wc}}$ & $\kappa^I_{\text{avg}}$ & $\kappa^N_{\text{avg}}$ & $\#\text{reg}$ & $t_{\text{cert}}$[sec]  \\ \midrule 
DF, FB             & 75.78  & 45.17  & 64.93   & 32.9  & 775  & 12.7 \\ 
DF, MIB            & 45.28  & 8.78   & 42.19   & 6.78  & 388    & 8.3  \\ 
BrF, FB            & 106.94 & 76.61  & 91.41   & 56.08 & 2918 & 20.7 \\ 
BrF, MIB           & 46.83  & 9.67   & 43.44   & 7.46  & 482   & 9 \\ 
BF, FB             & 59.56  & 25.89  & 52.49   & 18.46 & 706  & 11.9  \\ 
BF, MIB            & 45.61  & 8.83   & 42.09   & 6.44  & 416   & 8.1 
\end{tabular}
}
\label{tab:cert_res_uni}
\end{table}

To give some visual insights into the results of Algorithm~\ref{alg:cert_uni}, a random MILP in a two-dimensional parameter space was certified using the BF, MIB, and warm-started settings in B\&B. Fig.~\ref{fig:example_rand_uni}~(a) illustrates the resulting accumulated number of iterations, where regions with the same number of iteration count share the same color. 
To demonstrate the online performance, the simulation results were obtained by sampling $\Theta_0$ on a deterministic grid and applying Algorithm~\ref{alg:BnB_on_uni} to the corresponding MILP for each sample. The accumulated iteration numbers for the sample points are depicted by $*$ in Fig.~\ref{fig:example_rand_uni}~(b). As the figure indicates, the accumulated iteration numbers from the certification and online results coincide exactly for all parameters.

\begin{figure}[htbp]  
\begin{subfigure}{0.47\columnwidth}		
\centerline{
\includegraphics[scale=0.58]{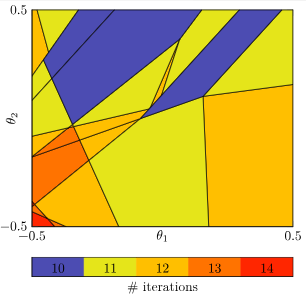}}	\caption{certification} 
\end{subfigure}
\hfill
\begin{subfigure}{0.47\columnwidth}	
\centerline{
\includegraphics[scale=0.58]{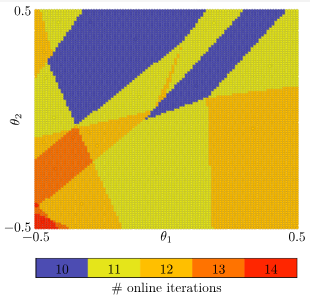}}	
\caption{online  samples} 
\end{subfigure} 
\caption{Resulting parameter space for a random example determined by (a) applying  Algorithm~\ref{alg:cert_uni}; (b) executing Algorithm \ref{alg:BnB_on_uni} over a deterministic grid in the parameter space.}
\label{fig:example_rand_uni}
\end{figure}

\subsection{MPC application}
Next, we apply the proposed method to a hybrid MPC example to demonstrate its applicability. The application involves a linearized inverted pendulum on a cart, constrained by a wall, which introduces contact forces into the system (see Fig.~\ref{fig:pend_uni}). This scenario is used in works such as~\cite{marcucci2020warm} and~\cite{arnstrom2023bnb}. The objective is to stabilize the pendulum at the origin ($z_1 = 0$) while maintaining its upright position ($z_2 = 0$). The control input is a force $u_1$ exerted directly on the cart. When the tip of the pendulum contacts the wall, a contact force $u_2$ is generated, necessitating the inclusion of binary variables in the model. To account for this, two binary variables are introduced into the model. The control inputs were constrained by $|u_1| \leq 1$, $u_2 \geq 0$, and $u_3, u_4 \in \{0, 1\}$, while the state constraints were $|z_1| \leq d$, $|z_2| \leq \frac{\pi}{10}$, $|z_3| \leq 0.5$, and $|z_4| \leq 0.5$, with $z_3 = \dot{z_1}$, $z_4 = \dot{z_2}$, and $d = 0.5$. An MPC with a quadratic performance measure, a prediction/control horizon of $N$, and a time step of $0.1$ seconds was employed. The (initial) state vector $z = [z_1, z_2, z_3, z_4]^T$ formed the parameter vector $\theta$ within the multi-parametric framework, and the parameter set $\Theta_0$ was determined by the state constraints.
The resulting mp-MIQP problem (using the formulation in, e.g.,~\cite{borrelli2017predictive})  contains $n = 4N$ decision variables, of which $2N$ are binary. The weight matrices and dynamics were consistent with those used in~\cite{marcucci2020warm}. For further details, see~\cite{marcucci2020warm}. 

\begin{figure}[htbp]  	
\centerline{	\includegraphics[scale=0.23]{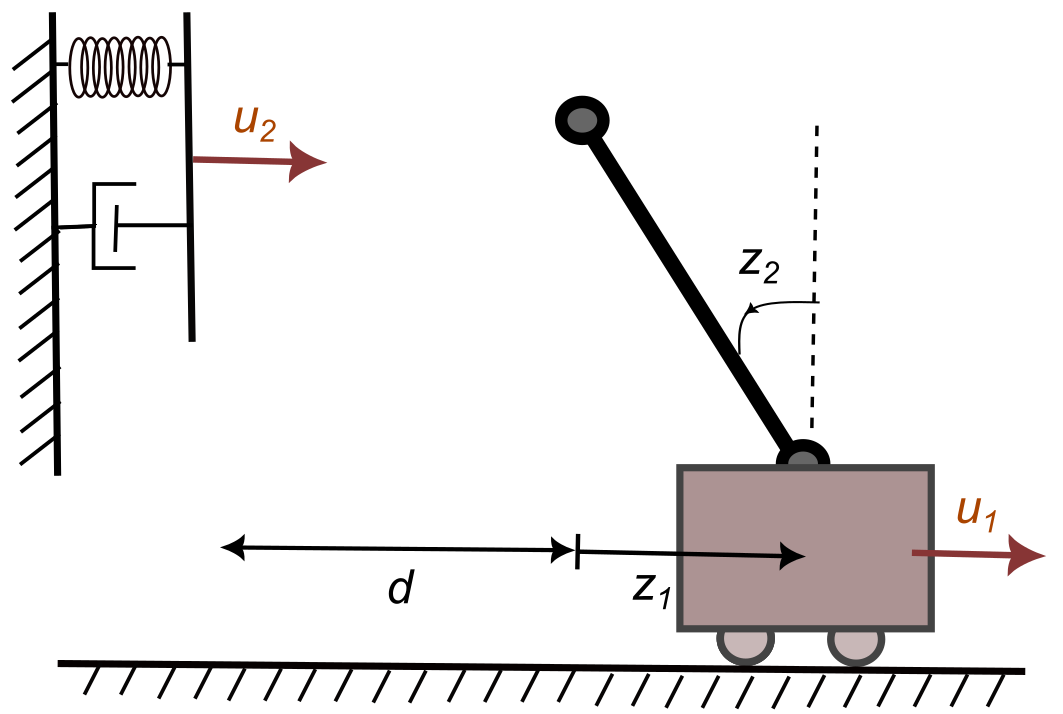}}  
\caption{Regulating the inverted pendulum on a cart with contact forces.} 
\label{fig:pend_uni}
\end{figure} 

Tab.~\ref{tab:pend_MIQP_uni} summarizes the complexity certification results obtained by applying Algorithm~\ref{alg:cert_uni} to the resulting mp-MIQP from MPC for the inverted pendulum with varying horizons $N$, while employing the affine approximations~\eqref{eq:qp_comp_single_uni} for quadratic function comparisons. 
The table reports the number of decision variables $n$, constraints $m$, the resulting complexity measures, and certification time. 
As $N$ increases, the problem size increases, and consequently, the computations required grow significantly. Larger horizons result in a noticeable rise in the number of regions and certification time, illustrating the combinatorial complexity of the problem. These results highlight the framework's ability to handle real-world applications, even when executed on a standard laptop with limited computational resources.

\begin{table}[htbp]
\centering
\caption{
Dimensions of the resulting mp-MIQPs for the inverted pendulum example, along with the worst-case and average number of iterations  ($\kappa^I_{\text{wc}}$ and $\kappa^I_{\text{avg}}$) and B\&B nodes ($\kappa^N_{\text{wc}}$ and $\kappa^N_{\text{avg}}$), the number of final regions ($\#\text{reg}$), and the  time ($t_{\text{cert}}$) taken by the Julia implementation of Algorithm~\ref{alg:cert_uni}.} 
\label{tab:pend_MIQP_uni}
\scalebox{0.96}{
\begin{tabular}{@{}ccccccccc@{}}
\toprule
$N$    &  $n$    &  $m$    & 
$\kappa^I_{\text{wc}}$ & $\kappa^N_{\text{wc}}$ & $\kappa^I_{\text{avg}}$ & $\kappa^N_{\text{avg}}$ & $\#\text{reg}$ & $t_{\text{cert}}$[sec] \\ \midrule
1   & 4  &  19  
& 9               & 4                & 5.7             & 2.9          & 14      & 1.4 \\
2    & 8  &  38      & 22               & 8               & 17.5           & 7.9          & 56      & 6.2 \\
3    & 12  &  57      & 76               & 47               & 44.2           & 26         & 101     & 44.6 \\
4    & 16  &  76      & 134              & 68               & 66.1           & 21.7         & 493     & 122 \\ 
5    & 20  &  95      & 198              & 76               & 76.6           & 21.9        & 1445
& 635 \\ 
\bottomrule
\end{tabular}
}
\end{table}

\section{Conclusion}
This paper presents a framework that extends and unifies complexity certification results for standard \bnb algorithms applied to MILPs and MIQPs. The proposed method allows for an exact analysis of computational complexity, in terms of e.g., the number of iterations and/or B\&B nodes as functions of the parameters in an mp-MILP or mp-MIQP problem, providing precise insights into the computations required when the problem is solved online.  
The framework partitions the parameter space into regions, each corresponding to parameter sets that generate identical solver state sequences for reaching a solution. 
The framework was extended to include different branching strategies, node-selection strategies, and heuristics. 
For MIQPs, quadratic function approximations were developed to handle non-polyhedral parameter-space partitions. These approximations result in good upper bounds on the worst-case computational complexity for MIQPs. 
By providing theoretical guarantees alongside practical insights, the framework enhances the reliability and performance of \bnb-based solvers, which are essential for real-time applications such as hybrid MPC. 
Future work will focus on exploring parallel implementations of the unified certification framework to be able to certify even more challenging problem instances. 

\section*{Appendix: Proof of lemmas from Section~\ref{subsec:prop_cert_uni}} \label{append:proof_uni} 
\subsubsection{Proof of Lemma~\ref{lem:append_uni} \textit{[Equivalence of \textsc{sortCert} and \textsc{sort}]}}
Given Assumptions~\ref{ass:cert_uni}--\ref{ass:order-tree_uni}, the sorting criterion $ \rho(\cdot) $ is identical in both procedures, pointwise. It is also assumed that the inputted $\mathcal{T}$ is sorted according to $ \rho(\cdot) $.
At each iteration of the for-loop in \textsc{sortCert}, $ \Theta $ is potentially partitioned into two subsets for each node $ \eta_i $:  
\begin{equation} \label{eq:part_sort_uni1}
\Theta^i = \{\theta \in \Theta \mid \rho(\tilde{\eta}_0(\theta)) \leq \rho(\eta_i(\theta))\},
\end{equation}
and the remaining region,  
\begin{equation} \label{eq:part_sort_uni2}
\Theta = \{\theta \in \Theta \mid \rho(\tilde{\eta}_0(\theta)) > \rho(\eta_i(\theta))\}.
\end{equation}  
New nodes $ \tilde{\eta}_0 $ and $ \tilde{\eta}_1 $ are then inserted into $ \mathcal{T}^i $ at position $ i $, ensuring $ \rho(\tilde{\eta}_0) \leq \rho(\eta_i) $ for all $ \theta \in \Theta^i $ (Step~\ref{step:append_off_updateT_uni}), thereby preserving the sorting order. The subset $ \Theta^i $ and the list $ \mathcal{T}^i $ are stored in $ \mathcal{S} $ at Step~\ref{step:append_off_push_uni}, while the process continues for the remaining region $ \Theta $ until it is fully partitioned.  

Now, consider \textsc{sort} for any fixed $ \theta \in \Theta^i $. Since $ \rho(\cdot) $ is evaluated pointwise, it inserts $ \tilde{\eta}_0 $ and $ \tilde{\eta}_1 $ into $ \mathcal{T} $ at the first position $ i $ satisfying $ \rho(\tilde{\eta}_0) \leq \rho(\eta_i) $ (Step~\ref{step:append_on_pushT_uni}), identical to Step~\ref{step:append_off_updateT_uni} in \textsc{sortCert}. Since the additional Step~\ref{step:append_off_push_uni} in \textsc{sortCert} only records $ \mathcal{T}^i $ in $ \mathcal{S} $ without modifying it, the final node lists remain unchanged. 
Thus, \textsc{sortCert} partitions $ \Theta $ while preserving the same node ordering as \textsc{sort}. As a result, the resulting node lists $ \mathcal{T}^i $ and $ \mathcal{T} $ are identical for  $\theta$.  Since $\theta$ and hence $i$ are arbitrary, it holds for all $ \theta \in \Theta^i$, and all $ i$, completing the proof. \hfill \textbf{\(\blacksquare\)}

\subsubsection{Proof of Lemma~\ref{lem:most_inf_score_uni} \textit{[Equivalence of \textsc{mostInfScoreCert} and~\eqref{eq:branch_minf_on_uni}]}}
Consider an index $i \in \bar{\mathcal{B}}$ selected at Step~\ref{step:minf_cert_j_uni} in \textsc{mostInfScoreCert}. At Step~\ref{step:minfb_partitioned_cert}, the current set $\Theta$ is partitioned along $\barbelow{x}_i(\theta) = \barbelow{F}_i \theta + \barbelow{g}_i = 0.5$ into subsets $\Theta^{i_1}$ and $\Theta^{i_2}$. Because $\barbelow{x}_i(\theta)$ is affine, an affine score function within each subset is directly obtained, with:
\begin{itemize}[leftmargin=*]
    \item In $\Theta^{i_1}$, $s_i(\theta) = \barbelow{x}_i(\theta)$, with $(C_i, d_i) = (\barbelow{F}_i, \barbelow{g}_i)$ (Step~\ref{step:minfb_Theta1_cert}).
    \item In $\Theta^{i_2}$, $s_i(\theta) = -\barbelow{x}_i(\theta) + 1$, with $(C_i, d_i) = (-\barbelow{F}_i, -\barbelow{g}_i + 1)$ (Step~\ref{step:minfb_Theta2_cert}).
\end{itemize}
Thus, the score function satisfies:
\[
s_i(\theta) = 0.5 - |\barbelow{x}_i(\theta) - 0.5|, \quad \forall \theta \in \Theta^{i_1} \cup \Theta^{i_2} \subseteq \Theta.
\]
confirming that $s_i(\theta)$ satisfies~\eqref{eq:branch_minf_on_uni} for any fixed $\theta \in \Theta$.
Since $i$ was chosen arbitrarily, this holds for all $i \in \bar{\mathcal{B}}$.
By iterating through all indices, \textsc{mostInfScoreCert} partitions $\Theta$ into subsets $\{\Theta^j\}_j$, each associated with an affine score function $(C^j, d^j)$ satisfying~\eqref{eq:branch_minf_on_uni}. Therefore, Algorithm~\ref{alg:mostinfeas_cert_uni} correctly computes the MIB score function in~\eqref{eq:branch_minf_on_uni} pointwise.
 \hfill \textbf{\(\blacksquare\)}

\subsubsection{Proof of Lemma~\ref{lem:branch_uni} \textit{[Equivalence of \textsc{branchIndCert} and \textsc{branchInd}]}}
By Assumption~\ref{ass:cert_uni}, the correctly computed solution to the relaxation \( \barbelow{x}(\theta) \) results in an identical candidate set \( \bar{\mathcal{B}} \) in both procedures.  At each iteration of the for-loop in \textsc{branchIndCert}, the parameter set \( \Theta \) is partitioned into subsets based on the highest score condition:  
\begin{equation} \label{eq:part_branch_uni}
\Theta^k = \{\theta \in \Theta \mid s_k(\theta) \geq s_i(\theta), \;\forall i \in \bar{\mathcal{B}} \setminus \{k\} \},
\end{equation}
ensuring that for any \( \theta \in \Theta^k \), $k \in \bar{\mathcal{B}}$ gives the highest score. The subset \( \Theta^k \) and $k$ are then stored in \( \mathcal{F}_b \) at Step~\ref{step:branch_pushF_cert_uni}, while the process continues for the remaining region of $\Theta$ until it is fully partitioned. 

Now, consider \textsc{branchInd} for a fixed \( \theta \in \Theta^k \). 
Since \( s_i(\theta) \) is evaluated pointwise, \textsc{branchInd} selects \( k \) as the index that maximizes \( s_i(\theta) \) among all \( i \in \bar{\mathcal{B}} \) at Step~\ref{step:select_branch_on_uni}, identically to the selection at Step~\ref{step:branch_partition_cert_uni} in \textsc{branchIndCert}.
 Thus, the chosen branching index \( k \) is the same in both procedures for all \( \theta \in \Theta^k \) 
and all $k$, completing the proof. \hfill \textbf{\(\blacksquare\)}

\subsubsection{Proof of Lemma~\ref{lem:cut_cond_uni} \textit{[Equivalence of \textsc{cutCert} and \textsc{cut}]}}
Given Assumption~\ref{ass:cert_uni}, the solution to the relaxation and the active set are correctly computed at Step~\ref{step:cert_node_uni} in Algorithm~\ref{alg:cert_uni}. Both procedures begin by evaluating the dominance cut at Step~1.  
In \textsc{cutCert}, the dominance cut is evaluated over $ \Theta $, leading to the partitioning of $ \Theta $ into two subsets:  
\begin{equation} \label{eq:part_cut_uni1}
\tilde{\Theta} = \{\theta \in \Theta \mid \barbelow{J}(\theta) \geq \bar{J}(\theta)\},
\end{equation}
and the remaining region,  
\begin{equation} \label{eq:part_cut_uni2}
\Theta = \{\theta \in \Theta \mid \barbelow{J}(\theta) < \bar{J}(\theta) \}.
\end{equation}

For $ \tilde{\Theta} $, the dominance cut is applied at Step~\ref{step:dom_cut_cert_uni} in \textsc{cutCert}, corresponding to Step~\ref{step:dominc_on_uni} in \textsc{cut} for any fixed $ \theta \in \tilde{\Theta} $. In \textsc{cutCert}, the B\&B tree is cut at this node for all $ \theta \in \tilde{\Theta} $, and this subset is excluded from further branching and is pushed to $\mathcal{S}$ without updating $\mathcal{T}$ or $\bar{J}(\theta)$ (Step~\ref{step:dom_cut_cert_uni}). Analogously, in \textsc{cut}, further branching is excluded at Step~\ref{step:dominc_on2_uni}, returning $ \mathcal{T} $ and $ \bar{J} $ without updates. This ensures identical results  $ \forall \theta \in \tilde{\Theta} $.  

For the remaining region $ \Theta $, the integer-feasibility cut is checked at Step~\ref{step:int_cut_cert_uni} in \textsc{cutCert}, corresponding to Step~\ref{step:int_cut_on_uni} in \textsc{cut} for any fixed $ \theta \in \Theta $. Given Assumption~\ref{ass:cert_uni}, the correctly computed active set (which is fixed on $\Theta$) leads to the identical invocation of the integer-feasibility cut by \textsc{cutCert} and \textsc{cut} for $\theta$. If the cut holds, then $ \bar{J}(\theta) $ is updated to $ \barbelow{J}(\theta) $ at Step~\ref{step:int_cut_cert2_uni}, identically to Step~\ref{step:intfeas_on_uni} in \textsc{cut} where $ \bar{J} $ is updated to $ \barbelow{J} $. The new tuple is pushed to $ \mathcal{S} $ without modifying $ \mathcal{T} $, and both B\&B trees are cut at this node in \textsc{cutCert} and \textsc{cut}, ensuring identical results for all $ \theta \in \Theta $.  

If no cut condition holds for $ \Theta $, branching is initiated by selecting the branching index using \textsc{branchIndCert} at Step~\ref{step:branch1_cert_uni} for $ \Theta $ in \textsc{cutCert} and using \textsc{branchInd} at Step~\ref{step:branch_on_uni} for any fixed $ \theta \in \Theta $ in \textsc{cut}. By Lemma~\ref{lem:branch_uni}, \textsc{branchIndCert} partitions $ \Theta $ into $\{\Theta^k\}_k$ such that the selected index $ k $ for $ \Theta^k \ni \theta $ coincides with that chosen by \textsc{branchInd} for $ \theta $.  
We now focus on $ \Theta^k \ni \theta $. New nodes created by fixing binary variable $ k $ are stored in the node list using \textsc{sortCert} at Step~\ref{step:append_cert_uni} in \textsc{cutCert} within $ \Theta^k$, and using \textsc{sort} at Step~\ref{step:append_on_uni} in \textsc{cut} for $\theta$. By Lemma~\ref{lem:append_uni}, \textsc{sortCert} further partitions $ \Theta^k $ into $\{\Theta^j\}_j$, ensuring that the node ordering in $ \mathcal{T}^j $ for $ \Theta^j \ni \theta $ remains identical to that by \textsc{sort} for $\theta$.  That is, the node lists returned by both procedures coincide for $\theta$. 
Since $ \theta $ is arbitrary, this holds for all $\theta \in \Theta$, completing the proof.  \hfill \textbf{\(\blacksquare\)} 

\subsubsection{Proof of Lemma~\ref{lem:LB_equiv_uni}} \textit{[Equivalence of \textsc{LBCert} and \textsc{LB}]}
Using the same reasoning as in the proof of Theorem~\ref{thrm:node_seq_uni}, we analyze an iteration of \textsc{LBCert} for an arbitrary \( \theta \in \Theta   \), starting at Step~\ref{step:LB_pop_cert}, and the corresponding iteration in \textsc{LB} for \( \theta \), starting at Step~\ref{step:LB_Nmax_on}. 
At Step~\ref{step:LB_cert_uni}, the sub-MILP is certified for all \( \theta \in \Theta \) in \textsc{LBCert}, whereas in \textsc{LB}, it is solved online for  \( \theta \) at Step~\ref{step:LB_on_solve}. Since Algorithm~\ref{alg:cert_uni} (\textsc{B\&BCert}) is used for certification and Algorithm~\ref{alg:BnB_on_uni} (\textsc{B\&B}) serves as the online solver, the correctness of their pointwise equivalence follows directly from Theorems~\ref{thrm:node_seq_uni} and~\ref{thrm:complexity_meas_uni}. Consequently, the accumulated complexity measure \( \hat{\kappa}^{h_j}(\theta) \) in \textsc{LBCert} for \( \Theta^j \ni \theta \) (updated at Steps~\ref{step:LB_cert_F} and~\ref{step:LB_cert_S}) is identical to the accumulated complexity measure \( \hat{\kappa}^{h*} \) in \textsc{LB} (updated at Step~\ref{step:LB_on_kappa}) for \( \theta \). 

\textsc{LBCert} then proceeds to one of the two following updates for \( \Theta^j \ni \theta \):
(1) If a new improved solution \( \hat{x}^j(\theta) \) is found, it is stored in the output list \( \mathcal{F}_h \) (Step~\ref{step:LB_cert_F}).
(2) If no better solution is found, $r_n$ is updated using the \textsc{neighborSize} procedure (Step~\ref{step:LB_kneigh_cert}), and the results are pushed to \( \mathcal{S}_h \) (Step~\ref{step:LB_cert_S}). 
These steps are analogous to those in \textsc{LB}. Specifically, for \( \theta \), either:
(1) a new solution is found, and the algorithm terminates (Step~\ref{step:LB_on_sol}), or
(2) $r_n$ is updated using the same \textsc{neighborSize} procedure (Step~\ref{step:LB_kneigh_on}). 
Thus, the results of both algorithms coincide at the end of the iteration for \( \theta \). The same reasoning extends to any \( \theta \in \Theta \), completing the proof. \hfill \textbf{\(\blacksquare\)}

\subsubsection{Proof of Lemma~\ref{lem:RINS_equiv_uni} \textit{[Equivalence of \textsc{RINSCert} and \textsc{RINS}]}}
    The one-to-one correspondence between Steps~\ref{step:rins_on_B_heu}--\ref{step:rins_cert_uni} in \textsc{RINSCert} and \textsc{RINS} 
    follows directly from their structural similarity. At Step~\ref{step:rins_cert_uni}, the sub-MILP is certified for all \( \theta \in \Theta \) in \textsc{RINSCert}, 
    whereas in \textsc{RINS}, 
    it is solved online for the fixed \( \theta \) at Step~\ref{step:rins_on_solv_heu}. Since Algorithms~\ref{alg:cert_uni} and~\ref{alg:BnB_on_uni} are used, respectively,  at these steps,
the desired results follow directly from Theorems~\ref{thrm:node_seq_uni} and~\ref{thrm:complexity_meas_uni}. \hfill \textbf{\(\blacksquare\)}

		\bibliography{IEEEabrv,references} 
			
\end{document}